\newcommand{\stf}[1]{\textcolor{black}{#1}}
\newcounter{func}
\newcommand{\funref}[1]{\hyperref[#1]{f_{\ref*{#1}}}} 
\newcounter{con}
\newcommand{\conref}[1]{\hyperref[#1]{c_{\ref*{#1}}}} 
\newtheorem{observation}{Observation}
\newtheorem{proposition}{Proposition}
\newtheorem{lemma}{Lemma}
\newtheorem{corollary}{Corollary}
\newtheorem{claim}{Claim}
\newtheorem{theorem}{Theorem}
\newtheorem{definition}{Definition}
\definecolor{MidnightBlack}{rgb}{0.1,0.1,.34}
\definecolor{MidnightBlue}{rgb}{0.1,0.1,0.44}
\definecolor{Black}{rgb}{0,0, 0}
\definecolor{Blue}{rgb}{0, 0 ,1}
\definecolor{Red}{rgb}{1, 0 ,0}
\definecolor{White}{rgb}{1, 1, 1}
\definecolor{Grey}{rgb}{.6, .6, .6}
\definecolor{Mygreen}{rgb}{.0, .7, .0}
\definecolor{Yellow}{rgb}{.55,.55,0}
\definecolor{Mustard}{rgb}{1.0, 0.86, 0.35}
\definecolor{applegreen}{rgb}{0.55, 0.71, 0.0}
\definecolor{darkturquoise}{rgb}{0.0, 0.81, 0.82}
\definecolor{celestialblue}{rgb}{0.29, 0.59, 0.82}
\definecolor{green_yellow}{rgb}{0.68, 1.0, 0.18}
\definecolor{crimsonglory}{rgb}{0.75, 0.0, 0.2}
\definecolor{darkmagenta}{rgb}{0.30, 0.0, 0.30}
\definecolor{internationalorange}{rgb}{1.0, 0.31, 0.0}
\definecolor{darkorange}{rgb}{1.0, 0.55, 0.0}
\definecolor{ao}{rgb}{0.0, 0.5, 0.0}
\definecolor{awesome}{rgb}{1.0, 0.13, 0.32}
\definecolor{dark-red}{rgb}{0.4,0.15,0.15}
\definecolor{dark-blue}{rgb}{0.15,0.15,0.4}
\definecolor{medium-blue}{rgb}{0,0,0.5}
\definecolor{gray}{rgb}{0.5,0.5,0.5}
\definecolor{color-Ig}{rgb}{0.15,0.7,0.15}
\newcommand{\hh}{\end{document}}
    \pgfarrowshullpoint{\pgfarrowlength}{0pt}
\else\pgfsetlinewidth{+\pgfarrowlinewidth}\fi
    \pgfarrowshullpoint{\pgfarrowlength}{0pt}
    \pgfarrowshullpoint{\pgfarrowinset}{0pt}
\else\pgfsetlinewidth{+\pgfarrowlinewidth}\fi
\newdimen\ipeminipagewidth
\tikzstyle{ipe import} = [
\tikzset{
  rgb color/.code args={#1=#2}{%
    \definecolor{tempcolor-#1}{rgb}{#2}%
    \tikzset{#1=tempcolor-#1}%
  },
}
\tikzset{red node/.style={draw=red, circle, fill = red, minimum size = 4pt, inner sep = 0pt}}
\tikzset{yellow node/.style={draw=yellow, circle, fill = yellow, minimum size = 4pt, inner sep = 0pt}}
\tikzset{blue node/.style={draw=celestialblue, circle, fill =celestialblue, minimum size = 4pt, inner sep = 0pt}}
\tikzset{triangle/.style = { regular polygon, regular polygon sides=3, rotate=180}}
\tikzset{small red/.style={draw=red, triangle, fill = red, minimum size = 2pt, inner sep = 0pt}}
\tikzset{black node/.style={draw, circle, fill = black, minimum size = 4pt, inner sep = 0pt}}
\tikzset{small black node/.style={draw, circle, fill = black, minimum size = 3pt, inner sep = 0pt}}
\tikzset{model node/.style={draw=celestialblue, circle, fill = celestialblue, minimum size = 5pt, inner sep = 0pt}}
\tikzset{model node small/.style={draw=celestialblue, circle, fill = celestialblue, minimum size = 3pt, inner sep = 0pt}}
\tikzset{rep node/.style={draw=red, circle, fill = red, minimum size = 3pt, inner sep = 0pt}}
\tikzset{track node 1/.style={draw, circle, fill = black, minimum size = 2pt, inner sep = 0pt}}
\tikzset{track node 2/.style={draw=black!30!white, circle, fill = black!30!white, minimum size = 2pt, inner sep = 0pt}}
\tikzset{track node 3/.style={draw=black!10!white, circle, fill = black!10!white, minimum size = 2pt, inner sep = 0pt}}
\tikzset{terminal/.style={circle, draw=black, fill=black!50,
                        inner sep=0pt, minimum width=4pt}}
\tikzset{root terminal/.style={circle,thick, draw=black, fill=orange!90!yellow,
                        inner sep=1pt, minimum width=6pt}}
\tikzset{simple/.style={circle, draw=black, fill=black,
                        inner sep=0pt, minimum width=4pt}}
\tikzset{treenode/.style={circle, draw=black, fill=black,
                        inner sep=0pt, minimum width=4pt}}
\tikzset{
diagonal fill/.style 2 args={fill=#2, path picture={
\fill[#1, sharp corners] (path picture bounding box.south west) -|
                         (path picture bounding box.north east) -- cycle;}},
reversed diagonal fill/.style 2 args={fill=#2, path picture={
\fill[#1, sharp corners] (path picture bounding box.north west) |- 
                         (path picture bounding box.south east) -- cycle;}}
}
\newcommand{\yes}{{\sf yes}}
\newcommand{\remove}[1]{}
\newcommand{\fpt}{{\sf fpt}\xspace}
\renewcommand{\O}{\mathcal{O}}
\newcommand{\J}{\mathcal{J}}
\newcommand{\x}{\chi}
\newcommand{\eqdef}{\stackrel{{\scriptsize\rm def}}{=}}
\newcommand{\N}{\ensuremath{\mathbb{N}\xspace}}
\renewcommand{\O}{\ensuremath{\mathcal{O}}\xspace}
\newcommand{\ssigma}{\text{\tt {\Sigma}}}
\newcommand{\msynt}{\text{\tt {m}}}
\newcommand{\eqsynt}{\text{\tt {=}}}
\newcommand{\NP}{\ensuremath{\mathsf{NP}}\xspace}
\newcommand{\mbprivate}[1]{}
\newcommand{\IPP}{{\sc I2PP}\xspace}
\newcommand{\IPHS}{{\sc I2PHS}\xspace}
\newcommand{\TPT}{{\sc TPT}\xspace}
\newcommand{\FVST}{{\sc FVST}\xspace}
\newcommand{\I}{\mathcal{I}} 
\renewcommand{\P}{\mathcal{P}} 
\newcommand{\T}{{T}} 
\newcommand{\npair}{\psi}
\newcommand{\Q}{\mathcal{Q}} 
\newcommand{\BI}{\mathcal{BI}} 
\newcommand{\subsetint}{\sqsubseteq}
\newcommand{\subsetintstrict}{\sqsubset}
\newcommand{\nsubsetint}{\nsubseteq}
\newcommand{\unionint}{\sqcup}
\newcommand{\interint}{\sqcap}
\newcommand{\defparproblem}[4]{\par
 \vspace{3mm}
\noindent\fbox{
 \begin{minipage}{0.96\textwidth}
 \begin{tabular*}{\textwidth}{@{\extracolsep{\fill}}lr} #1 & {\bf{Parameter:}} #3 \vspace{1mm} \\ \end{tabular*}
 {\textbf{Input:}} #2
	\vspace{1mm}\\%
 {\textbf{Question:}} #4
 \end{minipage}
 }
 \vspace{3mm}
\par
}
\theoremstyle{plain}
\newcommand{\claimqed}{$\square$}
\begin{document}

\title{\Large Kernelization for Graph Packing and Hitting Problems\\ via Rainbow Matching
}

\author{\bigskip\large 
Stéphane Bessy\thanks{LIRMM, Univ de Montpellier, CNRS, Montpellier, France.}~$^{,}$\thanks{Supported by the ANR project DIGRAPHS (ANR-19-CE48-0013).} \and
Marin Bougeret\footnotemark[1]\and 
Dimitrios M. Thilikos\footnotemark[1]~$^{,}$\thanks{Supported by the ANR projects ESIGMA (ANR-17-CE23-0010) and the French-German Collaboration ANR/DFG Project UTMA (ANR-20-CE92-0027).}\and\large 
\and\large 
Sebastian Wiederrecht\thanks{Discrete Mathematics Group, Institute for Basic Science, Daejeon, South Korea.}~$^{,}$\thanks{The research of Sebastian Wiederrecht was supported by the ANR project ESIGMA (ANR-17-CE23-0010) and the Institute for Basic Science (IBS-R029-C1).}}

\date{}

\pagenumbering{Alph}
\maketitle

\begin{abstract}
\noindent
\stf{We introduce a new kernelization tool, called {\sl rainbow matching
technique}, that is appropriate for the design of polynomial kernels
for packing problems {and their hitting counterparts}.  Our technique capitalizes on the powerful
combinatorial results of [{\sl Graf, Harris, Haxell, SODA 2021}].  We
apply the rainbow matching technique on four (di)graph packing or
hitting problems, namely the \textsc{Triangle-Packing in Tournament}
problem (\TPT), where we ask for a packing of $k$ directed triangles
in a tournament, \textsc{Directed Feedback Vertex Set in Tournament}
problem (\FVST), where we ask for a (hitting) set of at most $k$
vertices which intersects all triangles of a tournament,
the \textsc{Induced 2-Path-Packing} (\IPP) where we ask for a packing
of $k$ induced paths of length two in a graph and \textsc{Induced
2-Path Hitting Set} problem (\IPHS), where we ask for a (hitting) set
of at most $k$ vertices which intersects all induced paths of length
two in a graph.  The existence of a sub-quadratic kernels for these
problems was proven for the first time in [{\sl Fomin, Le, Lokshtanov,
Saurabh, Thomassé, Zehavi.  ACM Trans. Algorithms, 2019}], where they
gave a kernel of $\O(k^{3/2})$ vertices for the two first problems and
$\O(k^{5/3})$ vertices for the two last. In the same paper it was
questioned whether these bounds can be (optimally) improved to linear
ones.  Motivated by this question, we apply the rainbow matching
technique and prove that \TPT and \FVST admit (almost linear) kernels
of $k^{1+\frac{\O(1)}{\sqrt{\log{k}}}}$ vertices and that \IPP
and \IPHS admit kernels of $\O(k)$ vertices.}
\end{abstract}

\noindent{\bf Keywords:} Kernelization, Parameterized algorithms, Rainbow matching, Graph packing problems, Graph covering problems, Graph hitting problems.

{\sf 
%
%
%
%
}

\thispagestyle{empty}

\newpage\thispagestyle{empty}

\tableofcontents\thispagestyle{empty}

\newpage
\pagenumbering{arabic}
\newpage

\setcounter{page}{1}

\section{Introduction}

A \emph{parameterized problem} $\Pi$ is a subset of
$\Sigma^{*}\times \mathbb{N}$ for some finite alphabet $\Sigma.$ Given
an instance $(x,k)$ of parameterized problem $\Pi$ we typically refer
to $|x|$ as the {\em size} of the problem and to $k$ as the {\em
parameter} of the problem and the general question of parameterized
computation is whether $\Pi$ admits an algorithm able to decide, given
an instance $(x,k)\in \Sigma^{*}\times \mathbb{N},$ whether $(x,k)\in
Π$ in $f(k)\cdot |x|^{\O(1)}$ time. When this is indeed possible, then
we say that $Π$ is {\sl Fixed Parameter Tractable} or, in short {\sf
FPT}.  Parameterized computation was introduced by Downey and Fellows
in their pioneering work
in~\cite{DowneyF95-I,DowneyF95-II,DowneyF93,AbrahamsonDF95fixed,DowneyF93fixe}
and currently constitutes a fully developed discipline of Theoretical
Computer Science
(see \cite{CyganFKLMPPS15para,FlumGrohebook,Niedermeier-book,DowneyF13fund}
for textbooks).

\medskip\noindent{\bf Kernelization algorithms.} A particularly vibrant field of parameterized computation is {\sl kernelization}. A {\sl kernelization algorithm} for a parameterized problem   $\Pi\subseteq \Sigma^{*}\times \mathbb{N}$ is a polynomial algorithm  able to reduce every instance $(x,k)\in \Sigma^{*}\times \mathbb{N}$ 
to an {\sl equivalent one} whose size depends {\sl exclusively} on the
parameter $k.$ If this size is bounded by a (polynomial) function
$f(k),$ then we say that $Π$ admits a (polynomial) kernel of size
$f(k).$ The design of kernelization algorithms for parameterized
problems has been a prominent topic of parameterized computation,
mainly because it can be seen as a way to formalize preprocessing:
when some parameterization of an {\sf NP}-hard problem admits a kernel
of size $f(k),$ then we may solve it by first applying, as a
preprocessing step, the corresponding kernelization algorithm and then
apply brute force techniques on a problem instance where the size
$|x|$ of the problem has been radically reduced. Clearly, for small
values of $k,$ this approach becomes particularly promising,
especially when the problem in question admits a polynomial
kernel. Unfortunately, not all parameterized problems are amenable to
polynomial kernels and an extensive theory of kernelization has been
developed so to either provide algorithmic techniques for the
derivation of polynomial kernels (see
e.g., \cite{FominS14kerne,GuoN07invit,Kratsch14recen,LokshtanovMS12kerne})
or to develop complexity-theoretic lower bound tools for
kernelization \cite{BodlaenderDFH09onpr,Drucker15newli,FortnowS11infea,HermelinKSWW15acomp,DellM12kerne,DellM14satis,HermelinW12weakc}
(see \cite{FominLSZ19kern} for a dedicated textbook).

\medskip\noindent{\bf Greedy localization.}  A wide family of
parameterized problems that have extensively studied from the
kernelization point of view are {\sl (di)graph packing problems}. Such
problems are defined on graphs or digraphs and, typically, the
question is whether an input (di)graph $G$ contains some collection of
$k$ pairwise disjoint copies of some fixed (induced) sub(di)graph $H.$
A general approach for such problem is the {\sl greedy localization
  technique} (see \cite{DehneFRS04greed,SloperT08anove}). This
technique consists in finding, using some greedy approach, a maximal
collection $\mathcal{C}$ of pairwise disjoint copies of $H$ in $G.$ If
$\mathcal{C}$ has already at least $k$ elements, we may safely output
a positive answer to the problem.  If not, then we know that the $|H|
(k-1)=\O(k)$ vertices of the (di)graphs in $\mathcal{C}$ should {\sl
  cover} every possible solution of the problem. Based on this last
covering property, the challenge is to design a polynomial time
procedure that may discard all but a polynomial number of vertices
from $G$ so that the remaining (di)graph is an equivalent
instance. This procedure varies depending on the definition of the
problem in question.  Moreover, in case a polynomial kernel exists, a
particular challenge towards deriving kernels of low polynomial size
is to
\begin{quote}
{\sl maximize the set of discarded vertices so that the size of the
resulting (di) graph, that is the output of the kernelization  algorithm, is bounded by a low-polynomial function.}
\end{quote}

\medskip\noindent{\bf Rainbow matching technique.}  In this paper, we
propose a framework for tackling the above challenge that we call {\sl
  the rainbow matching technique}.  Our technique capitalizes on the
deep combinatorial results of Graf and Haxell in \cite{Graf2020} (see
also \cite{GrafHH21algor}). In fact, in \autoref{@alternatives}, we
derive the following ``rainbow-matching'' outcome of the main result
of \cite{Graf2020} asserting that {\sl there exists a polynomial time
  algorithm that, given an edge-multicolored graph $G$ (by $p$
  colors), either outputs a matching of $G$ carrying {\sl all} $p$
  colors or outputs a non-empty set of colors
  $C\subseteq \{1,\ldots,p\}$ and a vertex set $X$ of size $\O(|C|)$
  that intersects all edges colored by the colors of $C$}
(\autoref{cor_rainbowvc}).

For the purposes of our technique, we build an auxiliary graph based
on the maximal solution $\mathcal{C}$ yielded by the greedy
localization routine. We then consider a multi-coloring of the
auxiliary graph based on the maximal solution $\mathcal{C}$. Depending
of the outcome of the above algorithm, we either obtain an equivalent
instance or we give a way to reorganize the parts of the (di)graph
that are not in $\mathcal{C}$ in {\em buckets} in a way that will
permit a recursive application of the above procedure until an
equivalent instance is created. We provide a generic description of
our technique in \autoref{@absentminded}.

\stf{The rainbow matching technique seems to naturally apply to
  packing problems, where a maximum disjoint collection of a specific
  (di)graph $H$ is seeking in the input graph. We illustrate the
  technique on two packing problems. As a side effect, we notice that
  the technique also provides equivalent kernel for the corresponding
  dual problems of the considered ones. These hitting problems consist
  in finding a minimum set which intersects all the copies of $H$ in
  the input graph. In all, we apply the rainbow matching technique to
  four problems that we describe in the following.}

\medskip\noindent{\bf Packing induced paths of length two.}  The first
problem where our technique is applied is the following.

\defparproblem{{\sc Induced 2-path-Packing} (\IPP)}{$(G,k)$ where $G$
  is a graph and $k \in \N$}{$k.$}{Does $G$ contain $k$ pairwise
  disjoint induced paths of length two?}

As in the case of \TPT, the above problem admits a kernel on
$\O(k^{2})$ vertices because of the results of Abu Khzam in
\cite{Abu_Khzam10animp,Abu_Khzam09aquad}.  The first time a
sub-quadratic kernel was given for \IPP\ was the one on $\O(k^{5/3})$
vertices by Fomin, Le, Lokshtanov, Saurabh, Thomassé, and Zehavi in
\cite{FominLLSTZ19subqu}.  Again, an open question that appeared in
\cite{FominLLSTZ19subqu} is whether this bound can be improved to a
linear one. As a second application of our technique, we prove that
this is indeed the case.

\medskip\noindent
\stf{{\bf Induced 2-paths Hitting Set.}  Given a grapĥ
$G$, an {\it induced 2-path hitting set} of $G$ is a set $X$ which
intersect all the induced 2-paths of $G$. In other words,
$G\setminus X$ does not containany induced 2-path. Then, the dual
version of the previous problem, on which we apply the rainbow
matching technique, is the following.}

\defparproblem{{\sc Induced 2-paths Hitting Set} (\IPHS)}{\stf{$(G,k)$ where
  $G$ is a graph and $k \in \N$}}{$k.$}{\stf{Is there an induced
  2-paths hitting set of $G$ of size at least $k$?}}

\stf{Here again, this  problem admits a kernel on
$\O(k^{2})$ vertices because of the results of Abu Khzam in
\cite{Abu_Khzam10akadhs}.  And the first
sub-quadratic kernel for \IPHS\ was on $\O(k^{5/3})$ vertices given by
Fomin, Le, Lokshtanov, Saurabh, Thomassé, and Zehavi in
\cite{FominLLSTZ19subqu}. Here also, using our technique, we prove that \IPHS admits a linear kernel.
}

\medskip\noindent{\bf Packing directed triangles in tournament.}  The
third problem we consider is the following .
\defparproblem{\textsc{Triangle-Packing in Tournament}\
  (\TPT)}{$(T,k)$ where $T$ is a tournament and $k \in \N$}{$k$}{Does
  $T$ contain $k$ pairwise disjoint directed triangles?}
\noindent {Recall that a directed graph $T=(V,E)$ is a  {\em tournament}, if for every distinct $x,y\in V,$ either $(x,y)\in E$ or  $(y,x)\in E$ holds (but not both).}

The \NP-completeness of \TPT follows from the results of
\cite{CharbitTY07themin}. Moreover, given that certain patterns are
excluded from its inputs, it can be solved in polynomial time
\cite{CaiDZ02aminm}. For (non) approximability results on the \TPT
problem
see~\cite{GuruswamiRCCW98theve,Cygan13impr,MnichWV16aappr,BessyBT17trian}.
Notice that \TPT can directly be reduced to the {\sc 3-Hitting Set}
problem.  For this problem, Abu Khzam gave, in
\cite{Abu_Khzam10animp,Abu_Khzam09aquad}, a kernel on $\O(k^2)$
vertices obtained by only removing vertices. This directly implies
that \TPT admits a kernel on $\O(k^2)$ vertices.  On the negative
side, Bessy, Bougeret, and Thiebaut proved in \cite{BessyBT17trian}
that \TPT\ does not admit a kernel of (total bit) size $\O(k^{2-ε}),$
unless \NP$\subseteq${\sf co-NP}/{\sf Poly}.  They also proved that
\TPT\ admits a kernel of $O(z)$ vertices, when its input instances are
accompanied with a {\sl feedback arc set}\footnote{Given a tournament
  $T=(V,E)$ we say that en edge set $F\subseteq E$ is a {\em feedback
    arc set} of $T$ if the removal if $F$ from $T$ results to an
  acyclic digraph.}  of $T$ of size $z$ and that \TPT\ restricted to
sparse tournaments\footnote{A tournament $T=(V,E)$ is {\em sparse} if
  it contains a feedback arc set that is a matching.} admits a kernel
of $O(k)$ vertices (i.e., of total bit-size $O(k\log{k})$). Towards
breaking the quadratic bound in the general case, Fomin, Le,
Lokshtanov, Saurabh, Thomassé, and Zehavi gave in
\cite{FominLLSTZ19subqu} a kernel for \TPT\ on $\O(k^{3/2})$ vertices.
The open question of \cite{FominLLSTZ19subqu} is whether a kernel on
$O(k)$ vertices exists for the general \TPT.
In this paper, we use the rainbow matching   technique in order to give a kernel for  \TPT\
on $k^{1+\frac{\O(1)}{\sqrt{\log{k}}}}$ vertices.

\medskip\noindent \stf{{\bf Feedback vertex set in tournament.}}
  \stf{The last problem on which we apply the rainbow matching
    technique is the dual version of the previous one and is described
    below. A triangle hitting set in a tournament $T$ is a set $X$
    intersecting all the triangle of $T$. Equivalently, $T\setminus X$
    does not contain any triangle and so, by a classical argument, is
    acyclic. Consequently, such a set $X$ is called a \emph{feedback
      vertex set} of $T$.}

  \defparproblem{\textsc{Feedback Vertex Set in Tournament}\
    (\FVST)}{\stf{$(T,k)$ where $T$ is a tournament and
    $k \in \N$}}{$k$}{\stf{Does $T$ contain a feedback vertex set of
    size at most $k$?}}
\noindent {}

\stf{Here again, this  problem admits a kernel on
$\O(k^{2})$ vertices because of the results of Abu Khzam in
\cite{Abu_Khzam10akadhs}.  And the first
sub-quadratic kernel for \FVST\ was on $\O(k^{3/2})$ vertices given by
Fomin, Le, Lokshtanov, Saurabh, Thomassé, and Zehavi in
\cite{FominLLSTZ19subqu}.
Like for \TPT, our technique leads us to obtain a kernel on
$k^{1+\frac{\O(1)}{\sqrt{\log{k}}}}$ vertices for \FVST.}

\medskip\noindent{\bf Organization of the paper.}

\stf{The definitions of
the basic concepts that we use are given
in \autoref{@specialisation}. In the same section we present the
combinatorial base of the rainbow coloring lemma
(\autoref{cor_rainbowvc}) as well as the proof of how this is derived
by the results of \cite{Graf2020}. In
\autoref{@absentminded}, we proceed with a
generic description of our technique. An overview is presented in
\autoref{se_ssssssssdfgdfgdfghfdhfdhoverview}, while the main
combinatorial assumptions and invariants are presented in
\autoref{se_ssssssssdfgdfgsssdfghfdhfdhoverview} and
\autoref{@maravillados}. The description of
the main algorithmic routine of the technique is given in
\autoref{@gillyflowers}. We then present specialization of the
technique for \IPP in \autoref{@construction}, for \IPHS in
\autoref{@linearKernelIPHS},  for \TPT in
\autoref{@industriousness} and for \FVST in \autoref{@linearKernelFVST}.
We chose to present the almost linear kernels for
\TPT and \FVST after the linear ones for \IPP and \IPHS, as the application of the
technique for the tournament problems is more technical. We conclude
the paper in
\autoref{@wissenschaftlichen} with some remarks and open problems.
}

\section{Definitions}
\label{@specialisation}
We denote by $\Bbb{N}$ the set of non-negative integers and by
$\mathbb{R}^+$ the set of all non-negative reals.  Given two integers
$p$ and $q,$ the set $[p,q]$ refers to the set of every integer $r$
such that $p ≤ r ≤ q.$ For an integer $p≥ 1,$ we set $[p]=[1,p],$
$[c]_0 = [c] \cup \{0\},$ and $\Bbb{N}_{≥ p}=\Bbb{N}\setminus
[0,p-1].$ 

For a set $S,$ we denote by $2^{S}$ the set of all subsets of $S$ and,
given an integer $r∈[|S|],$ we denote by $\binom{S}{r}$ the set of all
subsets of $S$ of size $r.$ {Given two sets $A,B$ and a function $f:
  A\to B,$ for a subset $X\subseteq A$ we use $f(X)$ to denote the set
  $\{f(x)\mid x∈ X\}.$}


Finally, for every set of subsets $X_i,$ $i \in [c],$ and any subset
of indexes $I \subseteq [c],$ we denote $X_I = \bigcup_{i \in
  I}X_i$. And whenever we refer to \emph{a partition of a set $X$ into
  $c$ sets} we refer to an {(ordered)} set
$\mathcal{X}=\{X_{\ell}\mid \ell\in[c]\}$ where $V(\mathcal{X})=X$ and
where we allow that some of the $X_{\ell}$'s is an empty set.

 \subsection{Basic concepts}
 
\paragraph{Parameterized algorithms and kernels.}
\label{subsec_paraalgoandkern}
A \emph{parameterized problem} $\Pi$ is a subset of $\Sigma^{*}\times
\mathbb{N}$ for some finite alphabet $\Sigma.$  A parameterized
problem $\Pi\subseteq \Sigma^{*}\times \mathbb{N}$ is \emph{fixed
parameter tractable} (in short \fpt) if there is an algorithm that, given an instance
$(x,k)\in\Sigma^{*}\times \mathbb{N},$ decides whether $(x,k)\in\Pi$
or not in time $f(k)\cdot p(|x|),$ where $f:\N\to\N$ is some function
and $p\colon\N\to\N$ is a polynomial in the input size.  The notion of
\emph{kernelization} is formally defined as follows.

\begin{definition}[\textrm{Kernelization}]
Let $\Pi\subseteq \Sigma^{*}\times \mathbb{N}$ be a parameterized
problem and $g$ be a computable function.  We say that $\Pi$
\emph{admits a kernel of size $g$} if there exists an algorithm
$\mathcal{K},$ called \emph{kernelization algorithm}, or, in short, a
     {\emph{kernel}} that given $(x,k)\in \Sigma^{*}\times
     \mathbb{Z}^+,$ outputs, in time polynomial in $|x|+k,$ a pair
     $(x',k')\in \Sigma^{*}\times \mathbb{Z}^+$ such that
\begin{itemize}
\item $(x,k)\in \Pi$ 
if and only if $(x',k')\in \Pi,$ and 
\item $\max\{|x'|, k' \}\leq g(k).$
 \end{itemize}
 When $g(k)=k^{\O(1)}$ or $g(k)=\O(k)$ then we say that $\Pi$
 \emph{admits a polynomial} or {\emph{linear kernel}} respectively.
 We refer to $g(k)$ as \emph{the size of the kernel produced by} the
 kernelization algorithm.
\end{definition}

When dealing with graphs algorithmic problems, the produced kernels
are graphs, and we ofen refer to their size by specifying their number
of vertices. For instance, we will say that a graph problem will admit
a kernel with $\O(k)$ vertices (implicitely implying that the total
size of the kernel is $\O(k^2)$).

 \paragraph{Graphs, digraphs and tournaments.}

All graphs in this paper are finite.  We use the term {\em graph}
when we refer to an undirected graph without
loops (ie. an edge with a unique endpoint) neither parallel edges
(ie. distinct edges with the same endpoints).
Also we use the term {\em multigraph} when  we allow for loops and parallel
edges. Directed graphs are called
\emph{digraphs}.

Given a (multi) (di)graph we denote by $V(G)$ and $E(G)$ the set of
its vertices and edges respectively.  Given a (multi) (di)graph $G$
and a set $X\subseteq V(G),$ we denote by $G[X]$ the sub(di)graph of
$G$ induced by $X.$ Similarly, if $X\subseteq E(G),$ we use $G[X]$ in
order to denote the graph $(V_{X},X),$ where $V_{X}$ is the set
containing all the endpoints of the edges in $X.$ {Given a
  graph (resp. digraph) $G$ and two disjoint subsets of vertices $A$
  and $B,$ we say that $e \in E(G)$ is an edge (resp. arc) between $A$
  and $B$ iff $|e \cap A|=|e \cap B|=1.$}   In a digraph, whenever $(x,y)\in
E(G),$ we say that $x$ \emph{dominates} $y.$ We also say that a
digraph $D$ is a \emph{tournament} if, for every
$\{x,y\}\in{V(D)\choose 2},$ either $(x,y)\in E(D)$ or $(y,x)\in
E(D),$ but not both.  If $\mathcal{S}=\{ S_1,\dots, S_{\ell} \}$ is a
collection of vertex sets or subgraphs of some (multi) (di)graph, we
denote by $V(\mathcal{S})$ the set $\bigcup_{i\in[\ell]}S_i$ or
$\bigcup_{i\in[\ell]}V(S_i)$ respectively.  Given a (multi) (di)graph
$G$ and $A \subseteq V(G),$ we denote by $G \setminus A$ the (multi)
(di)graph $G[V(G) \setminus A]$ obtained after the deletion of all
vertices of $A$.

\subsection{Tools about  rainbow matchings}
\label{@alternatives}

\paragraph{Multigraph colourings.}
Let ${G}$ be a multigraph. Recall that for $e\in E(G)$ we say that it
is a \emph{loop} of $G$ if $|e|=1,$ otherwise we say that it is an
\emph{ordinary edge}.  A \emph{$p$-multiedge coloring} of $G$ is a
surjective function $χ\colon E(G)\rightarrow [p]$ that associates to each edge of 
$G$ a color in $[p]$ such that no two parallel edges of $G$ receive
the same color.  We call the pair $(G,χ)$ a \emph{ $p$-edge colored
multigraph} and, given a set of colors $C\subseteq [p]$, 
we define $(G,χ)[C]=G[χ^{-1}(C)]$.

{Given a $p$-edge colored multigraph $(G,χ)$, a \emph{rainbow matching} of $(G,χ)$ is a set}
$X\subseteq E(G)$ such that
\begin{itemize}
\item $X$ is a matching of $G,$ i.e., every two distinct edges of $X$
  are vertex-disjoint and
\item 
{$|X|=p$, and $χ(X)=[p]$.}
\end{itemize}
 For a subset $C\subseteq [p]$ of colors, recall that a set $X$ of vertices of $G$
is a vertex cover of $(G,χ)[C]$ iff  for every $e\in E(G)$ where
 $χ(e)\in C,$ it holds that $e\cap X\neq \emptyset.$  
 We denote by
 $\textsf{vc}_{C}(G,χ)$ the minimum size of a vertex cover of 
 $(G,χ)[C].$

The purpose of this section is to prove the following lemma.

\begin{lemma}
\label{lemma_rainbowvc}
Let $(G,χ)$ be a $p$-edge colored multigraph and $ε \in (0,1).$
If for every subset of colors $C\subseteq [p]$ we have
$\textsf{vc}_{C}(G,χ)> (4+ε)(|C|-1),$ then $G$ contains a
rainbow matching.  Moreover there is an algorithm that, with input
$(G,χ),$ either outputs a rainbow matching of $(G,χ)$ or a
non-empty subset $C$ of $[p]$ and a vertex cover $X$ of $(G,χ)[C]$
such that $|X|\le(4+ε)(|C|-1).$  This algorithm runs in time
$\O(|V(G)|^8\cdot p^{f(ε)}),$ for some function $f\colon
\mathbb{N}\to\mathbb{N}.$
\end{lemma}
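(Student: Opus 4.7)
The plan is to deduce the lemma as a direct consequence of the main algorithmic theorem of Graf and Haxell~\cite{Graf2020} (together with its refinement in~\cite{GrafHH21algor}) by recasting the multi-edge-colored multigraph setting into a family-of-graphs setting. For each color $c \in [p]$ I define $G_c$ to be the simple graph on $V(G)$ with edge set $\chi^{-1}(c)$; this is indeed simple because $\chi$ assigns distinct colors to parallel edges. Under this translation, $(G,\chi)[C] = \bigcup_{c \in C} G_c$ and $\textsf{vc}_{C}(G,\chi) = \tau\!\left(\bigcup_{c \in C} G_c\right)$, the usual vertex cover number.

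I will then invoke the Graf--Haxell algorithmic theorem in the following form: given a family $\{G_1,\dots,G_p\}$ of graphs on a common vertex set and a slack parameter $\varepsilon'\in (0,1)$, there is a polynomial-time algorithm that either outputs a rainbow matching, i.e., pairwise vertex-disjoint edges $\{e_c : c \in [p]\}$ with $e_c\in E(G_c)$, or outputs a non-empty $C\subseteq [p]$ certifying that $\nu\!\left(\bigcup_{c\in C}G_c\right)\le (2+\varepsilon')(|C|-1)$. Setting $\varepsilon':=\varepsilon/2$ and using the standard bound $\tau(H)\le 2\nu(H)$ (obtained by taking $V(M)$ for a maximum matching $M$ of $H$), the existence half follows by contraposition: if the second output occurred, then a maximum matching $M$ of $\bigcup_{c\in C}G_c$ would yield a vertex cover $X:=V(M)$ of $(G,\chi)[C]$ of size at most $2\nu \le (4+\varepsilon)(|C|-1)$, contradicting the hypothesis. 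For the algorithmic half, when the procedure does return such a $C$, I compute a maximum matching $M$ in polynomial time and output the pair $(C,V(M))$ as the required witness.

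The running time bound $\O(|V(G)|^8 \cdot p^{f(\varepsilon)})$ is inherited from the Graf--Haxell algorithm on an $n$-vertex input, where the polynomial-in-$n$ behaviour (with the high exponent arising from iterative augmenting-style subroutines on a line graph of size $\O(|V(G)|^2)$) is combined with a $p^{f(\varepsilon')}$ factor capturing the dependence on the slack $\varepsilon'$; the auxiliary maximum matching computation only adds a cubic overhead and is absorbed.

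The main obstacle I anticipate is aligning the format of the failure witness returned by~\cite{Graf2020} with the matching/vertex-cover witness used here: if the theorem as stated in~\cite{Graf2020} returns a more abstract obstruction (for instance a \emph{defect} set or a topological-connectivity witness) rather than a bound of the form $\nu\le(2+\varepsilon')(|C|-1)$, then a short conversion step is needed to extract a small maximum matching of $\bigcup_{c\in C}G_c$ and turn it into $V(M)$. Beyond that, it remains to verify that the constants and the shape of the $\varepsilon$-dependence in~\cite{Graf2020} really collapse to $(4+\varepsilon)$ and $p^{f(\varepsilon)}$ respectively, and to dispose of loops as a trivial separate case, since a loop can never participate in a matching.
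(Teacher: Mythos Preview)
Your overall strategy---reduce to the Graf--Haxell algorithmic theorem and convert the failure certificate into a small vertex cover via $\tau\le 2\nu$---is the same in spirit as the paper's, which also reduces to~\cite{Graf2020}. The paper, however, does not invoke a packaged ``rainbow matching'' corollary; it goes back to the independent-transversal form (Theorem~\ref{theo:penny's}) applied to an explicit extended line graph $G'$ whose vertices are the pairs $(e,\chi(e))$ and whose edges encode ``sharing an endpoint''. An independent transversal of $G'$ for the color partition is then exactly a $p$-rainbow matching, and the dominating set in the failure branch yields a vertex cover of size at most twice its cardinality.

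The genuine gap in your proposal is the treatment of loops. You write that ``a loop can never participate in a matching'' and propose to dispose of them as a trivial separate case. In this paper's setting that is false and in fact fatal: the definition of matching used here is ``pairwise vertex-disjoint edges'', which explicitly allows loops (an edge $e$ with $|e|=1$), and both applications of the lemma later in the paper (the auxiliary graphs for \IPP\ and \TPT) rely essentially on colored loops. A color class may consist entirely of loops, and the rainbow matching must then select one of them. The off-the-shelf rainbow-matching form of Graf--Haxell you invoke, phrased for ordinary graphs $G_c$ and the standard matching number $\nu$, does not cover this: if $G_c$ has only loops then in the usual sense $G_c$ has no edges at all, and your reduction collapses. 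The paper's extended line-graph construction is precisely what makes loops work: a loop $\{x\}$ becomes a vertex of $G'$ whose neighborhood is a single clique, so $G'$ remains $3$-claw-free and Theorem~\ref{theo:penny's} applies uniformly to loops and ordinary edges alike. Without an analogous mechanism, your argument does not prove the lemma as stated.
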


To prove \autoref{lemma_rainbowvc}, we only need a light version of
Graf and Haxell's Theorem, appeared as Theorem 4 in~\cite{Graf2020}, which we
present here (whose seminal version appeared in \cite{Haxell95a}).

Let $G$ be a graph enhanced with a partition $\mathcal{V}=(V_1,\dots
,V_p)$ of its vertex set.  An \emph{independent transversal} of $G$
and $\mathcal{V}$ is an independent set $S$ of $G$ satisfying $|S\cap
V_i|=1$ for every $i\in[p].$  For an integer $r,$ the graph $G$ is
\emph{$r$-claw-free for $\mathcal{V}$} if no vertex of $G$ has $r$
independent neighbors in distinct sets $V_i.$  Finally, a set $X$ of
$G$ is a \emph{dominating set} of $G$ if for every vertex $v$ of $G$
there exists a vertex $x$ of $X$ such that $vx$ is an edge of $G.$

\begin{theorem}[Graf and Haxell~\cite{Graf2020}]
  \label{@hervorbringt}
For any fixed $r\in \N$ and $ε \in (0,1)$ there exists an
algorithm that takes as input a graph $G$ and a partition
$\mathcal{V}=(V_1,\dots ,V_p)$ of its vertex set such that $G$ is
$r$-claw-free for $\mathcal{V}$ and produces
	\begin{itemize}
		\item either an independent transversal of $G$ and
                  $\mathcal{V},$ or
		\item a subset $I$ of $\{1,\dots ,p\}$ and a set $X$
                  of vertices of $G$ such that $X$ is a dominating set
                  of $G[\bigcup_{i\in I} V_i]$ and $|X|\le
                  (2+ε)(|I|-1).$
	\end{itemize}
Moreover, the algorithm runs in $\O(|V(G)|^4\cdot p^{f(r,ε)})$
time, for some function $f\colon \mathbb{N}\times (0,1)
\to\mathbb{N}.$
\end{theorem}

\noindent
{\emph Proof of \autoref{lemma_rainbowvc}}. Consider a
$p$-edge colored multigraph $G$  and some $ε \in (0,1).$ 
We build an ``extended line-graph'' $G'$ of $G$ in order to apply
\autoref{@hervorbringt} on it.  See \autoref{@consternation} for an
example of construction. The graph $G'$ is defined as follows:
\begin{eqnarray*}
V(G')=\{(e,χ(e))\mid e\in E(G)\} &  \mbox{and} &  E(G')=\{\{(e,χ(e)),(e',χ(e'))\}\mid e\cap e'\neq\emptyset\}.
\end{eqnarray*}

\begin{figure}[!ht]
\centering
\scalebox{0.6}{\includegraphics{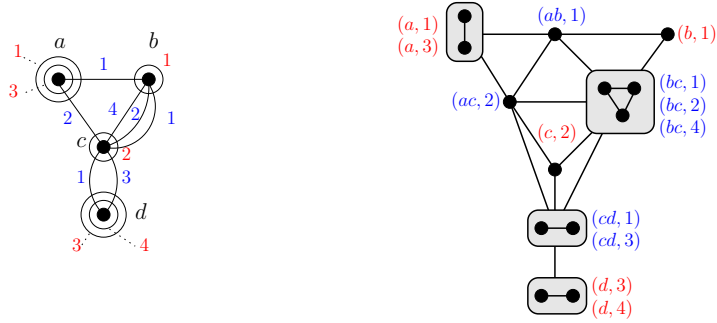}}
\caption{A $p$-edge-colored multigraph $G$ on the left where the sets of
  colors are shown in red for the loops and in blue for the
  ordinary edges of $G.$ On the right, the corresponding extended line-graph of
  $G,$ where the elements in red and in blue respectively correspond
  to loops and ordinary edges of $G.$}
\label{@consternation}
\end{figure}

\noindent Notice that the size of $G'$ is polynomial in the size of
$G$ and $p,$ in particular $|V(G')|=|E(G)| \le \O(|V(G)|^2\cdot p).$  Notice also
that $G'$ is $3$-claw free, independently from the chosen partition of
$\mathcal{V}.$  Indeed the neighborhood of every vertex $(e,χ(e))$ is
either one clique (if $e$ is an loop) or the union of two cliques (if
$e$ is an ordinary edge).

 Now, consider the partition of the vertex set of $G'$ given by the
 colors on the vertices of $G',$ that is $V_i=\{(e,c_0) \in V(G')\mid
 c_0=i \},$ for $i\in[p].$  We then apply on $G'$ the algorithm from
 \autoref{@hervorbringt} with $r=3$ and $ε/2$ as parameters.
 Clearly, this algorithm runs in time $\O(|V(G)|^8\cdot
 p^{f(ε)}),$ for some function
 $f\colon\mathbb{N}\to\mathbb{N}.$

If we obtain an independent transversal of $G',$ then it corresponds
to a rainbow matching of $G.$

Suppose now that the algorithm outputs a subset $I$ of $[p]$ and a set
$X$ of vertices of $G'$ such that $X$ is a dominating set of
$G'[\bigcup_{i\in I} V_i]$ and $|X|\le (2+ε /2)(|I|-1).$  Let
$Y$ be the vertices of $G$ that are involved in an element of $X,$
that is $x\in Y$ if there exists $c_0$ such that $(\{x\},c_0)\in X$ or
there exist $c_0$ and $y\in V(G)$ such that $(\{x,y\},c_0)\in X.$  We
have $|Y|\le 2|X|$ and we can claim that no edge $e$ of $G\setminus Y$
has a color from $I.$  Indeed, towards a contradiction, assume first
that $e=\{x\}$ and $χ(e)=c_0\in I.$  As $X$ is a dominating set of
$G'[\bigcup_{i\in I} V_i],$ there exists an element $w$ of $X$
adjacent to $(\{x\},c_0)$ in $G'.$  If $w$ corresponds to a vertex of $G$ then,
by construction of $G',$ we have $w=(\{x\},c')$ for a color $c'\neq
c_0.$  So, the vertex $x$ belongs to $Y,$ a contradiction.  Now, if
$w$ corresponds to an edge $e=\{u,v\}$ of $G,$ by construction of
$G',$ we have $u=x$ or $v=x$ and again the vertex $x$ belongs to $Y,$
a contradiction.  Finally, assume that $e=\{x,y\}$ and $χ(e)\in I.$
Similarly, there exists an element $w$ of $X$ adjacent to $xy$ in
$G'.$  If $w$ corresponds to a vertex of $G,$ then it must be $x$ or
$y,$ while if $w$ corresponds to an edge of $G,$ then this edge must
have a common endpoint with $xy.$  In both cases, we have $x\in Y$ or
$y\in Y,$ again a contradiction.  So, $Y':=Y\cap V((G,χ)[I])$ is a   
vertex cover of $(G,χ)[I]$ and  $|Y'|\le(4+ε)(|I|-1).$\hfill $\square$\\

\noindent The {\sl rainbow coloring technique} will be based on the  following restatement of \autoref{lemma_rainbowvc}.
\begin{corollary}
\label{cor_rainbowvc}
There exists some function $f\colon \mathbb{N}\to\mathbb{N}$ such that
for every $ε>0$ there is an algorithm that, with input a
$p$-edge-coloring graph $(G,χ),$ outputs either a rainbow matching
of $(G,χ)$ or finds a non-empty subset $C$ of $[p]$ and a vertex cover $X$
of   $(G,χ)[C]$ such that  $|X|<
(4+ε)|C|.$  Moreover, this algorithm runs in time $\O(|V(G)|^8\cdot
p^{f(ε)}).$
\end{corollary}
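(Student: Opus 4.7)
The plan is to derive \autoref{cor_rainbowvc} as an essentially immediate consequence of \autoref{lemma_rainbowvc}, since the only difference between the two statements is a slight weakening of the vertex-cover bound from $|X|\le (4+\varepsilon)(|C|-1)$ to $|X|<(4+\varepsilon)|C|$. Concretely, I would simply invoke the algorithm of \autoref{lemma_rainbowvc} on the input $(G,\chi)$ with the same $\varepsilon$; an ordinary graph is already a (multi)graph, so no reformulation of the input is needed.

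In the first case, the algorithm of the lemma returns a $p$-rainbow matching of $(G,\chi)$, which is output verbatim. In the second case, it returns a non-empty $C\subseteq[p]$ together with a vertex cover $X$ of $(G,\chi)[C]$ satisfying
\[
|X|\;\le\;(4+\varepsilon)\bigl(|C|-1\bigr).
\]
Since $|C|\ge 1$ and $\varepsilon>0$, we have $(4+\varepsilon)(|C|-1)=(4+\varepsilon)|C|-(4+\varepsilon)<(4+\varepsilon)|C|$, so the pair $(C,X)$ already meets the strict bound required by the corollary and is therefore returned as is. The running time $\O(|V(G)|^{8}\cdot p^{f(\varepsilon)})$ and the existence of the function $f\colon\Bbb{N}\to\Bbb{N}$ are inherited directly from the corresponding guarantees of \autoref{lemma_rainbowvc}.

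Because the entire argument is a one-shot invocation plus the inequality $(4+\varepsilon)(|C|-1)<(4+\varepsilon)|C|$, there is no real obstacle; the only thing to be slightly careful about is that $C$ must be non-empty (which the lemma already guarantees) so that the strict inequality is meaningful, and that the lemma's hypothesis is quantified over \emph{all} subsets of colors while its algorithmic output only needs \emph{one} such $C$, matching exactly what the corollary asks for.
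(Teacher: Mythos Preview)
Your proposal is correct and is exactly the intended derivation: the paper states \autoref{cor_rainbowvc} as an immediate corollary of \autoref{lemma_rainbowvc} without further proof, and the only step is precisely the observation that $(4+\varepsilon)(|C|-1)<(4+\varepsilon)|C|$ for the non-empty $C$ returned by the lemma.
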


\section{The rainbow matching technique}
\label{@absentminded}

Both our kernelization  algorithms for \IPP and \TPT will be based on 
the the rainbow matching technique that we next describe in a generic form.
\subsection{Overview of the kernelization algorithm}

\label{se_ssssssssdfgdfgdfghfdhfdhoverview}

To that end, let us consider a generic $H$-packing problem (where $|H|=3$) where given an input $(G,k),$ the objective is to decide if there exist 
$k$ vertex disjoints sets $P_i \subseteq V(G)$ such that for any $i,$ $G[P_i]$ is isomorphic to $H.$
The rainbow matching technique that we introduce in this article can be summarized as follow. Given an input $(G,k)$:
\begin{enumerate}
\item Maintain a partition $(W,B,C)$ of $V(G)$ with size and structure invariants, called a \emph{partial decomposition}, where
  $B \cup C$ is small (typically $\mathcal{O}(k)$), and $W$ is the large part where we want to select an approriate subset. 
\item At each round, apply the following rule:
  \begin{itemize}
  \item define an auxiliary edge-colored multigraph graph $(G,χ)\langle W,B,C\rangle$ with vertex set $W$.
  \item applies Corollary~\ref{cor_rainbowvc} on $(G,χ)\langle W,B,C\rangle$:
    \begin{itemize}
  \item if there exists a rainbow matching $M$ in $(G,χ)\langle W,B,C\rangle,$ stop the kernel and output $G[V(M) \cup B \cup C]$ (we selected $V(M)$ from $W$)
  \item otherwise, use the subset of colors $X$ and its small vertex cover  $T \subseteq W$  to compute
    a new partial decomposition $(W',B',C')$
    \end{itemize}
    \end{itemize}
\end{enumerate}
We point out that the partial decomposition may contain other information than $(W,B,C)$, as it is the case of example for \TPT here, but for the sake of simplicity 
we stick to the triplet $(W,B,C)$ in this generic presentation.
As invariants of a partial decomposition we used for \IPP and \TPT have a lot of similarities, we now explain the common ideas behind these invariants.

\subsection{Origin of the invariants in a partial decomposition.}\label{se_ssssssssdfgdfgsssdfghfdhfdhoverview}
We start with a greedy localization phase where we compute a maximal $H$-packing $\mathcal{P}=\{H_1,\dots,$ $H_{i_0}\},$ and we assume that $i_0 < k$ as otherwise we
get a \yes-instance. We define $C^0=V(\mathcal{P})$ and $W^0 = V(G) \setminus C^0.$ Observe that there is no copy of $H$ inside $W^0,$ and that $|C^0| = \mathcal{O}(k).$
The goal is to select a subset $S \subseteq W^0$ and to ouput $G[S \cup C^0].$ We want $S$ to be small, typically $|S| = \mathcal{O}(k),$ and safe, in the sense that any packing $\mathcal{P}$ in $G$ can be restructured into a packing $\mathcal{P'}$ such that $|\mathcal{P'}| \ge |\mathcal{P}|$ and $V(\mathcal{P'}) \subseteq S \cup C^0.$ This will imply that if $(G,k)$ is a \yes-instance,
then $(G[S \cup C^0],k)$ is also a \yes-instance, and thus that these instances are equivalent as the other implication is straightforward.

Then, we define an auxiliary graph $(G,χ)\langle W^0,\emptyset,C^0\rangle$ as follows (we used the $\langle W^0,\emptyset,C^0\rangle$ notation to
match notation of Section~\ref{se_ssssssssdfgdfgdfghfdhfdhoverview}, as $B=\emptyset$ at the begining). Let $V((G,χ)\langle W^0,\emptyset,C^0\rangle) = W^0,$ and for any $c \in C^0$ and any $\{u,v\} \subseteq W^0$ such that $G[\{c,u,v\}]$ is isomorphic to $H,$
add to $(G,χ)\langle W^0,\emptyset,C^0\rangle$ edge $e=\{u,v\}$ and set color $\x(e)=c.$
{Notice that $C^0$ both denotes a subset of vertices in $G$, and the set of colors of $(G,χ)\langle W^0,\emptyset,C^0\rangle$ and this is a convention that we follow all over the paper.}
Now, apply Corollary~\ref{cor_rainbowvc} on $(G,χ)\langle W^0,\emptyset,C^0\rangle,$ and let us discuss the two possible outcomes of this corollary.

\paragraph*{Case where a rainbow matching exists.} We now consider te case where we have a rainbow matching $M$ in $(G,χ)\langle W^0,\emptyset,C^0\rangle.$ In this case, we stop and return $G[V(M) \cup C^0].$
Observe first that $V(M)$ is small, as $|V(M)| = 2|C^0|=\mathcal{O}(k).$ Let us now examine why it is safe.
Consider an $H$-packing $\mathcal{P}$ in $G.$ We restructure $\mathcal{P}$ as follows (see Figure~\ref{@knuckleheads}). For any $H_i \in \mathcal{P}$ we define
a corresponding $H'_i$ as follows (and we define $\mathcal{P}'=\{H'_i \mid H_i \in \mathcal{P}\}$):
\begin{itemize}
\item If $H_i \subseteq C^0,$ define $H'_i=H_i.$
\item Otherwise, we know that $H_i \cap C^0 \neq \emptyset$ by the maximality of the greedy localization. Let $c \in H_i \cap C^0,$ and
  $e_c=\{u_c,v_c\}$ the edge of $M$ such that $\x(e_c)=c.$ Restructure $H_i$ into $H'_i = \{c,u_c,v_c\}.$
\end{itemize}
We can observe that the  $\{H'_i\}$ are vertex disjoint as $M$ is a matching, and thus that $V(M)$ is safe.

\begin{figure}[!ht]
\centering
\scalebox{0.6}{\includegraphics{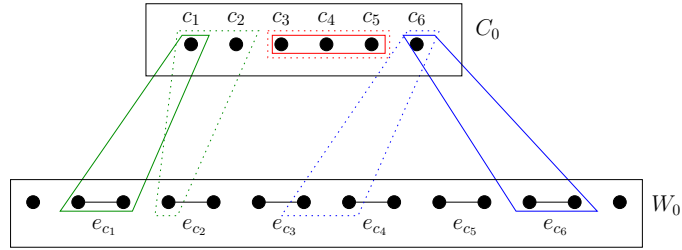}}
\caption{An example showing how to restructure $H'_1,H'_2,H'_3$ (in dashed lines) to $H_1,H_2,H_3$ (in plain lines) using the rainbow matching $\{e_{c_i},1 \le i \le 6\}.$}
\label{@knuckleheads}
\end{figure}

\paragraph*{Case where a small vertex cover exists.} Suppose there exists a subset of colors $X \subseteq C^0$ and a vertex cover $T$ of $(G,χ)\langle W^0,\emptyset,C^0\rangle[X]$ such that $|T| \le (4+ε)|X|.$
Let $B = T \cup X,$ $W = W^0 \setminus T,$ and $C = C^0 \setminus X.$ 
Observe that as $T$ is a vertex cover  of $(G,χ)\langle W^0,\emptyset,C^0\rangle[X],$  for any set $H' \subseteq B \cup W$ such that $G[H']$ is isomorphic to $H,$ we have $|H' \cap W| \le 1.$
This crucial property will help us to control how copies of $H$ can be packed in $G[W \cup B],$ and thus motivates the invariants required in the following (informal) definition of partial decomposition. 

\subsection{Invariants in a partial decomposition}
\label{@maravillados}
We say that $(W,B,C)$ is a \emph{partial decomposition} (with respect to $(C^0,W^0)$) of $G$ if
\begin{enumerate}
\item\label{@ausdrucksform} $W,B,C$ is a partition of $V(G)$
\item\label{@necesariamente} $(W,B)$ is a \emph{nice pair}: $W \subseteq W^0$ and for any  $H' \subseteq B \cup W$ such that $G[H']$ is isomorphic to $H,$ $|H' \cap W| \le 1$
\item\label{@thoroughness} (size invariant) $B$ is small (typically $|B| = \mathcal{O}(|C^<|),$ where $C^< = C^0 \setminus C$
\end{enumerate}
{As $C^0$ and $W^0$ will be fixed (there are only computed once at the begining), we voluntarily do not mention ``with respect to $(C^0,W^0)$'' in the reminder of the article.} 
Observe that the tuple $(W,B,C)$ that we obtained at this end of Section~\ref{se_ssssssssdfgdfgsssdfghfdhfdhoverview} is a partial decomposition.
Properties \ref{@ausdrucksform} and \ref{@necesariamente} listed above are common to both problems, whereas the notion of size to ensure that $B$ is small is ad-hoc (but with
the same objective to guarantee that $|B|+|C| = \mathcal{O}(|C^0|),$ as $B$ and $C$ will be part of the output of the kernel).
The property of being a nice pair will allow us to provide a structural description of $B$: in both problems we will partition $B$ into \emph{buckets} $B_i,$ and obtain
a description on how copies of $H$ in $W \cup B$ intersect the buckets.

\subsection{Description of one round of the kernel}
\label{@gillyflowers}
We now consider an arbitrary round of the kernel, where we have our current partial decomposition $(W,B,C).$
The objective is now to apply again Corollary~\ref{lemma_rainbowvc}, but in a more general setting than in Section~\ref{se_ssssssssdfgdfgsssdfghfdhfdhoverview} where we had $B=\emptyset.$

Let us now discuss how to define the auxiliary graph $(G,χ)\langle W,B,C\rangle.$ We start as before by defining $V((G,χ)\langle W,B,C\rangle) = W,$ and for any $c \in C,$ and any $\{u,v\} \subseteq W$ such that $G[\{c,u,v\}]$ is isomorphic to $H,$ adding to $(G,χ)\langle W,B,C\rangle$ edge $e=\{u,v\}$ and set $\x(e)=c.$
{Again, notice that according to the context, $C$ may denote a subset of $G$, of a subset of our colors in $(G,χ)\langle W,B,C\rangle$.}
The crux of this approach is {to add to $(G,χ)\langle W,B,C\rangle$ some loops in $W$ and a coloring of these loops (using a fresh set of colors $D,$ which is disjoint from $C$), such that:}
\begin{enumerate}[label=\roman*.]
\item\label{@professional} $|D| = \mathcal{O}(|B|)$
\item\label{@argamasillesco} if there exists a rainbow matching $M^D$ only for loops of colors in $D$ (meaning that $M^D = \{v_d, d \in D\}$ and $|M^D|=|D|$), then for any packing $\mathcal{P}$
  of $G[W \cup B],$ there exists a packing $\mathcal{P'}$ such that $|\mathcal{P'}|=|\mathcal{P}|$ and $V(\mathcal{P'}) \subseteq B \cup V(M^D)$
\item\label{@reinterpreten} if there exists a subset of colors $X^D \subseteq D$ and a vertex cover $T$ of $(G,χ)\langle W,B,C\rangle[X^D]$ such that $|T| \le 2(4+ε)|X^D|,$ then we can find in polynomial time a non-empty subset $T' \subseteq T$ such that we can add $T'$ to $B$ without violating
  the size invariant of a partial decomposition. In other words, in this case we define $W'=W \setminus T',$ $B'=B \cup T',$ $C'=C,$ and we want that
  $(W',B',C')$ remains a partial decomposition.
\end{enumerate}
{The way we can define such colored loops to obtain the three previous properties depends on the problem that we consider and thus will be specified in each application of the technique.}
As we guess that, at a first sight, Property~\ref{@reinterpreten} may look like it comes out of nowhere, let us now explain why properties \ref{@professional} to \ref{@reinterpreten} are sufficient to obtain the kernel.
Suppose that we defined an auxiliary graph $(G,χ)\langle W,B,C\rangle$ satisfying the previous properties, and that we apply Corollary~\ref{cor_rainbowvc} on $(G,χ)\langle W,B,C\rangle.$
Let us discuss again the two possible outcomes of this Corollary.

\paragraph*{Case where a rainbow matching exists.} Consider now the case  where there is a rainbow matching $M$ in $(G,χ)\langle W,B,C\rangle.$ In this case, we stop and return $G[V(M) \cup B \cup C].$
Let us partition $M=M^C \cup M^D,$ where $M^C$ (resp. $M^D$) are edges whose color is in $C$ (resp. $D$).
Observe first that the kernel output is small, as $|V(M)| = |D|+2|C|=\mathcal{O}(|B|+|C|)$ (by Property~\ref{@professional}), and thus the output has vertex size
$\mathcal{O}(|B|+|C|)=\mathcal{O}(|C^<|+|C|)=\mathcal{O}(|C^0|)=\mathcal{O}(k)$ (by Property~\ref{@thoroughness}).
Let us now examine why it is safe. 
Consider an $H$-packing $\mathcal{P}$ in $G.$ We restructure $\mathcal{P}$ as follows (see Figure~\ref{asdfsdfdsfdsfdsfdsf_wbc}). For any $H_i \in \mathcal{P}$ we define
a corresponding $H'_i$ as follows (and we define $\mathcal{P}'=\{H'_i \mid H_i \in \mathcal{P}\}$):
\begin{itemize}
\item If $H_i \subseteq C,$ define $H'_i=H_i.$
\item If $H_i \nsubseteq C,$ and $H_i \cap C \neq \emptyset,$ choose $c \in H_i \cap C,$ and define $e_c=\{u_c,v_c\}$ the edge of $M$ such that $\x(e_c)=c.$ Restructure $H_i$ into $H'_i = \{c,u_c,v_c\}.$
\item Now, it only remains to restructure $\mathcal{P}_{(W\cup B)}=\{H_i \in \mathcal{P} \mid H_i \subseteq W \cup B\}.$
  By Property~\ref{@argamasillesco}, there exists a packing $\mathcal{P'}_{(W\cup B)}$ such that $|\mathcal{P'}_{(W \cup B)}|=|\mathcal{P}_{(W \cup B)}|$ and $V(\mathcal{P'}_{(W \cup B)}) \subseteq B \cup V(M^D).$
\end{itemize}

\begin{figure}[!ht]
\centering
\scalebox{0.6}{\includegraphics{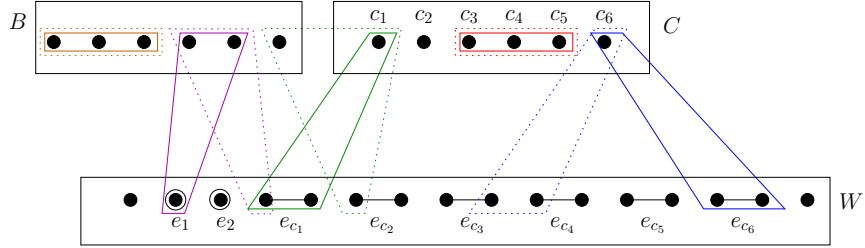}}
\caption{
  {The color set of $(G,χ)\langle W,B,C\rangle$ is $C \cup D$, where here $C=\{c_1,\dots,c_6\}$, and $D=\{d_1,d_2\}$ (where $D\cap C = \emptyset$).
   In this example we assume that there exists a rainbow matching $M$ in $(G,χ)\langle W,B,C\rangle$. We have $M=M^C \cup M^D,$
  where $M^C=\{e_{c_i},1 \le i \le 6\}$ and $M^D=\{e_1,e_2\}.$ 
  Here, we restructure $H'_1,H'_2,H'_3,H'_4,H'_5$ (in dashed lines) to $H_1,H_2,H_3,H_4,H_5$ (in plain lines) according to $M$.}}
\label{asdfsdfdsfdsfdsfdsf_wbc}
\end{figure}

We can observe that the  $\{H'_i\}$ are vertex disjoint as $M$ is a matching, and thus that $V(M)$ is safe.
Observe that the matching was computed without considering the complex (and unknown) structure of ``bad'' copies of $H$ that use vertices of both $B$ and $C.$
In other words, the way we organize the previous restructuration allows us to forget these bad copies in $G.$

\paragraph*{Case where a small vertex cover exists.} Suppose there
exists a non-empty subset of colors $X \subseteq C \cup D$ and a
vertex cover of $T$ of $(G,χ)\langle W,B,C\rangle[X]$ such that
$|T| \le (4+ε)|X|.$ Let $X^C = X \cap C$ and $X^D = X \cap D.$. Notice
first that we cannot simply proceed as in
Section~\ref{se_ssssssssdfgdfgsssdfghfdhfdhoverview} (where $X^D$ was
empty) and define $W'= W \setminus T,$ $C'=C \setminus X^C$ and
$B'=B \cup (T \cup X^C).$ Indeed, if we want the size
property~\ref{@thoroughness} to be respected, we need that what we add
to $B$ is linear in what we remove from $C,$ or more formally that
$|T \cup X^C| = \mathcal{O}(|X^C|).$ However, we only know that
$|T| \le (4+ε)|X|=(4+ε)(|X^D|+|X^C|),$ and thus if $|X^C|$ is small
compared to $|X^D|$ (typically $|X^C|=0$), we don't have the property
we need.  This is where we use (for both problems) the following
win/win trick.

Case 1: if $|X^D| \le |X^C|.$ In this case, the previous inequality
gives us $|T| \le (4+ε)|X|=2(4+ε)|X^C|,$ and we can define
$W'= W \setminus T,$ $C'=C \setminus X^C$ and $B'=B \cup (T \cup X^C)$
while preserving size Property~\ref{@thoroughness}. {Notice that the
  crucial property ensuring that $(W',B')$ is still a nice pair (which
  is required to obtain that $(W',B',C')$ remains a partial
  decomposition) is that there is no $H'$ such that $G[H']$ is
  isomorphic to $H$ where $|H' \cap W'|=2$ and $|H' \cap X^c|=1,$
  which holds because $T$ is a vertex cover of
  $(G,χ)\langle W,B,C\rangle[X]$.}

Case 2: if $|X^D| > |X^C|.$ In this case, the previous inequality
gives us $|T| \le 2(4+ε)|X^D|,$ and we use
Property~\ref{@reinterpreten} to find in polynomial time a non-empty
subset $T' \subseteq T$ such that $(W',B',C')$ remains a partial
decomposition, where $W'=W \setminus T',$ $B'=B \cup T',$ $C'=C.$
Thus, Case 2 corresponds to a case where we discover that
  a certain part $T' \subseteq W$ is small, and can be added to the
  buckets.

Notice that in both Cases 1 and 2, we obtain a new ``smaller'' partial
decomposition where either $|C'|<|C|$ (because $X^{C}$ is non-empty,
in Case 1) or $|W'|<|W|$ (in Case 2). Thus the algorithm terminates.

\subsection{Applicability of the rainbow coloring technique}
The rainbow matching technique consists in applying the algorithm of
Section~\ref{se_ssssssssdfgdfgdfghfdhfdhoverview}.  Adapting this
technique to a particular problem requires to find a way to define
colors for buckets that respects
Properties~\ref{@professional},~\ref{@argamasillesco},
and~\ref{@reinterpreten}, together with a notion of size used in
Property~\ref{@thoroughness}.  In \autoref{@construction} and
\autoref{@industriousness}, we present how this technique can be
applied for \IPP and \TPT respectively.

\section{Linear kernel for  \IPP}
\label{@construction}

\subsection{Notation}

Given a graph $G$ we refer to a path in $G$ of length 2 as a
\emph{2-path} of $G.$ We say that a 2-path of $G$ is \emph{induced} if
there is no edge in the graph between its endpoints. We call a $P_3$
an induced 2-path. When we refer to a $P_3$ in a graph, we will see it
as a subset of vertices rather than an induced subgraph.  An
\emph{induced $P_{3}$-packing} of $G$ is a set
$\P = \{V_i, i \in [x]\}$ of vertex-disjoint induced $P_{3}$'s.

\defparproblem{{\sc Induced 2-path-Packing} (\IPP)}{$(G,k)$ where
  $G$ is a graph and $k \in \N$}{$k.$}{Is there an induced
  $P_3$-packing of size at least $k$?}

In this section we prove the following theorem.

\begin{theorem}
\label{secondi_ess}
There exists some function $f:\mathbb{N}\to\mathbb{N}$ such that for
every $ε>0,$ there exists an algorithm that, given an instance $(G,k)$
of \IPP\ outputs a set $A\subseteq V(G)$ such that $(G[A],k)$ is an
equivalent instance where $|A|≤(243+ε)k.$  Moreover this algorithm
runs in time $\O(|V(G)|^{f(ε)}).$
 In other words,  \IPP admits a kernel of a linear number of vertices.
\end{theorem}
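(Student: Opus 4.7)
The plan is to instantiate the rainbow matching framework of \autoref{hgfhjgfdyug676gs76gsiysfgdgdsfuisgf} for $P_3$-packing. We start with greedy localization: compute a maximal induced $P_3$-packing $\mathcal{P}=\{H_1,\dots,H_{i_0}\}$; if $i_0 \ge k$ output a yes-instance, otherwise set $C^0 = V(\mathcal{P})$ (of size at most $3(k-1)$) and $W^0 = V(G) \setminus C^0$. By maximality, $G[W^0]$ contains no induced $P_3$, hence is a disjoint union of cliques. Throughout the algorithm we maintain a partial decomposition $(W,B,C)$ with $W \subseteq W^0$, $(W,B)$ a nice pair, and size invariant $|B| \le \alpha \cdot |C^0 \setminus C|$ for a constant $\alpha$ depending on $\varepsilon$; initially $W=W^0$, $B=\emptyset$, $C=C^0$.

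The nice pair property forces a clean bucket structure on $B$. A short case analysis shows that, for each clique $K$ of $G[W]$, every $b\in B$ is adjacent either to all of $K$ or to none, and moreover each $b\in B$ has neighbors in at most one clique. This partitions $B$ into buckets $B_0, B_1, \dots, B_s$, where $B_i$ is attached to clique $K_i$ and $B_0$ has no $W$-neighbor. Consequently, an induced $P_3$ lying in $W\cup B$ either stays inside $B$, or uses exactly one vertex $w\in K_i$ together with two vertices from $B$ playing one of two possible roles (either $b_1$--$w$--$b_2$ with $b_1,b_2\in B_i$, or $w$--$b_1$--$b_2$ with $b_1\in B_i$ and $b_2\notin B_i$). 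The decisive feature is that any $w'\in K_i$ can replace $w$ in such a $P_3$: this interchangeability lets a small number of loops encode how $K_i$ can be used in a packing.

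We then build $(G,\chi)\langle W,B,C\rangle$ with vertex set $W$. For each $c\in C$ and each pair $\{u,v\}\subseteq W$ such that $\{c,u,v\}$ is an induced $P_3$, we add an edge $\{u,v\}$ of color $c$. For the bucket colors, we introduce, for each bucket $B_i$ with $i\ge 1$, a set $D_i$ of loop colors of size at most $|B_i|$, and place parallel loops of these colors on the vertices of $K_i$ (parallel loops with distinct colors are allowed by the multigraph coloring). This yields $|D|=\sum_i |D_i|=O(|B|)$. We then invoke \autoref{cor_rainbowvc} with parameter $\varepsilon'=\varepsilon/c_0$ for a suitable constant $c_0$. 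If the algorithm returns a rainbow matching $M = M^C \cup M^D$, we output $G[V(M)\cup B\cup C]$; its vertex size is $|D|+2|C|+|B|+|C|=O(|B|+|C|)=O(|C^0|)=O(k)$, and safety follows from the generic restructuring argument — $M^C$ reroutes every $P_3$ through a $C$-vertex, while $M^D$ supplies per-clique representatives absorbing every $P_3$ inside $W\cup B$. Otherwise, the algorithm returns a color subset $X = X^C \cup X^D$ with vertex cover $T$ of size at most $(4+\varepsilon')|X|$, and the generic win/win branches as follows: if $|X^D| \le |X^C|$, we have $|T| \le 2(4+\varepsilon')|X^C|$, and we promote $T \cup X^C$ into $B$ while removing $X^C$ from $C$, decreasing $|C|$; if $|X^D| > |X^C|$, Property~(5) lets us identify a non-empty $T' \subseteq T$ that can be absorbed into $B$, strictly decreasing $|W|$. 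Either way the size invariant is preserved and the procedure terminates in polynomially many rounds.

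The main obstacle is the precise design of the bucket loop colors and the verification of Properties~(4) and~(5) of \autoref{sdgfdfgfdgdfgdfhgfdhafsdsdfafeferwregfa}. For Property~(4), we must show that choosing one distinct representative of $K_i$ per color of $D_i$ suffices to absorb any packing of $G[W\cup B]$: a packing uses at most $|B_i|$ vertices of $K_i$ (since every $P_3$ through $K_i$ occupies a distinct $B_i$-vertex), and by the clique-interchangeability above, these vertices may be swapped for the chosen representatives — turning this swap into a vertex-disjoint replacement requires a Hall-type bipartite matching argument between the old and new $K_i$-representatives. For Property~(5), one must show that a vertex cover concentrated on the loop side detects a bona-fide subset of $W$ promotable to $B$ without inflating $B$ beyond $\alpha |C^<|$; this is where the constants conspire, and a careful accounting of the per-round blow-up yields the explicit bound $(243+\varepsilon)k$.
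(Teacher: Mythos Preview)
Your overall plan matches the paper's: greedy localization, nice pairs and their bucket structure, the auxiliary colored multigraph, and the rainbow-matching/vertex-cover dichotomy via \autoref{cor_rainbowvc}. The gap is in your size accounting.

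You declare the invariant to be $|B|\le \alpha\,|C^0\setminus C|$ and then assert that in Case~2 (when $|X^D|>|X^C|$) one can promote some non-empty $T'\subseteq T$ into $B$ ``without inflating $B$ beyond $\alpha|C^<|$''. But in Case~2 one sets $C'=C$, so $|C^<|$ does not change, while $|B'|=|B|+|T'|>|B|$. The unweighted invariant is violated on every Case~2 round, for any constant $\alpha$; the ``careful accounting'' you defer to cannot rescue it. The paper's fix is a \emph{weighted} potential
\[
s(W,B)=\frac{|B_\emptyset|}{1+2c_1}+|B_{\neq\emptyset}|,\qquad c_1=4+\varepsilon,
\]
maintained as $s(W,B)\le (1+2c_1)|C^<|$. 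In Case~2 the promoted set is exactly $T'=W(X^B)=\bigcup_{i\in S(X^B)}W_i$ (entire cliques whose buckets meet $X^B$); since $T$ must cover all loops of those colors one has $W(X^B)\subseteq T$, hence $|W(X^B)|\le|T|\le 2c_1|X^B|\le 2c_1|B_{S(X^B)}|$. The promoted vertices and the corresponding buckets $B_{S(X^B)}$ both land in $B'_\emptyset$, so the weight-$1$ term drops by $|B_{S(X^B)}|$ while the weight-$\tfrac{1}{1+2c_1}$ term rises by $(|W(X^B)|+|B_{S(X^B)}|)/(1+2c_1)\le|B_{S(X^B)}|$, and $s$ does not increase. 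The final bound then comes from $|B|\le(1+2c_1)s(W,B)\le(1+2c_1)^2|C^<|$, and $3(1+2c_1)^2\to 3\cdot 81=243$ as $\varepsilon\to 0$.

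A smaller point: your Hall-type argument for Property~(4) is avoidable. The paper takes the loop colors to be the vertices of $B_i$ themselves; a packed $P_3$ through $w_i\in K_i$ necessarily uses some $v_i\in B_i$, and one simply replaces $w_i$ by the matching endpoint $M(v_i)$. Distinct paths use distinct $v_i$'s and the matching is rainbow, so the replacement is injective by construction.
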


\subsection{Preliminary phase: greedy localization}
\label{@challengingly}
Given an input $(G,k)$ of \IPP, we start by a greedy localization
phase, that is, by finding, in polynomial time, an inclusion-wise
maximal induced 2-path-packing of $G.$ Let $C^0$ be the set of the
vertices in the paths of such a packing.  If $|\P^0| \ge k$ we can
directly answer that $(G,k)$ is a \yes-instance, therefore we may
suppose that $|P^0| < k,$ implying that $|C^0| \leq 3(k-1).$ The
vertices of $C^0$ will be referred as {(distinct)} \emph{colors}.
Observe that as the considered induced 2-path-packing is
inclusion-wise maximal, the graph $W^0:=G \setminus C^0$ does not
contain $P_3$ as an induced subgraph, consequently it is the disjoint
union of a set, say
$\mathcal{W}^{0}=\{W^{0}_1,\ldots,W^{0}_{i_{0}}\},$ such that, for
$i\in[i_0],$ the graph $G[W_{i}^{0}]$ is a clique. Clearly,
$V(G) = C^0 \cup V(\mathcal{W}^0)$ and, for any
$\{i,i'\}\in{[i_{0}]\choose 2},$ there is no edge between $W^0_i$ and
$W^0_{i'}.$ Such a couple $(C^0,\mathcal{W}^0)$ will be referred as a
\emph{greedy localized pair for the input $(G,k)$}.

This completes the initialization phase, and the tuple
$(C^0,\mathcal{W}^0)$ will be given as initial input of our
kernelization algorithm.

\subsection{Nice pair, buckets, partial decomposition, and auxiliary graph}

In all this section we consider that we are given an input $(G,k)$ of
\IPP, and a greedy localized pair $(C^0,\mathcal{W}^0)$ for input
$(G,k).$ Recall that
$\mathcal{W}^{0}=\{W^{0}_1,\ldots,W^{0}_{i_{0}}\}$ and
$W^0=V(\mathcal{W}^0).$

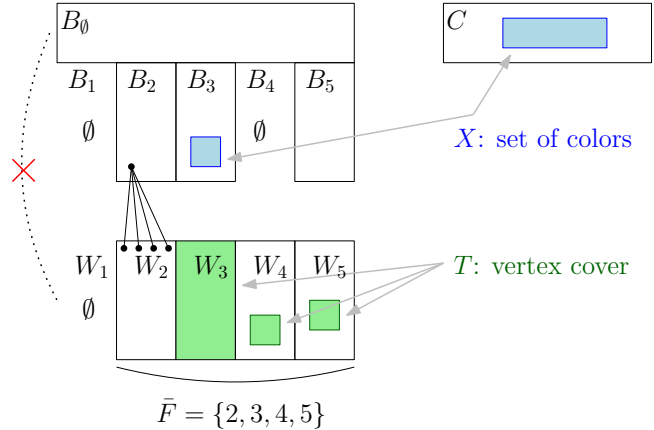
\begin{figure}[htb]
\begin{center} 
\scalebox{.7}{\tikzstyle{ipe stylesheet} = [
  ipe import,
  even odd rule,
  line join=round,
  line cap=butt,
  ipe pen normal/.style={line width=0.4},
  ipe pen heavier/.style={line width=0.8},
  ipe pen fat/.style={line width=1.2},
  ipe pen ultrafat/.style={line width=2},
  ipe pen normal,
  ipe mark normal/.style={ipe mark scale=3},
  ipe mark large/.style={ipe mark scale=5},
  ipe mark small/.style={ipe mark scale=2},
  ipe mark tiny/.style={ipe mark scale=1.1},
  ipe mark normal,
  /pgf/arrow keys/.cd,
  ipe arrow normal/.style={scale=7},
  ipe arrow large/.style={scale=10},
  ipe arrow small/.style={scale=5},
  ipe arrow tiny/.style={scale=3},
  ipe arrow normal,
  /tikz/.cd,
  ipe arrows, 
  <->/.tip = ipe normal,
  ipe dash normal/.style={dash pattern=},
  ipe dash dotted/.style={dash pattern=on 1bp off 3bp},
  ipe dash dashed/.style={dash pattern=on 4bp off 4bp},
  ipe dash dash dotted/.style={dash pattern=on 4bp off 2bp on 1bp off 2bp},
  ipe dash dash dot dotted/.style={dash pattern=on 4bp off 2bp on 1bp off 2bp on 1bp off 2bp},
  ipe dash normal,
  ipe node/.append style={font=\normalsize},
  ipe stretch normal/.style={ipe node stretch=1},
  ipe stretch normal,
  ipe opacity 10/.style={opacity=0.1},
  ipe opacity 30/.style={opacity=0.3},
  ipe opacity 50/.style={opacity=0.5},
  ipe opacity 75/.style={opacity=0.75},
  ipe opacity opaque/.style={opacity=1},
  ipe opacity opaque,
]
\definecolor{red}{rgb}{1,0,0}
\definecolor{blue}{rgb}{0,0,1}
\definecolor{green}{rgb}{0,1,0}
\definecolor{yellow}{rgb}{1,1,0}
\definecolor{orange}{rgb}{1,0.647,0}
\definecolor{gold}{rgb}{1,0.843,0}
\definecolor{purple}{rgb}{0.627,0.125,0.941}
\definecolor{gray}{rgb}{0.745,0.745,0.745}
\definecolor{brown}{rgb}{0.647,0.165,0.165}
\definecolor{navy}{rgb}{0,0,0.502}
\definecolor{pink}{rgb}{1,0.753,0.796}
\definecolor{seagreen}{rgb}{0.18,0.545,0.341}
\definecolor{turquoise}{rgb}{0.251,0.878,0.816}
\definecolor{violet}{rgb}{0.933,0.51,0.933}
\definecolor{darkblue}{rgb}{0,0,0.545}
\definecolor{darkcyan}{rgb}{0,0.545,0.545}
\definecolor{darkgray}{rgb}{0.663,0.663,0.663}
\definecolor{darkgreen}{rgb}{0,0.392,0}
\definecolor{darkmagenta}{rgb}{0.545,0,0.545}
\definecolor{darkorange}{rgb}{1,0.549,0}
\definecolor{darkred}{rgb}{0.545,0,0}
\definecolor{lightblue}{rgb}{0.678,0.847,0.902}
\definecolor{lightcyan}{rgb}{0.878,1,1}
\definecolor{lightgray}{rgb}{0.827,0.827,0.827}
\definecolor{lightgreen}{rgb}{0.565,0.933,0.565}
\definecolor{lightyellow}{rgb}{1,1,0.878}
\definecolor{black}{rgb}{0,0,0}
\definecolor{white}{rgb}{1,1,1}
\begin{tikzpicture}[ipe stylesheet]
  \draw
    (64, 480) rectangle (224, 448);
  \draw
    (96, 448) rectangle (128, 384);
  \draw
    (128, 448) rectangle (160, 384);
  \draw
    (192, 448) rectangle (224, 384);
  \draw
    (272, 480) rectangle (384, 448);
  \node[ipe node, font=\Large]
     at (69.586, 434.414) {$B_{1}$};
  \node[ipe node, font=\Large]
     at (101.586, 434.414) {$B_2$};
  \node[ipe node, font=\Large]
     at (133.586, 434.414) {$B_{3}$};
  \node[ipe node, font=\Large]
     at (165.586, 434.414) {$B_{4}$};
  \node[ipe node, font=\Large]
     at (197.586, 434.414) {$B_{5}$};
  \draw
    (96, 352) rectangle (128, 288);
  \filldraw[fill=lightgreen]
    (128, 352) rectangle (160, 288);
  \draw
    (160, 352) rectangle (192, 288);
  \draw
    (192, 352) rectangle (224, 288);
  \node[ipe node, font=\Large]
     at (77.586, 310.414) {$\emptyset$};
  \node[ipe node, font=\Large]
     at (77.586, 406.414) {$\emptyset$};
  \node[ipe node, font=\Large]
     at (169.586, 406.414) {$\emptyset$};
  \filldraw[draw=blue, fill=lightblue]
    (136, 408) rectangle (152, 392);
  \filldraw[draw=darkgreen, fill=lightgreen]
    (168, 312) rectangle (184, 296);
  \filldraw[draw=darkgreen, fill=lightgreen]
    (200, 320) rectangle (216, 304);
  \filldraw[draw=blue, fill=lightblue]
    (304, 472) rectangle (360, 456);
  \node[ipe node, font=\Large]
     at (273.586, 466.414) {$C$};
  \node[ipe node, font=\Large]
     at (65.586, 466.414) {$B_{\emptyset}$};
  \draw
    (96, 284)
     .. controls (138.6667, 273.3333) and (181.3333, 273.3333) .. (224, 284);
  \node[ipe node, font=\Large]
     at (117.586, 254.414) {$\bar{F}=\{2,3,4,5\}$};
  \draw[ipe pen heavier, ipe dash dotted]
    (60, 464)
     .. controls (38.6667, 413.3333) and (40, 365.3333) .. (64, 320);
  \draw[red, ipe pen heavier]
    (40, 396)
     -- (52, 384);
  \draw[red, ipe pen heavier]
    (40, 384)
     -- (52, 396);
  \node[ipe node, font=\Large, text=blue]
     at (277.586, 402.414) {$X$: set of colors};
  \node[ipe node, font=\Large, text=darkgreen]
     at (277.586, 334.414) {$T$: vertex cover};
  \draw[gray, ipe pen heavier, ->]
    (288, 420)
     -- (308, 452);
  \draw[gray, ipe pen heavier, ->]
    (288, 420)
     -- (156, 396);
  \draw[gray, ipe pen heavier, ->]
    (272, 340)
     -- (164, 328);
  \draw[gray, ipe pen heavier, ->]
    (272, 340)
     -- (188, 308);
  \draw[gray, ipe pen heavier, ->]
    (272, 340)
     -- (220, 312);
  \pic
     at (104, 392) {ipe disk};
  \draw
    (104, 392)
     -- (100, 348);
  \draw
    (104, 392)
     -- (108, 348);
  \draw
    (104, 392)
     -- (116, 348);
  \draw
    (104, 392)
     -- (124, 348);
  \node[ipe node, font=\Large]
     at (73.586, 334.414) {$W_{1}$};
  \node[ipe node, font=\Large]
     at (105.586, 334.414) {$W_{2}$};
  \node[ipe node, font=\Large]
     at (137.586, 334.414) {$W_{3}$};
  \node[ipe node, font=\Large]
     at (169.586, 334.414) {$W_{4}$};
  \node[ipe node, font=\Large]
     at (201.586, 334.414) {$W_{5}$};
  \pic
     at (100, 348) {ipe disk};
  \pic
     at (108, 348) {ipe disk};
  \pic
     at (116, 348) {ipe disk};
  \pic
     at (124, 348) {ipe disk};
\end{tikzpicture}}
\end{center} 
\caption{Example of a bucket decomposition and a partial
  decomposition. The vertex depicted in $B_2$ is adjacent to all vertices of $W_2$, and not adjacent to any other vertex
  in $W$. We depict the case where the kernelization algorithm
  does  not find a rainbow matching in $(G,χ)\langle W,B,C\rangle$
  and therefore it finds a subset of colors $X$ (depicted in blue)
  along with a small set of vertices $T$ (depicted in green) such
  that $T$ is a vertex cover of $(G,χ)\langle W,B,C\rangle[X].$}
\label{fig_partial_decompo}
\end{figure} 

Given two disjoint subsets $W$ and $B$ of $V(G),$ we say that the pair
$(W,B)$ is a \emph{nice pair} of $G$ if $W \subseteq
W^0$ and if every induced 2-path of $B\cup W$ contains at most one
vertex in $W.$\\

\noindent{\bf Bucket decompositions.}  Given
two disjoints subsets $W$ and $B$ of $V(G),$ a \emph{bucket
decomposition} (see Figure~\ref{fig_partial_decompo}) of the pair
$(W,B)$ is defined by the following three partitions:
\begin{itemize}
\item a partition $\{B_{\neq \emptyset},B_\emptyset\}$ of $B$
\item a partition $\{B_{1},\ldots,B_{i_0}\}$ of $B_{\neq\emptyset},$
  (we call the sets of this partition \emph{buckets}) and
\item a partition $\{W_{1},\ldots,W_{i_0}\}$ of $W.$
\end{itemize}
 (recall that we allow empty sets in partitions) 

such that
    \begin{enumerate}
    \item\label{@consequential} for $i\in[i_0]$\ $W_i = W \cap W^0_i$
  \item\label{@dispenserait} $B_\emptyset = \{v \in B \mid N_G(v) \cap W = \emptyset\}$
    and $B_{\neq \emptyset} = \{v \in B \mid N_G(v) \cap W \neq \emptyset\}$
  \item\label{@standardisation} for any $i \in [i_0]$ and $v \in B_i,$ $N_G(v) \cap
    W = W_i$ (the neighborhood in $W$ of any vertex of a bucket
    $B_{i}$ is exactly the vertex set of its ``corresponding'' clique
    $G[W_{i}]$).
  \end{enumerate}

We denote $F = \{i \in [i_0] \mid W_i = \emptyset\},$ $\bar{F} = [i_0]
\setminus F.$  Observe that if $W_i = \emptyset,$ then $B_i =
\emptyset,$ as vertices in $B_i$ belong to $B_{\neq \emptyset}$ and
should have $N(v) \cap W \neq \emptyset.$ The contrapositive is not
true as we may have $W_i \neq \emptyset$ and $B_i = \emptyset.$

\begin{lemma}\label{lemma_nicepair}
  Given $W$ and $B$ disjoint subsets of $G,$
    \begin{enumerate}
    \item\label{@licenciadillo} $(W,B)$ is a nice pair iff it admits a bucket
      decomposition.
    \item\label{@respondiendo} if $(W,B)$ is nice pair, then for any induced
      2-path $P$ in $G$ such that $P \subseteq W \cup B,$ $P \cap W
      \neq \emptyset,$ there exists a unique $i \in [i_0]$ and vertices
      $u,v,w$ such that $P = \{u,v,w\},$ $u \in B,$ $v \in B_i$ and $w
      \in W_i.$  Informally, $P$ must have exactly one vertex in {one
        of the} remaining cliques, one in its corresponding bucket,
      and the last one anywhere in $B.$
  \end{enumerate}
 \end{lemma}

\begin{proof} We start with the proof of the first item.\\
($\Rightarrow$) Suppose $(W,B)$ is a nice pair.  Consider a vertex $v
  \in B_{\neq \emptyset}.$ There exists $i \in [i_0]$ such that $N(v)
  \cap W_i \neq \emptyset.$  Let $u \in N(v) \cap W_i.$  If there
  existed $w \in W_i$ such that $w \notin N(v),$ then, as $W_i$ is a
  clique, we would have that $P=\{v,u,w\}$ is an induced 2-path such
  that $|P \cap W| \ge 2,$ a contradiction. This implies that $N(v)
  \cap W \supseteq W_i.$ Let us now prove that {$N(v) \cap W =
    W_i$}. Suppose, towards a contradiction, that there exists $j \neq
  i$ and $u' \in N(v) \cap W_j.$ Then, as there is no edges between
  $W_j$ and $W_i,$ we obtain that $P=\{v,u,u'\}$ is an induced 2-path
  such that $|P \cap W| \ge 2,$ a contradiction. Thus, we obtain the
  partition $\{B_{1},\ldots,B_{i_0}\}$ of $B_\emptyset,$ where $B_i =
  \{i \in B_{\neq \emptyset} \mid N(v) = W_i\}$ (some of the $B_i$'s
  may be empty).

\noindent ($\Leftarrow$) Let $(W,B)$ be a pair admiting a bucket
decomposition.  Let us prove the following structural property: for
any $P$ induced 2-path of $G[W \cup B]$ such that $P \cap W \neq
\emptyset,$ there exists a unique $i \in i_0$ and vertices $u,v,w$
such that $P = \{u,v,w\},$ $u \in B,$ $v \in B_i,$ and $w \in W_i.$
We cannot have $P \subseteq W$ as $W$ is a union of cliques (as $W_i
\subseteq W_i^0$), implying that $P \cap B \neq \emptyset.$  This
implies that there exists an edge $\{v,w\} \subseteq P$ such that $v
\in B$ and $w \in W.$  By Property \ref{@dispenserait}, we know that $v \notin
B_\emptyset,$ implying that there exists $i \in [i_0]$ such that $v
\in B_i.$  Property \ref{@standardisation} implies that $w \in W_i.$  Let us now
consider the third vertex $u$ of $P.$ Firstly, $u$ cannot be in $W_i$
as, by Property \ref{@standardisation}, this would imply that $\{v,u\}$ is an edge,
and thus $P$ would be a triangle.  Secondly, $u$ cannot be in a $W_j,$
where $j \neq i,$ as by Property \ref{@standardisation}, $N(v) \cap W = W_i.$  This
implies that $u \in B$ and concludes the proof of the structural
property. The fact that $(W,B)$ is a nice pair is now immediate.

The second item immediately follows. Indeed, by the first part of the
result, any nice pair $(W,B)$ admits a bucket decomposition, which
implies the structural property as seen previously.
\end{proof}

Given a nice pair $(W,B),$ we will refer to $B_\emptyset,$ $B_{\neq
  \emptyset},$ $F$ and $\bar{F}$ as defined in the bucket
decomposition.  Informally, $F$ will denote the set of indexes of
cliques that will survive during the course of the kernelization
algorithm.  Recall that, using our notations, $B_{\bar{F}} =
\bigcup_{i \in \bar{F}}B_i$ and observe that $B_{\bar{F}}=B_{\neq
  \emptyset}.$

We fix some $ε>0$ and set $c_{1}=4+ε$ ($c_{1}$ is suited so to permit
the application of Corollary~\ref{cor_rainbowvc}).  Given a nice pair
$(W,B),$ we define the \emph{size} of $(W,B)$
as $$s(W,B)=\frac{|B_\emptyset|}{1+2c_1}+|B_{\neq \emptyset}|.$$

\begin{definition}\label{def:partial_dec}
  We say that a tuple $(W,B,C)$ is a \emph{partial decomposition} iff:\medskip

\noindent \textbf\emph{Partition requirements:} \\Let $C^< = C^0
\setminus C$ (which we will see as the colors already treated by
previous applications of the rule of the kernelization algorithm)
  \begin{enumerate}
  \item\label{@interwovenness} there is a partition $V(G)=W \cup B \cup C,$
  \item\label{@universalisation} $C \subseteq C^0,$ and 
  \item\label{@hypochondriacal} $(W,B)$ is a nice pair. 
  \end{enumerate}
  
 \noindent \textbf\emph{Size requirement:}
  \begin{enumerate}
  \item\label{@conflictivitat} $s(W,B) \le (1+2c_1)|C^<|.$
  \end{enumerate}
 \stf{ Moreover, we will say that the partial decomposition is \emph{clean} if it satisfies the following extra condition:
  \begin{enumerate}
  \item\label{@cleaness} for every vertex $c\in C$ the graph $G[W\cup \{c\}]$ contains an induced 2-path.
  \end{enumerate} }
\end{definition}

\stf{The initial partial decomposition we will consider (which will be
$(W_0,\emptyset,C_0)$) as well as partial decompositions which will be
produced by our kernelization procedure will not necessarily be clean
partial decompositions. However the simple following lemma allows to
clean a partial decomposition.
}

\stf{
\begin{lemma}[Cleaning Lemma]
  \label{lem:clean-I2PP}
  Let $(W,B,C)$ be a partial decomposition and $X$ be set the of
  vertices $x$ of $C$ such that $G[W\cup \{x\}]$ does not contain any
  induced 2-path. Then $(W,B\cup X,C\setminus X)$ is a clean partial
  decomposition.
\end{lemma}
}

\stf{
\begin{proof}
  First let us check that $(W,B\cup X,C\setminus X)$ satisfies the
  requirements of a partial
  decomposition. Properties~\ref{@interwovenness}
  and~\ref{@universalisation} are clearly satisfy, and by choice of
  $X$, no vertex of $X$ is contained in an induced 2-path with two
  vertices of $W$, so $(W,B\cup X)$ is also a nice pair and
  property~\ref{@hypochondriacal} is satisfies. To conlude that
  $(W,B\cup X,C\setminus X)$ is a partial decomposition, let us check
  it fulfills the size requirement. Indeed each time one vertex of $X$
  is added to $B$, $s(W,B)$ increases by at most 1, whereas
  $(1+2c_1)|C^<|$ increases by $1+2c_1$. So, the size
  requirement~\ref{@conflictivitat} is still valid. By repeating this
  counting argument for every vertex of $X$, we can conclude that
  $(W,B\cup X,C\setminus X)$ is a partial decomposition. Moreover, it
  is clear that every vertex of $C\setminus X$ forms an induced 2-path
  with two vertices of $W$ and that $(W,B\cup X,C\setminus X)$ is clean.
\end{proof}
}

\begin{definition}[Auxiliary graph]
Let $(W,B,C)$ be a partial decomposition.
  Let $p=|C|+\sum_{i\in [i_0]}|B_{i}|.$  We define the $p$-edge-colored multigraph
$(\tilde{G},χ)\langle W,B,C\rangle$ where the vertex set of
$\tilde{G}$ is $W$ and the edges of $\tilde{G},$ as well as their
colors, are defined as follows.  For any $i \in \bar{F}$ and any $u
\in B_i$ and for any $v \in W_i,$ we add an edge $e=\{v\}$ and we set
$χ(e)=u.$  {Moreover, for any $u \in C$ and for any $v$ and $w$ in $W$
  such that $\{u,v,w\}$ is an induced 2-path, we add an edge
  $e=\{v,w\}$ and we set $χ(e)=u.$}
\end{definition}

Observe that an edge $\{v,w\}$ 
of $(\tilde{G},χ)\langle W,B,C\rangle$ can be either inside a $W_i$
(in this case $\{v,w\}$ is also an edge in $G$ as $G[W_i]$ is a
clique) or between $W_i$ and $W_{i'}$ for $i \neq i'$ (in this case
$\{v,w\}$ is a non-edge in $G$).  Notice that if $W = \emptyset,$ then
$(\tilde{G},χ)\langle W,B,C\rangle$ is the empty graph and we consider
that it admits a rainbow matching $M = \emptyset.$  Notice also that
$C$ or $B_{\neq \emptyset}$ may be empty. If $C=B_{\neq
  \emptyset}=\emptyset$ then $(\tilde{G},χ)\langle W,B,C\rangle$ has
no edges and we consider that it admits a rainbow matching $M =
\emptyset$ (but this case will not occur), however cases where one of
the two sets $C,$ $B_{\neq \emptyset}$ is empty will occur in the
kernel.



\subsection{Analysis of the two cases: rainbow matching or small vertex cover}


First, let us look at the case where the reduction rule produces a
rainbow matching.

\begin{lemma}[Case of rainbow matching]\label{lemma_safe_rainbow}
Let $(W,B,C)$ be a \stf{clean} partial decomposition. Suppose that the colored multigraph
$(\tilde{G},χ)\langle W,B,C\rangle$ admits a rainbow matching $M$.
 Let $A=V(M) \cup B \cup C,$ and $G'=G[A].$ Then,
  \begin{enumerate}
  \item $(G,k)$ and $(G',k)$ are equivalent instances of \IPP,
  \item $|A| \le 3(1+2c_1)^2k.$
  \end{enumerate}
\end{lemma}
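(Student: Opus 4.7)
The plan splits naturally into the two parts of the claim. The forward direction of the equivalence is immediate since $G'=G[A]$ is an induced subgraph of $G$. The backward direction is handled by instantiating the generic restructuring recipe of \autoref{sfvsbfdsfgdhdsf}, using \autoref{lemma_nicepair}(2) to describe the induced $P_3$'s of $G$ living inside $W \cup B$. The size bound is a short linear-programming computation that uses the rainbow condition on $M$ together with the size invariant from Definition~\ref{def:partial_dec}.

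For the size bound, I first observe that, $M$ being a $p$-rainbow matching with $p = |C| + |B_{\neq \emptyset}|$, it contains exactly one edge per color; by construction of $(\tilde{G},\chi)\langle W,B,C \rangle$ the colors from $B_{\neq \emptyset}$ index loops (contributing one vertex each) while the colors from $C$ index ordinary edges (contributing two), so $|V(M)| = |B_{\neq \emptyset}| + 2|C|$ and hence $|A| = 2|B_{\neq \emptyset}| + |B_\emptyset| + 3|C|$. Since $c_1 = 4 + \varepsilon > 1/2$, maximizing the linear functional $2|B_{\neq \emptyset}| + |B_\emptyset|$ under the invariant $|B_\emptyset|/(1+2c_1) + |B_{\neq \emptyset}| \leq (1+2c_1)|C^<|$ is attained at the extreme $|B_{\neq \emptyset}| = 0$, giving $2|B_{\neq \emptyset}| + |B_\emptyset| \leq (1+2c_1)^2 |C^<|$. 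Combined with $(1+2c_1)^2 \geq 3$ and $|C^<| + |C| = |C^0| \leq 3(k-1)$, this yields $|A| \leq (1+2c_1)^2 |C^<| + 3|C| \leq (1+2c_1)^2 (|C^<| + |C|) \leq 3(1+2c_1)^2 k$.

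For the backward equivalence, I start from an induced $P_3$-packing $\mathcal{P}$ of $G$ of size $k$ and construct a packing $\mathcal{P}'$ of the same size supported on $A$, by transforming each $P \in \mathcal{P}$ as follows. If $P \cap C \neq \emptyset$, I pick some $c \in P \cap C$ and replace $P$ by $\{c, v_c, w_c\}$, where $\{v_c,w_c\} \in M$ is the unique ordinary edge of color $c$; this is an induced $P_3$ of $G$ by the very definition of $\chi$. If $P \subseteq B$, I keep $P$. Otherwise $P \subseteq W \cup B$ with $P \cap W \neq \emptyset$, and \autoref{lemma_nicepair}(2) writes $P = \{u,v,w\}$ uniquely with $v \in B_i$, $w \in W_i$, $u \in B$; I then replace $w$ by the unique vertex $w^*_v \in W_i$ incident to the loop in $M$ colored $v$. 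Vertex-disjointness of $\mathcal{P}'$ follows because $\mathcal{P}$ is already vertex-disjoint, the colors pulled from distinct $P$'s are distinct (they are themselves vertices of those $P$'s), $M$ is a matching, and $V(M) \subseteq W$ is disjoint from $B \cup C$.

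The main obstacle is verifying that $\{u,v,w^*_v\}$ is always an induced $P_3$ of $G$ regardless of whether the center of $P$ is $v$ or $w$ (the only two options, since the edge $v$-$w$ is forced by $v \in B_i$ and $w \in W_i$). Property~\ref{n2} of bucket decompositions is decisive here: for every $u \in B$, membership of $u$ in $B_i$, in some $B_j$ with $j \neq i$, or in $B_\emptyset$ fully determines whether $u$ is adjacent to all, to none, or to none of the vertices of $W_i$. Since both $w$ and $w^*_v$ lie in $W_i$, the adjacency between $u$ and $w^*_v$ coincides with that between $u$ and $w$; the edge $v$-$w^*_v$ is forced by $v \in B_i$ and $w^*_v \in W_i$; and the pair $u$-$v$ is untouched. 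Consequently the induced-$P_3$ structure of $P$ transfers verbatim to $\{u,v,w^*_v\}$, completing the step.
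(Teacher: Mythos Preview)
Your proof is correct and follows essentially the same approach as the paper's: the equivalence argument is the paper's Property~$(\textsf{P}_1)$ restructuring (you restructure all $C$-touching paths whereas the paper restructures only $W$-touching ones, but both choices work and the disjointness check is the same), and your size computation is algebraically equivalent to the paper's chain of inequalities via $s(W,B)$. The only omission is the degenerate case $W=\emptyset$ (where the paper's convention declares $M=\emptyset$ to be a rainbow matching), but there $A=B\cup C=V(G)$ and the equivalence is trivial.
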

  
\begin{proof} {\bf Equivalence property.} First, if $W = \emptyset,$
  then $(\tilde{G},χ)\langle W,B,C\rangle$ is the empty graph then
  $M=\emptyset$ and $G'=G,$ implying the equivalence.  Let us now
  assume that $W \neq \emptyset.$ Let us assume that $(G,k)$ is
  \yes-instance and prove that $(G',k)$ is as well (the other
  direction is straightforward as $G'$ is an induced subgraph of $G$).
  Let $\P^* = \{P^*_i, i \in [k]\}$ be an induced $P_3$-packing in $G$
  of size $k.$
  
Notice that the only vertices of $G$ not belonging to $G'$ are in $W.$  Let us
partition $\P^* = \P^*_1 \cup \P^*_W,$ where $\P^*_W = \{P \in \P^*
\mid P \cap W \neq \emptyset\}.$  Our objective is to restructure
$\P^*$ into another induced 2-path-packing $\P = \P^*_1 \cup \P_W$
such that $|\P_W| = |\P^*_W|.$  To that end, we will associate to each
$P^*_i \in \P^*_W$ a set $P_i \in \P_W$ such that \smallskip
  \begin{enumerate}
  \item\label{@dishonourable} for any $P_i \in \P_W,$ $P_i$ is an induced
    2-path in $G'$
  \item\label{@prevaleciera} for any $P_i, P_j$ in $\P_W,$ $P_i \cap P_j =
    \emptyset$ (paths in ${\cal P}_W$ are vertex-disjoint)
  \item\label{@gentillesses} for any $i,$ $P_i \cap (V(G)\setminus W)
    \subseteq P^*_i \cap (V(G)\setminus W)$ (roughly speaking outside
    $W,$ path $P^*_i$ uses more vertices than
    $P_i$)
  \end{enumerate}
  
  Observe that Properties~\ref{@prevaleciera} and~\ref{@gentillesses} imply
  that paths in $\P$ are vertex-disjoint. In particular, if, towards a
  contradiction, a path $P \in \P^*_1$ intersected a path $P_i \in
  \P_W,$ then, by Property~\ref{@gentillesses}, we would also have $P \cap
  P^*_i \neq \emptyset,$ which is a contradiction.  Let us now
  define the $P_i$'s.

  Let us partition $\P^*_W = \P^*_{CW} \cup \P^*_{\bar{C}W},$ where
  $\P^*_{CW} = \{P \in \P^*_W \mid P \cap C \neq \emptyset\}$ and
  $\P^*_{\bar{C}W} = \{P \in \P^*_W \mid P \cap C = \emptyset\}$.  As
  paths in $\P^*_{\bar{C}W}$ use a vertex in $W,$ and no vertex in
  $C$, it intersects $B$ and by Property~\ref{@respondiendo} of
  Lemma~\ref{lemma_nicepair}, for any $P^*_i \in \P^*_{\bar{C}W}$
  there exists a unique $l_i$ and vertices $u_i, v_i, w_i$ such that
  $u_i \in B,$ $v_i \in B_{l_i},$ and $w_i \in W_{l_i}.$ Thus, we can
  partition
  $\P^*_{\bar{C}W} = \bigcup_{l \in \bar{F}} \P^{(*,l)}_{\bar{C}W},$
  where
  $\P^{(*,l)}_{\bar{C}W} = \{P^*_i \in \P^*_{\bar{C}W} \mid l_i=l\}.$
  Observe that
  \begin{itemize}
  \item any $P^*_i \in \P^{(*,l)}_{\bar{C}W}$ uses exactly one vertex
    ($w_i$) in $W_l$
  \item any $P^*_i \in \P^{(*,l)}_{\bar{C}W}$ uses at least one vertex
    ($v_i$) in $B_l,$ implying $|\P^{(*,l)}_{\bar{C}W}| \le |B_l|$
  \end{itemize}
  Let us now prove the following property used to restructure paths in
  $\P^{(*,l)}_{\bar{C}W}.$
  \begin{quote}
  \textsf{Property ($\textsf{P}_1$)}: For any $P^*_i \in
  \P^{(*,l)}_{\bar{C}W},$ if we replace $w_i$ by any $w'_i \in W_{l},$
  then $\{u_i,v_i,w'_i\}$ is still an induced 2-path.
  \end{quote}
  Indeed, by Property~\ref{@standardisation} of a bucket decomposition, $N(v_i) \cap
  V(W)=W_l,$ implying that $\{v_i,w'_i\} \in E(G).$

 Let us now prove that $\{u_i,w_i\} \in E(G)$ iff $\{u_i,w'_i\} \in
 E(G),$ which will imply that $\{u_i,v_i,w'_i\}$ is an induced 2-path.
 If $u_i \in B_\emptyset,$ then, by Property~\ref{@dispenserait} of a bucket
 decomposition, we have that $\{u_i,w_i\} \notin E(G) $ and
 $\{u_i,w'_i\} \notin E(G).$  If $u_i \in B_{l'},$ for $l' \neq l,$
 then, by Property~\ref{@standardisation} of a bucket decomposition, we have
 $\{u_i,w_i\} \notin E(G)$ and $\{u_i,w'_i\} \notin E(G).$ Finally, if
 $u_i \in B_{l},$ then, by Property~\ref{@standardisation} of a bucket
 decomposition, we have $\{u_i,w_i\} \in E(G)$ and $\{u_i,w'_i\} \in
 E(G).$
  
 Let us now define $\P_W.$ For any color $c$ of
 $(\tilde{G},χ)\langle W,B,C\rangle,$ let $M(c)$ be the edge of color
 $c$ in $M.$ Recall that a color $c$ can either belong to $B_l$ for
 some $l \in \bar{F}$ (in which case $|M(c)|=1$ and
 $V(M(c)) \subseteq W_l$) or belong to $C$ (in which case $|M(c)|=2$
 and $V(M(c)) \subseteq W_{\bar{F}}$). \stf{Moreover, as the partial
   decomposition $(W,B,C)$ is clean, for every vertex $c\in C$ there
   exists an edge of $(\tilde{G},χ)\langle W,B,C\rangle$ with color
   $c$ and then $M$ also contains an edge with color $c$ and $M(c)$ is
   well defined.}  Thus, for any $l \in \bar{F}$ and any
 $P^*_i \in \P^{(*,l)}_{\bar{C}W}$ (such that $P^*_i=\{u_i,v_i,w_i\}$
 with $v_i \in B_l$ and $w_i \in W_l$), we define
 $P_i = \{u_i,v_i,M(v_i)\}.$ For any $P^*_i \in \P^*_{CW},$ let
 $u_i \in P^*_i \cap C,$ and let $P_i = \{u_i\} \cup M(u_i).$
 
 Let us now prove Properties~\ref{@dishonourable}, \ref{@prevaleciera}, and
 \ref{@gentillesses}.  We start with Property~\ref{@dishonourable}.  For any $l
 \in \bar{F}$ and any $P^*_i \in \P^{(*,l)}_{\bar{C}W},$ $P_i$ is an
 induced 2-path as $M(v_i) \in W_l$ and according to Property
 ($\textsf{P}_1$), and $P_i \subseteq V(G')$ as $M(v_i) \in V(G')$ and
 $\{u_i,v_i\} \subseteq B.$ For any $P^*_i \in \P^*_{CW},$ $P_i$ is an
 induced 2-path as $M(u_i)$ is an edge of $G(S)$ of color
 $u_i.$ Moreover, $P_i \subseteq V(G')$ as $M(u_i) \subseteq V(G')$
 and $u_i \in C.$  Property~\ref{@gentillesses} is direct from the
 definition, and Property~\ref{@prevaleciera} is verified as, inside $W,$
 all used vertices are defined using the matching and, outside $W,$ we
 have Property~\ref{@gentillesses}.\medskip

 {\bf Size requirement.} Now our objective is to upper bound
 $|V(G')|.$ Let us assume first that $W \neq \emptyset,$ and thus that
 $(\tilde{G},χ)\langle W,B,C\rangle$ is not the empty graph.  Let us
 start with $|V(M)|.$ Recall that the set of colors in
 $(\tilde{G},χ)\langle W,B,C\rangle$ is
 $C \cup \bigcup_{i \in \bar{F}}B_i=C \cup B_{\neq \emptyset} .$ As
 edges of colors in $C$ has size two and edges of colors in
 $B_{\neq \emptyset}$ have size one, and $M$ contains one edge of each
 color, we get $|V(M)| = 2|C|+|B_{\neq \emptyset}|.$

We are ready to bound the size of $G'$:
  \begin{eqnarray*}
    |V(G')| &=& |B|+|C|+|V(M)| \\
    &=& |B_{\emptyset}|+|B_{\neq \emptyset}| + |C|+ 2|C|+|B_{\neq \emptyset}|\\
    & \le& |B_\emptyset|+|B_{\neq \emptyset}|(1+2c_1) + 3|C|\\
    & \le &(1+2c_1)s(W,B) + 3|C| \\
    & \le & (1+2c_1)^2|C^<| + 3|C|, \mbox{ according to the {size requirement}~{\ref{@conflictivitat}}} \\
    & \le & (1+2c_1)^2(|C^<| + |C|)\\
    & = & (1+2c_1)^2|C^0|\\
   & \le & 3(1+2c_1)^2k\\
  \end{eqnarray*}

If $W = \emptyset,$ then $(\tilde{G},χ)\langle W,B,C\rangle$ is the
empty graph and $M = \emptyset$ and the above equations still hold.
\end{proof}

\stf{Now, let us pay attention to the cases where the reduction rule
provides a small vertex cover. Notice that for these cases, we do not
necessarily need the considered partial decompositions to be clean.}

\begin{lemma}[Case 1 of small vertex cover]\label{lemma_safe_rule2P3}
  Let $(W,B,C)$ be a partial decomposition.
 Suppose that there is a non-empty set of colors  $X$ such that
    {$(\tilde{G},χ)\langle W,B,C\rangle[X]$} admits a vertex cover $T$ such
    that $|T| \le {c_1}|X|$ (see Figure~\ref{fig_partial_decompo}).
    Let $X^B = X \cap B$ and $X^C = X \cap C.$  Let $S(X^B) = \{i \in
    \bar{F} \mid X^B \cap B_i \neq \emptyset\}$ be the set of buckets
    containing a color in $X^B$ and $W(X^B) = \bigcup_{i \in  S(X^B)}W_i.$
    \begin{itemize}
\item (Case 1:) If $|X^B| \le |X^C|$, let $W' = W \setminus T,$ $B' = B \cup T \cup X^C,$ and $C'=C \setminus X^C.$ Then $(W',B',C')$ is a partial decomposition. 
    \end{itemize}
\end{lemma}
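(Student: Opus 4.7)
My plan is to verify each of the four requirements of Definition~\ref{def:partial_dec} for $(W', B', C')$. The partition property and the inclusion $C' \subseteq C^0$ will be immediate: one checks that $T \subseteq W$ (since $T$ is a vertex cover of an auxiliary multigraph with vertex set $W$) and $X^C \subseteq C$, so the sets $W' = W \setminus T$, $B' = B \cup T \cup X^C$, and $C' = C \setminus X^C$ form a partition of $V(G)$ with $C' \subseteq C \subseteq C^0$. The substantive part will be to establish the nice pair property of $(W', B')$ and the size inequality $s(W', B') \le (1 + 2c_1) |C^0 \setminus C'|$, noting that $C^0 \setminus C' = (C^0 \setminus C) \cup X^C = C^< \cup X^C$, a disjoint union.

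For the nice pair property, I will argue by contradiction. Assume an induced 2-path $P = \{u, v, w\}$ in $G[B' \cup W']$ satisfies $v, w \in W'$ with $u \in B'$. Since $B' = B \cup T \cup X^C$, I will distinguish three cases. If $u \in B$, then $P \subseteq B \cup W$ has at least two vertices in $W$, contradicting that $(W, B)$ is a nice pair. If $u \in T \subseteq W$, then $P \subseteq W \subseteq W^0$, but $W^0$ is a vertex-disjoint union of cliques and therefore cannot contain an induced 2-path. And if $u \in X^C \subseteq C$, then by the construction of $(\tilde G, \chi)\langle W, B, C\rangle$ the pair $\{v, w\}$ is an edge of color $u \in X$, so the fact that $T$ is a vertex cover of $(\tilde G, \chi)\langle W, B, C\rangle[X]$ forces $\{v, w\} \cap T \neq \emptyset$, contradicting $v, w \in W \setminus T$. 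The case where all three vertices of $P$ lie in $W'$ is subsumed by the same union-of-cliques argument.

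For the size inequality, I will track how each vertex of $B'$ contributes to $s(W', B')$. Vertices in $B_\emptyset$ have no neighbor in $W \supseteq W'$ and so remain in $B'_\emptyset$, contributing $1/(1 + 2c_1)$; vertices in $B_{\neq \emptyset}$ contribute at most $1$, regardless of whether they stay in $B'_{\neq \emptyset}$ or move to $B'_\emptyset$; and the newly added vertices in $T \cup X^C$ contribute at most $1$ each. Summing gives $s(W', B') \le s(W, B) + |T| + |X^C|$. I will then invoke the Case~1 hypothesis $|X^B| \le |X^C|$, which yields $|X| \le 2|X^C|$ and hence $|T| \le c_1 |X| \le 2 c_1 |X^C|$. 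Combined with the inductive size bound $s(W, B) \le (1 + 2c_1) |C^<|$, this will give
\[
s(W', B') \le (1 + 2c_1) |C^<| + (1 + 2c_1) |X^C| = (1 + 2c_1) |C^0 \setminus C'|.
\]

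The step I expect to require the most care is this final size bound: the slack $1/(1 + 2c_1)$ assigned to $B_\emptyset$ in the definition of $s$ is not what makes the accounting close here—rather, the target coefficient $(1 + 2c_1)$ on $|X^C|$ is matched on the nose by the factor of $2$ coming from $|X^B| \le |X^C|$. This is precisely why the lemma splits according to whether $|X^B|$ or $|X^C|$ dominates, and why a separate, more technical argument (invoking Property~\ref{geneprop5}) will be required to handle the complementary regime in Case~2.
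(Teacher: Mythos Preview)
Your proof is correct and follows essentially the same approach as the paper's own argument: the nice-pair verification is identical (case split on whether the third vertex $u$ lies in $B$, in $T$, or in $X^C$), and your per-vertex accounting for the size bound is a slightly cleaner presentation of the same inequality $s(W',B')\le s(W,B)+|T|+|X^C|$ that the paper derives via a set-manipulation with $Z_1=B'_\emptyset\setminus B_\emptyset$.
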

\begin{proof}

  \textbf{Partition requirements.}  The only non-trivial
 property is that $(W',B')$ is a nice pair. Now we will prove it using
 the definition of nice pair (rather than providing a bucket
 decomposition of $(W',B')$).  Let $Z = T \cup X^C.$  Recall that $W'
 = W \setminus T$ and $B'=B \cup Z.$  We clearly have that $W' \cap B'
 = \emptyset,$ $W' \subseteq W^0$ and thus we only have to prove that
 for any induced 2-path $P$ of $G[W' \cup B'],$ $|P \cap W'| \le 1.$
 Notice first that, as $T$ is a vertex cover of {$(\tilde{G},χ)\langle
   W,B,C\rangle[X]$}, it is in particular a vertex cover for all edges
 of colors $X^C,$ implying that for any edge $e$ with color $c \in
 X^C,$ $T \cap V(e) \neq \emptyset.$  This implies that there is no
 induced 2-path $P$ where $|P \cap X^C|=1$ and $|P \cap W'|=2,$ as $P
 \cap W'$ would be an edge (of a color in $X^C$) not covered by $T.$
 Moreover, there is no induced 2-path $P$ where $|P \cap T|=1$ and $|P
 \cap W'|=2$ as this would imply $P \subseteq W,$ and $G[W]$ is a
 disjoint union of cliques.  These last two observations imply that
 there is no induced 2-path $P$ where $|P \cap Z|=1$ and $|P \cap
 W'|=2.$  Moreover, as $(W,B)$ was a nice pair, there is also no
 induced 2-path $P$ where $|P \cap B|=1$ and $|P \cap
 W|=2.$ Therefore, there is no induced 2-path $P$ where $|P \cap B|=1$
 and $|P \cap W'|=2,$ as $W' \subseteq W.$ This implies that $(W',B')$
 is a nice pair.\smallskip
 
 \textbf{Size requirements.}  Let us prove that $s(W',B') \le
 (1+2c_1)|C^{'<}|.$ Recall first that $C^{'<}=C^0 \setminus C'$ and
 $C'=C \setminus X^C,$ implying that $C^{'<}=C^< \cup X^C.$  Notice
 also that $|T| \le c_1|X| = c_1(|X^B|+|X^C|) \le 2c_1|X^C|,$ as we
 are in Case 1.  Observe first that $B_\emptyset \subseteq
 B'_\emptyset,$ as $W' \subseteq W.$ This implies that there exists
 $Z_1 \subseteq Z$ such that $B'_\emptyset = B_\emptyset \cup Z_1.$
 
 We remark that $B_{\neq \emptyset} \subseteq B'_{\neq \emptyset}$ is not true as a vertex $v$ in a $B_i$ may move to $B'_\emptyset$ if  $W_i  \subseteq T$ (for example the case for vertices in $B_3$ in Figure~\ref{fig_partial_decompo}).
 We have
  \begin{eqnarray*}
    s(W',B') &=& \frac{|B'_\emptyset|}{1+2c_1}+|B'_{\neq \emptyset}| \\
    &=& \frac{|B_\emptyset|+|Z_1|}{1+2c_1}+|B'_{\neq \emptyset}|\\
    &\le& \frac{|B_\emptyset|}{1+2c_1}+|Z_1|+|B'_{\neq \emptyset}|
  \end{eqnarray*}
Now, observe that
  \begin{eqnarray*}
    B'_\emptyset \cup B'_{\neq \emptyset}  &=& B \cup Z \\
    \Rightarrow B_\emptyset \cup Z_1 \cup B'_{\neq \emptyset}  &=& B_\emptyset \cup B_{\neq \emptyset} \cup Z \\
    \Rightarrow  Z_1 \cup B'_{\neq \emptyset}  &=& B_{\neq \emptyset} \cup Z 
  \end{eqnarray*}
  Thus, we get
  \begin{eqnarray*}
    s(W',B') &\le& \frac{|B_\emptyset|}{1+2c_1}+ |B_{\neq \emptyset}|+ |Z| \\
    &\le& \frac{|B_\emptyset|}{1+2c_1}+ |B_{\neq \emptyset}|+ |T|+|X^C| \\
    &\le& \frac{|B_\emptyset|}{1+2c_1}+ |B_{\neq \emptyset}|+ (1+2c_1)|X^C| \\
    &=& s(W,B)+ (1+2c_1)|X^C| \\
    &\le&(1+2c_1)|C^{<}|+ (1+2c_1)|X^C|, \mbox{ as $(W,B,C)$ is a partial decomposition}\\
    &=&(1+2c_1)|C^{'<}|
  \end{eqnarray*}

\end{proof}
  
Let us now prove that the vertex cover and colors we add in the kernel
output in Case 2 are small.

 \begin{lemma}[Case 2 of small vertex cover]\label{lemma_safe_rule1P3}
   Let $(W,B,C)$ be a partial decomposition.
 Suppose that there is a non-empty set of colors  $X$ such that
    {$(\tilde{G},χ)\langle W,B,C\rangle[X]$} admits a vertex cover $T$ such
    that $|T| \le {c_1}|X|$ (see Figure~\ref{fig_partial_decompo}).
    Let $X^B = X \cap B$ and $X^C = X \cap C.$  Let $S(X^B) = \{i \in
    \bar{F} \mid X^B \cap B_i \neq \emptyset\}$ be the set of buckets
    containing a color in $X^B$ and $W(X^B) = \bigcup_{i \in  S(X^B)}W_i.$
    \begin{itemize}
\item (Case 2:) If $|X^B| > |X^C|$, let $W' = W \setminus W(X^B),$ $B'= B \cup W(X^B),$ and $C'=C.$ Then $(W',B',C')$ is a partial decomposition. 
    \end{itemize}
 \end{lemma}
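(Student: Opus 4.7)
My plan is to verify directly the three partition requirements and the single size requirement of Definition~\ref{def:partial_dec} for the triple $(W',B',C') = (W \setminus W(X^B),\, B \cup W(X^B),\, C)$. The identities $W' \cup B' \cup C' = V(G)$, $C' \subseteq C^0$, and $W' \subseteq W^0$ are immediate, so the only non-trivial part of the partition requirements is the niceness of $(W',B')$, and the whole content of the size bound will hinge on one combinatorial observation relating $T$ and $W(X^B)$.

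For niceness I would argue by contradiction: assume an induced 2-path $P \subseteq W' \cup B'$ with $|P \cap W'| \ge 2$. Since $W' \subseteq W \subseteq W^0$ and $W^0$ is a disjoint union of cliques, the two vertices of $P \cap W'$ lie in a common $W_i$, and since $W' \cap W_j = \emptyset$ for every $j \in S(X^B)$, we necessarily have $i \notin S(X^B)$. The third vertex $x$ of $P$ lies in $B' = B \cup W(X^B)$. If $x \in B$, then $P$ is an induced 2-path inside $W \cup B$ with two vertices in $W$, contradicting the niceness of $(W,B)$. If $x \in W(X^B)$, then $x$ belongs to some $W_j$ with $j \in S(X^B)$, so $j \neq i$; but then $x$ has no neighbour in $W_i$ (distinct cliques of $W^0$ are non-adjacent), contradicting $P$ being a path.

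The size requirement $s(W',B') \le (1+2c_1)|C^{'<}|$ reduces, since $C' = C$ (so $C^{'<} = C^<$), to proving $s(W',B') \le s(W,B)$. The key observation is that $W(X^B) \subseteq T$: for every $i \in S(X^B)$ we can pick some $u \in X^B \cap B_i$, and by the construction of $(\tilde{G},\chi)\langle W,B,C\rangle$ there is a loop at every $v \in W_i$ with colour $u$, so the vertex cover $T$ of the $X$-coloured edges must contain all of $W_i$. Combined with $|T| \le c_1|X|$ and the Case 2 hypothesis $|X^C| < |X^B|$, this gives $|W(X^B)| \le |T| \le 2c_1|X^B|$. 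A routine tracking of the new bucket decomposition, noting that vertices of $B_\emptyset$, of $B_{S(X^B)}$ and of $W(X^B)$ all land in $B'_\emptyset$ while $B'_{\neq\emptyset} = B_{\bar{F}\setminus S(X^B)}$, yields
\[
s(W',B') - s(W,B) \;=\; \frac{|W(X^B)|\, -\, 2c_1\,|B_{S(X^B)}|}{1+2c_1},
\]
which is non-positive because $|X^B| \le |B_{S(X^B)}|$ (at least one chosen colour per non-empty bucket in $S(X^B)$) and $|W(X^B)| \le 2c_1|X^B|$.

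The main delicate point will be this last numerical step. The size function $s$ is weighted precisely so that converting a clique $W_i$ into the empty-neighborhood part $B'_\emptyset$ (where each vertex costs only $\tfrac{1}{1+2c_1}$) is paid for by the corresponding decrease in $|B_{\neq \emptyset}|$, provided one has $|W(X^B)| \le 2c_1|B_{S(X^B)}|$. The Case 2 hypothesis $|X^B| > |X^C|$ is exactly what converts $|T| \le c_1|X|$ into this bound, and the definition of $S(X^B)$ is exactly what gives $|X^B| \le |B_{S(X^B)}|$. No other subtle estimate seems needed; the proof should go through once these observations are combined.
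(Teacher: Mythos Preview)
Your proposal is correct and follows essentially the same approach as the paper: the key Claim~1 ($|W(X^B)| \le 2c_1|B_{S(X^B)}|$), obtained from $W(X^B) \subseteq T$, $|T| \le 2c_1|X^B|$, and $|X^B| \le |B_{S(X^B)}|$, drives the size inequality, and the new bucket structure is exactly $B'_\emptyset = B_\emptyset \cup W(X^B) \cup B_{S(X^B)}$, $B'_{\neq\emptyset} = B_{\bar F \setminus S(X^B)}$, so your displayed identity for $s(W',B')-s(W,B)$ is precisely what the paper computes. One small imprecision in your niceness argument: the two $W'$-vertices of $P$ need not lie in a common $W_i$ (they could be the two non-adjacent endpoints, sitting in distinct cliques $W_i$ and $W_{i'}$); in that residual case the third vertex $x$ is the middle one, and if $x\in W(X^B)$ then $x\in W_j$ would have to equal both $i$ and $i'$, again a contradiction---so the argument closes with the obvious extra line. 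The paper instead establishes niceness by exhibiting the bucket decomposition of $(W',B')$ directly and invoking \autoref{lemma_nicepair}; your direct argument is an equally valid route.
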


    \begin{proof}
      Let us first prove the following claim.
\begin{claim}
\label{claim:claim1}
 We have $|W(X^B)| \le 2c_1 |B_{S(X^B)}|.$
\end{claim}

{\it Proof of Claim~1.}  Notice first that, as $T$ is a vertex cover
of $(\tilde{G},χ)\langle W,B,C\rangle[X],$ then for any $i \in S(X^B)$
and any color $v \in X^B \cap B_i,$ as
$(\tilde{G},χ)\langle W,B,C\rangle[X]$ contains all edges
$\{u\} \in W_i$ of color $v,$ it implies that $T \supseteq W_i.$ This
implies that $T \supseteq W(X^B).$ Moreover, by
Corollary~\ref{cor_rainbowvc},
$|T| \le c_1|X| = c_1(|X^B|+|X^C|) \le 2c_1|X^B|,$ as we are in Case
2.  As $|W(X^B)| \le |T|$ and $|X^B| \le|B_{S(X^B)}|,$ we obtain the
result of the claim.\hfill \claimqed
\medskip

  Let us now prove that $(W',B',C')$ is a partial decomposition.\medskip

  \textbf{Partition requirements.}  The only non-trivial
 property is that $(W',B')$ is a nice pair, which we will prove by
 providing a bucket decomposition of $(W',B').$  Let us define the
 following partitions which, informally, correspond to the result of
 moving (for any $i \in S(X^B)$) each $W_i$ to $B_i$:
 \begin{itemize}
 \item the partition $\{W'_i\mid i \in [i_0]\}$ of $W',$ where $W'_i =
   W_i$ if $i \notin S(X^B)$ and $W'_i = \emptyset$ otherwise,
 \item the partition $\{B'_i\mid i \in [i_0]\}$ of $B',$ where $B'_i =
   B_i$ if $i \notin S(X^B)$ and $B'_i = \emptyset$ otherwise,
 \item $B'_\emptyset = B_\emptyset \cup W(X^B) \cup B_{S(X^B)}$
 \end{itemize}
 Let us check that these partitions verify conditions of a bucket
 decomposition. The only non-trivial part is to verify that all
 vertices $v$ added to $B_\emptyset$ have $N(v) \cap W' =
 \emptyset.$ If $v \in W(X^B),$ then $N(v) \cap W' = \emptyset,$ as
 there we no edges between the $W_i$'s.  If $v \in B_{S(X^B)},$ where
 $v \in B_i$ for some $i \in S(X^B),$ then, as $N(v) \cap W = W_i$ and
 $W_i \subseteq W(X^B),$ we also get $N(v) \cap W' = \emptyset.$
 Thus, Property~\ref{@licenciadillo} of Lemma~\ref{lemma_nicepair} implies that
 $(W',B')$ is a nice pair.\medskip
 
  \textbf{Size requirements.}
 Let us prove that $s(W',B') \le (1+2c_1)|C^{'<}|.$ Recall first that $C^{'<}=C^<,$ as $C'=C.$
 As $(W,B,C)$ is a partial decomposition, we have $s(W,B) \le (1+2c_1)|C^{<}|,$ and thus it only remains to prove that $s(W',B') \le s(W,B).$
 By definition we have $s(W,B)=\frac{|B_\emptyset|}{1+2c_1}+|B_{\neq \emptyset}|.$
 Recall that $B_{\neq \emptyset}=B_{\bar{F}},$ and observe that there is a partition
 $\{B_{S(X^B)}, B_{\bar{F} \setminus S(X^B)}\}$ of   $B_{\bar{F}}.$
 This implies that $s(W,B) =\frac{|B_\emptyset|}{1+2c_1}+|B_{S(X^B)}| + |B_{\bar{F} \setminus S(X^B)}|.$
 Now, observe that the bucket decomposition of $(W',B')$ is exactly the partition of $W'$ and $B'$ defined above and that
 $B'_{\neq \emptyset} = B_{\bar{F} \setminus S(X^B)}.$
 This implies that
  \begin{eqnarray*}
    s(W',B') &=& \frac{|B'_\emptyset|}{1+2c_1}+|B'_{\neq \emptyset}| \\
    &=& \frac{|B_\emptyset|+|W(X^B)|+|B_{S(X^B)}|}{1+2c_1}+|B_{\bar{F} \setminus S(X^B)}|\\
    &\le& \frac{|B_\emptyset|+(1+2c_1)|B_{S(X^B)}|}{1+2c_1}+|B_{\bar{F} \setminus S(X^B)}|, \mbox{ by Claim 1}\\
    &=& s(W,B)
  \end{eqnarray*}
 
\end{proof}

\subsection{Analysis of the overall kernel}
\stf{We are now ready to define the unique, other than the cleaning
  phases, rule for our kernelization. This rule summarizes the
  different cases arising in the previous lemmas.}

\begin{definition}[Reduction Rule for \IPP]~ \stf{Given a clean partial
  decomposition $(W,B,C)$, with the associated partition for $B$ and
  $W$ denoted by
  $(B_\emptyset, B_1,\dots, B_{i_0}, W_1,\dots ,W_{i_0})$ with
  $F = \{i \in [i_0] \mid W_i = \emptyset\}$ and
  $\bar{F} = [i_0] \setminus F$, let us define the output $R(W,B,C)$
  of the rule $R$ as follows:}
  \begin{itemize}
  \item Compute using Corollary~\ref{cor_rainbowvc} (with a suitable
    ε precised later) if there exists a rainbow matching
    $M$ in the $p$-edge-colored mutigraph $(\tilde{G},χ)\langle W,B,C\rangle$ (where $p=|C|+\sum_{i\in
      [i_0]}|B_{i}|$ -- notice that $p=\O(|V(G)|)$).
  \item If $M$ exists, then return $V(M) \cup B \cup C.$
  \item Otherwise, let $X$ be the non-empty set of colors such that
    {$(\tilde{G},χ)\langle W,B,C\rangle[X]$} admits a vertex cover $T$ such
    that $|T| \le {c_1}|X|$ (see Figure~\ref{fig_partial_decompo}).
    Let $X^B = X \cap B$ and $X^C = X \cap C.$  Let $S(X^B) = \{i \in
    \bar{F} \mid X^B \cap B_i \neq \emptyset\}$ be the set of buckets
    containing a color in $X^B$ and $W(X^B) = \bigcup_{i \in
      S(X^B)}W_i.$
\item If $|X^B| \le |X^C|$ (case 1),  let $W' = W \setminus T,$ $B' = B \cup T
  \cup X^C,$ and $C'=C \setminus X^C.$ Return $(W',B',C').$
\item Otherwise (case 2), let $W' = W \setminus W(X^B),$ $B'
  = B \cup W(X^B),$ and $C'=C.$ Return  $(W',B',C').$
 \end{itemize}
\end{definition}

\noindent
We then obtain the following.

\begin{lemma}\label{lemma:ruleIPP}
  \stf{Given a clean partial decomposition $(W,B,C)$, Rule $R(W,B,C)$ either returns}:
  \begin{itemize}
    \item a set $A\subseteq V(G)$ such that, if $G'=G[A],$ then $(G',k)$
      and $(G,k)$ are equivalent instances of \IPP\ and $|A|\leq 3(1+2c_1)^2k.$
    \item or a partial decomposition $(W',B',C')$ such that $|\bar{F'}|+|C'| < |\bar{F}|+|C|$
  \end{itemize}
\end{lemma}

\begin{proof}
  If $R$ finds a rainbow matching $M$ in
  $(\tilde{G},χ)\langle W,B,C\rangle$, then
  \autoref{lemma_safe_rainbow} immediately implies that the set
  $A=V(M) \cap B \cup C$ verifies the claimed properties.  Let us now
  consider that $R$ does not find a rainbow matching.

  If $R$ falls into Case 1, then according to
  Lemma~\ref{lemma_safe_rule2P3}, $(W',B',C')$ is a partial
  decomposition.  Moreover, in Case 1, as $X \neq \emptyset,$ either
  $X^B \neq \emptyset,$ implying that $S(X^B) \neq \emptyset$ and thus
  that all cliques $W_i$ for $i \in S(X^B)$ become empty and therefore
  $|\bar{F}|$ strictly decreases (and $|C|$ does not increase).
  Otherwise, $X^C \neq \emptyset,$ implying that $|C|$ strictly
  decreases (and $|\bar{F}|$ does not increase).

  If $R$ falls into Case 2, then according to
  Lemma~\ref{lemma_safe_rule1P3}, $(W',B',C')$ is a partial
  decomposition.  Moreover, in Case 2, $X^B \neq \emptyset,$ implying
  as previously that $|\bar{F}|$ strictly decreases (and $|C|$ does
  not change).
\end{proof}

Finally, we can prove the kernelization algorithm for \IPP stated in
\autoref{secondi_ess}.

\begin{proof}[Proof of \autoref{secondi_ess}]
  \stf{Given an input $(G,k),$ we define the kernelization Algorithm
    \texttt{B} which starts with a greedy localization phase, as
    explained in \autoref{@challengingly}. Assume that it {does {not}
      find a packing of size $k$} and therefore computes a greedy
    localized pair $(C^0,W^0).$ We then consider the partition
    $(W_0,\emptyset ,C_0)$ of $V(G)$. It is straightforward to check
    that this partition is a partial decomposition of $G$. Using
    Lemma~\ref{lem:clean-I2PP}, we then obtain the first clean partial
    decomposition $(W,B,C)$ of the process.  Now, a step of
    Algorithm~\texttt{B} will be made of Reduction Rule $R$ for \IPP
    and a cleaning phase. Algorithm~\texttt{B} exhaustively performs
    steps, obtaining a clean partial decomposition at the end of each
    step and stopping only when it
    falls into the matching case.\\
    Let us prove, by induction on $|\bar{F}|+|C|,$ that applying
    exhaustively steps of Algorithm~\texttt{B} terminates in
    polynomial time and outputs an equivalent instance $(G',k)$ where
    $|V(G')| \le 3(1+2c_1)^2k.$ In order to this, notice first that in
    the case where Rule $R$ applied on a clean partial decomposition
    $(W,C,B)$ returns a partial decomposition $(W',B',C')$ with
    $|\bar{F'}|+|C'| < |\bar{F}|+|C|$, we apply a cleaning phase on
    $(W',B',C')$ to obtain a clean partial decomposition
    $(W'',B'',C'')$. Then it is easy to show that we also have
    $|\bar{F''}|+|C''| < |\bar{F}|+|C|$. Indeed, in the cleaning
    phase, the set $W'$ is unchanged, so it is for $F'$, and the size
    of $C'$ can only decrease. We obtain
    $|\bar{F''}|+|C''|\le |\bar{F'}|+|C'| < |\bar{F}|+|C|$ as desired.\\
    Now, we can finish the analysis of the process.  If
    $\bar{F} = C = \emptyset,$ then $W = \emptyset,$
    $(\tilde{G},χ)\langle W,B,C\rangle$ is the empty graph, and we
    consider that $R$ returns the rainbow matching
    $M=\emptyset$. Thus, according to \autoref{lemma:ruleIPP}, the
    rule $R$ outputs a set $A\subseteq V(G)$ such that, if $G'=G[A],$
    then $(G',k)$ and $(G,k)$ are equivalent instances of \IPP\ and
    $|A|\leq 3(1+2c_1)^2k.$ Now, if $|\bar{F}|+|C| >0$, it is
    immediate, using induction, by \autoref{lemma:ruleIPP} and the
    previous remarks about cleaning phases, that \texttt{B} terminates
    in polynomial time and outputs an equivalent
    instance $(G',k)$ where $|V(G')| \le 3(1+2c_1)^2k.$\\
    As $c_{1}=4+ε,$ we conclude that $|A|≤(243+12ε^2+108ε)k=(243+ε')k$
    for a suitable $ε'$ as required.}
\end{proof}


\section{Linear kernel for \IPHS}
\label{@linearKernelIPHS}

\stf{In this section we focus on the induced 2-paths hitting set probem,
restated below.}

\defparproblem{{\sc Induced 2-paths Hitting Set} (\IPHS)}{$(G,k)$ where
  $G$ is a graph and $k \in \N$}{$k.$}{Is there an induced
  2-paths Hitting Set of $G$ of size at least $k$?}

\stf{We obtain a linear kernel for this problem, which is obtained by the
same algorithm that the one designed in the previous section. The only
thing that we will have to check is that, when the algorithm stops, we
obtain an equivalence instance than the input instance for \IPHS.}

\begin{theorem}
\label{theo:kernelIPHS}
\stf{There exists some function $f:\mathbb{N}\to\mathbb{N}$ such that for
every $ε>0,$ there exists an algorithm that, given an instance $(G,k)$
of \IPHS\ outputs a set $A\subseteq V(G)$ such that $(G[A],k)$ is an
equivalent instance where $|A|≤(243+ε)k.$ Moreover this algorithm runs
in time $\O(|V(G)|^{f(ε)}).$ In other words, \IPHS admits a kernel of a
linear number of vertices.}
\end{theorem}

\stf{The key tool to obtain the linear kernel for \IPHS is the following
lemma which is the analog of Lemma~\ref{lemma_safe_rainbow} for \IPHS.
All the notations and definitions follow previous section.}

\begin{lemma}[Case of rainbow matching for \IPHS]\label{lem:rainbow-hitting-P2}
\stf{Let $(W,B,C)$ be a clean partial decomposition. Suppose that the colored multigraph
$(\tilde{G},χ)\langle W,B,C\rangle$ admits a rainbow matching $M$.
 Let $A=V(M) \cup B \cup C,$ and $G'=G[A].$ Then,
  \begin{enumerate}
  \item $(G,k)$ and $(G',k)$ are equivalent instances of \IPHS,
  \item $|A| \le 3(1+2c_1)^2k.$
  \end{enumerate}}
\end{lemma}

\begin{proof}
\stf{The size requirement concerning $A$ follows from
  Lemma~\ref{lemma_safe_rainbow}. Let us prove that $(G,k)$ and
  $(G',k)$ are equivalent instances of \IPHS. As $G'$ is an induced
  subgraph of $G$, it is clear that if $G$ admits a 2-induced paths
  hitting set of size at most $k$, then it is also the case for $G'$.\\
  For the converse direction, assume that $G'$ admits a 2-induced
  paths hitting set $X$ of size at most $k$, and let us see how to
  build one for $G$. By removing vertices from $X$ if necessary, we
  can assume that $X$ is a induced 2-paths hitting of $G'$ minimal by
  inclusion. Let us first precise the structure of $X$. As $(W,B,C)$
  is clean, for every vertex $c$ of $C$ there exists at least one edge
  of $(\tilde{G},χ)\langle W,B,C\rangle$ with color $c$. And as $M$ is
  a rainbow matching of this graph, there exist $v_c$ and $w_c$ in $W$
  such that $v_cw_c$ is an edge of $M$ and so, such that
  $\{v_c,w_c,c\}$ induces a 2-path $P_c$ of $G$. Notice that $X$ has
  to intersect $P_c$ for every vertex $c$ of $C$. Denote by $M_C$ the
  set $\bigcup_{c\in C}\{v_c,w_c\}$ and by $M_B$ the set
  $M\setminus M_C$. By the previous remark, we obtain that
  $|X\cap (M_C\cup C)|\ge |C|$. Let us focus now on $M_B$.\\
  For any $i\in [i_0]$, for every vertex $b\in B_i$, every vertex of
  $W_i$ receives color $b$. So, as $M$ is a rainbow matching of
  $(\tilde{G},χ)\langle W,B,C\rangle$, it has to contain a vertex of
  $W_i$ with color $b$. That is $M_B$ contains exactly $|B_i|$
  vertices of $W_i$ for every $i\in [i_0]$. Moreover, assume that for
  $i\in [i_0]$ the set $X$ contains a vertex $x$ of $M_B\cap W_i$. By
  minimality of $X$ there exists an induced 2-path $P$ of $G'$ with
  $X\cap V(P)=\{x\}$. As $(W,B)$ is a nice pair, we know that $P=xby$
  with $b\in B_i$ and $y\in B\cup C$. In particular, by
  Lemma~\ref{lemma_nicepair}, for any $x'\in M_B\cap W_i$ the path
  $x'by$ is also an induced 2-path. Thus, as $b\notin X$ and
  $y\notin X$ we must have $x'\in X$, and in all, we obtain
  $M_B\cap W_i\subseteq X$.}

\stf{Now, we can modify $X$ in order to obtain an induced 2-paths hitting
  set of $G$. Denote by $J$ the subset of $[i_0]$ corresponding to the
  indices of $W_i$'s intersected by $X$, that is
  $J=\{i\in [i_0]\ :\ X\cap W_i\neq \emptyset \}$. Let us define
  $X'=C\cup (X\cap B) \cup \bigcup_{i\in J} B_i$ and show that $X'$ is
  induced 2-paths hitting set of $G$ and has size no more than $|X|$.
  For the latter property, using the previous remarks, we have:
  \begin{eqnarray*}
    |X'| &\le& |C|+ |(X\cap B)|+ \sum_{i\in J} |B_i|\\
         &\le& |X\cap (C\cup M_C)|+|(X\cap B)|+\sum_{i\in J}|M_B\cap W_i|\\
         &\le& |X\cap (C\cup M_C)|+|(X\cap B)|+|X\cap M_B|\\
         &\le& |X|
  \end{eqnarray*}
  To prove that $X'$ is an induced 2-paths hitting set of $G$, assume
  by contradiction that $P=xyz$ is an induced 2-path of
  $G\setminus X'$. As $C\subseteq X'$ we have $V(P)\subseteq W\cup B$,
  and as $(W,B)$ is a nice pair, we have for instance $x\in W$,
  $y\in B$ and $z\in B$. So, by Lemma~\ref{lemma_nicepair}, there
  exists more precisely $i\in [i_0]$ such that $x\in W_i$ and
  $y\in B_i$. The path $P$ is not intesected by $X'$, meaning in
  particular that $i\notin J$ (as otherwise we would have
  $B_i\subseteq X'$). So we have $y\notin X$ and $z\notin X$ (as
  $X'\cap B=X\cap B$) and $W_i\cap X=\emptyset$. However, $M$ contains
  at least one vertex $x'$ of $W_i$, with color $y$ for instance. But
  then $x'yz$ is an induced 2-path of $G'$ not intersected by $X$, a
  contradiction.}
\end{proof}

\stf{Now, using Lemma~\ref{lem:rainbow-hitting-P2} instead of
  Lemma~\ref{lemma_safe_rainbow} in the proof of
  Lemma~\ref{lemma:ruleIPP}, we directly obtain the analog of this
  latter one for \IPHS.}

\begin{lemma}\label{lemma:ruleIPHS}
  \stf{Given a clean partial decomposition $(W,B,C)$, Rule $R(W,B,C)$ either returns:
  \begin{itemize}
    \item a set $A\subseteq V(G)$ such that, if $G'=G[A],$ then $(G',k)$
      and $(G,k)$ are equivalent instances of \IPHS\ and $|A|\leq 3(1+2c_1)^2k.$
    \item or a partial decomposition $(W',B',C')$ such that $|\bar{F'}|+|C'| < |\bar{F}|+|C|$
  \end{itemize}}
\end{lemma}

\stf{Now, proof of Theorem~\ref{theo:kernelIPHS} works the same than
  the kernelization process for \IPP, that
  is~\autoref{secondi_ess}. We start by computing a localized pair
  $(C^0,W^0)$. The set $C^0$ induces a packing of induced 2-paths. If
  there is more than $k$ induced 2-paths in the packing, then $G$ has
  no induced 2-paths hitting set of size at most $k$. Otherwise, we
  consider the initial partial decomposition $(W_0,\emptyset ,C_0)$ of
  $V(G)$. Then, we exhaustively alternate a cleaning phase with an
  application of the rule $R$, until this last one falls into the
  matching case. As in the proof of~\autoref{secondi_ess}, using
  Lemma~\ref{lemma:ruleIPHS} an induction on $|\bar{F}|+|C|$ shows
  that this later case appears after a polynomial number of steps.
  Then we conclude with Lemma~\ref{lem:rainbow-hitting-P2}.
 } 


\section{An (almost) linear kernel for \TPT}
\label{@industriousness}

\subsection{Notations}
Given a tournament $T,$ a \emph{triangle} $\Delta$ in $T$ is a
subgraph on three vertices where each vertex has in-degree and
out-degree exactly one, i.\@e.\@ a directed cycle of length three.  A
\emph{triangle-packing} $\P = \{\Delta_i, i \in [x]\}$ is a set of
vertex-disjoint triangles of $T.$  The \emph{size} of $\P$ is
$|\P|=x.$  
\defparproblem{\textsc{triangle-packing in
    Tournament}\ (\TPT)}{$(T,k)$ where $T$ is a tournament and $k
  \in \N$}{$k$}{Is there a triangle-packing of size at least $k$?}

In this section we prove the following theorem.

%
%

\begin{theorem}
\label{main_treh}
There exists an algorithm that, given an instance $(T,k)$ of \TPT
outputs a set $S\subseteq V(T)$ such that $(T[S],k)$ is an equivalent
instance of $(T,k)$ where, for every $\delta$ with $1<\delta \le 2,$ we
have $|S|\leq 6534\cdot c(δ)\cdot  k^\delta$ (where $c(\delta)=\max(\frac{20}{(2^\delta-2)},(\frac{21}{\delta})^\frac{1}{\delta-1})$).
In other words, for any $\delta$ with $1 < \delta \le 2$ \TPT admits a 
kernel with $6534\cdot c(\delta)k^\delta$ vertices.
\end{theorem}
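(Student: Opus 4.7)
The plan is to instantiate the rainbow matching framework of \autoref{hgfhjgfdyug676gs76gsiysfgdgdsfuisgf} for the \TPT problem, with specific adaptations that account for the directed nature of triangles in tournaments and that give rise to the $k^\delta$ kernel bound. First I would carry out a greedy localization phase: compute a maximal vertex-disjoint directed triangle packing $\mathcal{P}^0$; if $|\mathcal{P}^0|\ge k$ we are already done, otherwise set $C^0=V(\mathcal{P}^0)$ (so $|C^0|\le 3(k-1)$) and $W^0=V(T)\setminus C^0$. The key structural observation is that $T[W^0]$ contains no directed triangle and is therefore a transitive sub-tournament, which in particular linearly orders $W^0$. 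I would then define the analogues of nice pair $(W,B)$, bucket decomposition and partial decomposition that fit this directed setting: any directed triangle in $T[W\cup B]$ uses at most one vertex of $W$, and each bucket $B_i$ records, for a specific ``surviving'' locus in $W^0$, the set of vertices in $B$ able to complete a directed triangle with vertices of $W$ at that locus. Because $T[W^0]$ is transitive, the natural refinement is to index buckets by the relevant vertex (or pair of vertices) of $W^0$; this richer bucket indexing is where the additional technicality over \IPP appears.

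For each round I would build the auxiliary $p$-edge-coloured multigraph $(\tilde T,\chi)\langle W,B,C\rangle$: for every $c\in C$ and every $\{u,v\}\subseteq W$ such that $T[\{c,u,v\}]$ is a directed triangle, add an edge $\{u,v\}$ with colour $c$; for every bucket colour $b\in B_i$, add loops at appropriate vertices of $W_i$ encoding those triangles that use exactly one vertex of $W$, the bucket vertex $b$, and a third vertex that may be recovered anywhere in $B$. I would then apply \autoref{cor_rainbowvc}. If a rainbow matching $M$ is returned, the algorithm halts and outputs $T[V(M)\cup B\cup C]$; safeness is proved by the same restructuring argument sketched in the generic overview, now written for directed triangles. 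Otherwise the corollary supplies a colour subset $X$ and a vertex cover $Y$ of $(\tilde T,\chi)\langle W,B,C\rangle[X]$ of size at most $(4+\varepsilon)|X|$; splitting on whether $|X\cap C|$ or $|X\cap B|$ dominates yields the two update rules for the partial decomposition, exactly in the spirit of \autoref{lemma_safe_rule2P3} and \autoref{lemma_safe_rule1P3}.

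The heart of the argument is the recursive control of the size measure $s(W,B)$ of the partial decomposition. Unlike the linear invariant of \IPP, for \TPT the invariant takes the super-linear form $s(W,B)\le c(\delta)\cdot|C^{<}|^{\delta}$ for the parameter $\delta\in(1,2]$ fixed in advance. Each round of the rule either terminates the algorithm (and outputs a kernel whose total vertex count is dominated by $s(W,B)+\mathcal{O}(|C|)$) or strictly decreases $|\bar F|+|C|$ while preserving this invariant up to the multiplicative constant $c(\delta)$. The term $20/(2^\delta-2)$ will arise from the geometric summation that governs how bucket mass accumulates across successive Case~2 absorptions of chunks of $W$ into $B$, so it is essentially a series $\sum_{t\ge 0}(2^{1-\delta})^t$; the term $(21/\delta)^{1/(\delta-1)}$ will be the base-case constant required to make the induction close when $|C^{<}|$ is still small, coming from solving an inequality of the form $x\le 21\,x^{1-\delta}$ for $x$. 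Combining these yields the claimed bound $|S|\le 6534\cdot c(\delta)\cdot k^\delta$ for every fixed $\delta\in(1,2]$; balancing $\delta$ against $k$ (roughly $\delta=1+\Theta(1/\sqrt{\log k})$) then translates this into the almost-linear $k^{1+\mathcal{O}(1)/\sqrt{\log k}}$ bound advertised in the abstract.

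The main obstacle is the bucket analysis in the directed setting. In \IPP a bucket was cleanly characterised by the clique of $W^0$ whose vertices it was adjacent to; in \TPT the analogue must keep track of which ordered pair of vertices of $W^0$ can complete a directed triangle with a given bucket vertex, and this interacts non-trivially with both edge orientations and the transitive order on $W^0$. Verifying that buckets still partition $B$ coherently, that rainbow matchings on $(\tilde T,\chi)\langle W,B,C\rangle$ genuinely lift (via restructuring) to triangle packings in $T$, and in particular that the Case~2 absorption step preserves the super-linear size invariant with the stated constants, is where the ``more technical'' nature of the \TPT application resides and is the step I expect to demand the most careful book-keeping.
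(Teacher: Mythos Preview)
Your high-level plan follows the rainbow matching template, but the concrete instantiation you sketch is essentially the \IPP\ construction carried over verbatim, and the paper explains explicitly why that does \emph{not} give $k^\delta$ for \TPT. The key pieces you are missing are:

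\textbf{Buckets are indexed by single positions, not pairs.} Since $T[W^0]$ is transitive, each $v\in B$ has a unique insertion point $i$ in the linear order on $W$ (dominated by $W_{[1,i-1]}$, dominating $W_{[i,t_0]}$); this defines $B_i$. A triangle in $T[W\cup B]$ with a vertex in $W$ then uses one vertex from $B_{l}$, one from $B_{r}$ for some $l<r$, and any vertex of $W_{[l,r[}$. So the relevant combinatorial object is the \emph{bucket interval} $I=(l,r)$, not an ordered pair of $W^0$-vertices.

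\textbf{The auxiliary graph is not ``one loop colour per bucket vertex''.} Your proposal (loops coloured by $b\in B_i$) is exactly the \IPP\ scheme. For \TPT\ the paper instead defines, for each bucket interval $I$, a value $\mu(I)=\min\bigl(\Sigma(I),m(I)\bigr)$ (with $\Sigma(I)=\sum_{i\in B(I)}|S_i|$ and $m(I)=\sum_{i\in B(I)}|B_i|-\max_i|B_i|$), computes a \emph{demand} $(\mathcal{I},\textsf{val})$ by a bottom-up pass over intervals, and creates $\textsf{val}(I)$ fresh loop colours on $W(I)$ for each $I\in\mathcal{I}_{>0}$. The paper has a whole paragraph showing that either simpler choice $\mu=\Sigma$ or $\mu=m$ alone leads to $\Omega(k^2)$ or $\Omega(k^{\log 5})$ respectively; your per-bucket-vertex loops correspond to $\mu=\Sigma$ and would not give $k^\delta$.

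\textbf{The size invariant is local, not global.} You propose $s(W,B)\le c(\delta)|C^{<}|^\delta$. The paper instead maintains, for every bucket $i$, a per-bucket bound $|B_i^{W_2}|\le c\,|S_i|^\delta$ (together with a linear global bound $|B^{W_1}|\le 10|B^C|$). This local invariant is exactly what survives the $\textsf{add}_2$ operation, which \emph{merges all buckets inside an interval $I$} into one new bucket; proving that the merged bucket still satisfies $|B_{r_I}'^{W_2}|\le c|S_{r_I}'|^\delta$ given $|U|\le 10\mu(I)$ is a delicate convexity argument (two claims, a case split on $|S_{i_1}|\gtrless c\sum_{\ell\ge 2}|S_{i_\ell}|^\delta$), and this is precisely where the constants $20/(2^\delta-2)$ and $(21/\delta)^{1/(\delta-1)}$ arise --- not from a geometric series over rounds, but from the inequalities $c(2-2^\delta)+20\le 0$ and $21-\delta c^{\delta-1}\le 0$ needed inside a single merge.

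Finally, safeness of the rainbow-matching output requires a Hall-type argument (any matching in an auxiliary graph $H$ on $B$ can be injected into the bucket allocation), which in turn relies on the intersection/union properties of the demand showing $\textsf{val}(P_{\mathcal{I}}(I))\ge\mu(I)$ for every interval $I$. None of this interval machinery appears in your sketch, and without it the restructuring step in the matching case cannot be carried out.
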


By fixing the suitable value $\delta_0=1+\frac{\sqrt{\log{21}}}{\sqrt{\log k}}$ (assuming the
non-trivial case where $k\geq 2$) we obtain the following.

\begin{corollary}
  \label{@CorollaryTPT}
 \TPT admits a kernel with $k^{1+\frac{\O(1)}{\sqrt{\log{k}}}}$  vertices.
\end{corollary}
\begin{proof}
  Let us upper bound $c(\delta_0)$. For any $1 < \delta \le 2$, we have $c(\delta) \le c'(\delta)$ where $c'(\delta)=21^{\frac{1}{\delta-1}}$.
 Indeed, $\frac{20}{(2^\delta-2)} \le \frac{21}{\delta-1} \le 21^{\frac{1}{\delta-1}}$, and on the other hand, $(\frac{21}{\delta})^\frac{1}{\delta-1} \le 21^\frac{1}{\delta-1}$
 Thus, the vertex size of the kernel given by Theorem~\ref{main_treh} for $\delta_0$ is at most
 \begin{eqnarray*}
   6534\cdot  c'(\delta_0)k^\delta_0 &= & 6534\cdot 2^{\left( \sqrt{\log{21}}\sqrt{\log{k}}\right)}k^{\delta_0} \\
   &= &6534\cdot 2^{\left(\sqrt{\log{21}}\sqrt{\log{k}}\right)}k\cdot 2^{\left(\sqrt{\log{21}}\sqrt{\log{k}}\right)}\\
   & = & 6534\cdot k\cdot 2^{2\cdot\sqrt{\log{21}}\sqrt{\log{k}}}\\
      & =&  \O(k\cdot 2^{\frac{{9.17\cdot \log{k}}}{\sqrt{\log{k}}}})\\
   & =&  \O(k\cdot k^{\frac{9.17}{\sqrt{\log{k}}}})\\
   & =&  k^{1+\frac{\O(1)}{\sqrt{\log{k}}}}
 \end{eqnarray*}
\end{proof}

\subsection{Preliminary phase: greedy localization}
\label{greedy_loc}

Given an instance $(T,k)$ of \TPT, we first greedily compute a
maximal set of vertex-disjoint triangles.  If we get at least $k$
triangles, then $(T,k)$ is a positive instance of \TPT.  Otherwise
we denote by $C^0$ the set of vertices contained in the triangles of
the greedy packing and by $W^0$ the set $V(T)\setminus C^0.$  We
denote by $t_0$ the size of $W^0.$  The vertices of $W^0$ clearly
induce an acyclic subtournament of $T,$ and we call such a partition
$(C^0,W^0)$ a \emph{greedily localized pair} of $T$ for \TPT.

In the remainder of the section, we consider that $W^0$ is sorted
according to its topological ordering\footnote{Notice that the
topological ordering of an acyclic tournament is unique.} and we
number the elements of $W^0$ following this order, that is
$W^0=\{w_1,\dots ,w_{t_0}\}$ with $(w_i,w_j) \in A(T)$ iff $i<j.$
Finally, for any subset $W$ of $W^0$ and any $i,j \in [t_0]^2$ we write $W_{[i,j]} = \{w_k\in W \mid k\in[i,j] \}$
and $W_{[i,j[}=W_{[i,j]} \setminus \{w_j\}.$

\subsection{Nice pairs, buckets, and partial decomposition}

Recall that a basic principle of our technique is to maintain a tuple $(W,B,C,\dots)$ called a partial decomposition
  where $(W,B)$ is a nice pair. To capitalize on the specific problem considered here, we study in this section which structural properties hold in a nice pair.

In the entirety of this section we consider that we are given an input
$(\T,k)$ of \TPT, and a greedily localized pair $(C^0,W^0)$ for our
input $(\T,k),$ where $W^0=\{w_1,\dots ,w_{t_0}\}.$
\begin{definition}
Given an instance $(\T,k)$ of \TPT, and two subsets $W, B$ of
$V(\T),$ we say that $\npair=(W,B)$ is a \emph{nice pair} if $W
\subseteq W^0$ and $B \cap W = \emptyset,$ and for any triangle
$\Delta$ of $\T[W \cup B]$ we have $|\Delta \cap W| \le 1.$  Notice
that $B$ may contain vertices of $C^0,$ as well as vertices of
$W^0\setminus W.$
\end{definition}

An illustration of next proposition is depicted in \autoref{@experimentation}.
  
\begin{proposition}\label{@accompanying}
Let $\npair=(W,B)$ be a nice pair.  There exists a unique $S^\npair
\subseteq [t_0] \cup \{\infty\},$ where $\infty$ is a token
representing some value greater than $t_0,$ and a unique partition
$\{B_i\mid i\in S^{ \psi}\}$ of $B$ into non-empty sets such that for
any $i \in S^\npair,$ the set $B_i$ contains every vertex $v$ of $B$
where
    \begin{itemize}
    \item all arcs between $W_{[1,i-1]}$ and $v$ are oriented from
      $W_{[1,i-1]}$ to $v$
    \item all arcs between $W_{[i,t_0]}$ and $v$ are oriented from $v$
      to $W_{[i,t_0]}$
    \end{itemize}
Such a partition, along with the choice of $S^{\psi},$ is called a
\emph{bucket decomposition} of $\npair,$ and sets $B_i$ are called
\emph{buckets}.
\end{proposition}

\begin{proof}
For any $i\in[t_0]\cup\{\infty\},$ denote by $B_i$ the set containing
every vertex of $B$ that is dominated by $W_{[1,i-1]}$ and dominates
$W_{[i,t_0]}.$  As there is no cycle of length two in $\T,$ the $B_i$
are pairwise disjoint.  Moreover, let $b$ be a vertex of $B.$  If $W$
dominates $b,$ then we have $b\in B_\infty.$  Otherwise, we denote by
$i_0$ the minimum integer $j$ such that $b$ dominates $w_j$ and
$w_j\in W.$  If there exists $i>i_0$ such that $w_ib\in A(\T)$ and
$w_i\in W,$ then $bw_{i_0}w_i$ would be a triangle containing two
vertices in $W$ and one in $B,$ which is not possible.  Moreover, by
definition of $i_0,$ the set $W_{[1,i_0-1]}$ dominates $b.$

So, we have $b\in B_{i_0},$ and more generally $(B_i)_{i=1,\dots
  ,t_0,\infty}$ forms a partition of $B.$ To conclude, we just denote
by $S^\npair$ the subset of indices $i$ in $[t_0] \cup \{\infty\}$ with $B_i\neq \emptyset.$
\end{proof}

\begin{figure}[!ht]
  \centering
  \includegraphics[scale=.75]{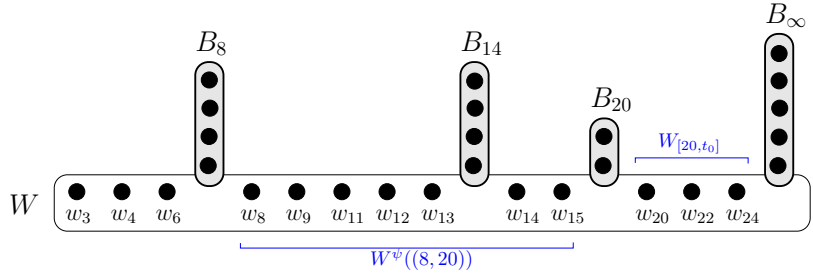}
\caption{A nice pair $\npair=(W,B)$ and its bucket decomposition with
  $S^\npair=\{8,14,20,\infty \}.$ The arcs inside $W$ and from $W$ to
  the $B_i$'s all go from left to right, in increasing order of the
  indices.  A bucket interval $I_0=(8,20)$ and its associated set of
   vertices $W^\npair(I).$  Let $I_1=(14,\infty),$ $I_2 = (20,\infty)$
  and $\I=\{I_0,I_1,I_2\}.$  We have $I_0 \prec I_1$ and $I_2
  \subsetint I_1.$}
\label{@experimentation}
\end{figure}

Informally, in a bucket decomposition, all vertices from the same
bucket have the same neigborhood in $W.$  We will use several times
the next observation following from the definition of a nice pair, and
asserting that any triangle inside $W \cup B$ contains at
least two vertices from the buckets. We refer again the reader to \autoref{@experimentation} for the next definition.

\begin{definition}[bucket interval]
Let us consider given a nice pair $\npair=(W,B).$  We say that
$I=(l_I,r_I)$ is a \emph{bucket interval of $\npair$} if $\{l_I,r_I\}
\subseteq S^\npair$ and $l_I < r_I.$  We denote by $\BI(\npair)$ (or
$\BI(W,B)$) the set of bucket intervals of $\npair.$

  Given two bucket intervals $I_1$ and $I_2$ 
  \begin{itemize}
    \item we say that $I_1 \subsetint I_2$ iff $[l_{I_1},r_{I_1}] \subseteq [l_{I_2},r_{I_2}],$ and
    \item $I_1 \prec I_2$ iff $l_{I_1} < l_{I_2} < r_{I_1} < r_{I_2}.$
    \item Moreover, we define $I_1 \unionint I_2 = \left(\min\left\{l_{I_1},l_{I_2}\right\},\max\left\{ r_{I_1},r_{I_2}\right\}\right),$ and
    \item if $I_1 \prec I_2,$ we set $I_1 \interint I_2 = (l_{I_2},r_{I_1})$
  \end{itemize}

  Given any bucket interval $I \in \BI(\npair),$ we write
  \begin{itemize}
  \item $B^\npair(I) = [l_I,r_I] \cap S^\npair$ the indices of buckets in $I$
  \item $W^\npair(I) = W_{[l_I,r_I[}$
  \item $P^\npair(I) = \{ I' \in \BI(\npair) \mid I' \subsetint I \}$
  \item for any $X \subseteq \BI(\npair),$ we write $P^\npair_X(I) = P^\npair(I) \cap X.$
  \end{itemize}
When the nice pair is clear from context, we will drop the $^\npair$
from the previous notations.
\end{definition}

\begin{observation}\label{@significativo}
Given a nice pair $(W,B),$ for any triangle $\Delta$ in $\T[W \cup
  B],$ either $\Delta \subseteq V(B),$ or there exists a bucket
interval $I \in \BI(\npair)$ such that
  \begin{itemize}
    \item $|\Delta \cap B_{l(I)}| = |\Delta \cap B_{r(I)}|= 1,$
      $\Delta \cap W \subseteq W(I),$ and
    \item for any $v \in W(I),$ $(\Delta \cap B) \cup \{v\}$ is still
      a triangle.
  \end{itemize}
  \end{observation}

Recall that in a given round of the rainbow matching technique, if we do not find a rainbow matching then
  we find a set of vertices $U \subseteq W$ and $X^C \subseteq C^0 \setminus B$ (having some properties corresponding to the hypothesis of \autoref{lemma_addtonicepair}),   that we have to add to $B$, while preserving in particular that what we obtain is still a nice pair. 
  This motivates the next lemma.

\begin{lemma}\label{lemma_addtonicepair}
Let $(W,B)$ be a nice pair.  Let $U \subseteq W$ and $X^C \subseteq
C^0 \setminus B$ such that for any triangle $\Delta$ of $\T[W\cup
  X^C]$ we have $|\Delta \cap (W \setminus U)| \le 1.$  Let $W' = W
\setminus U$ and $B' = B \cup U \cup X^C.$  Then, $(W',B')$ is a nice
pair.
\end{lemma}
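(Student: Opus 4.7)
To show $(W',B')$ is a nice pair, I would verify the three defining conditions in turn. The conditions $W' \subseteq W^0$ and $B' \cap W' = \emptyset$ are essentially bookkeeping: the first is immediate from $W' \subseteq W \subseteq W^0$, while for the second I would check each piece of $B' = B \cup U \cup X^C$ separately against $W' = W \setminus U$, using that $B \cap W = \emptyset$, $U \cap (W \setminus U) = \emptyset$, and $X^C \subseteq C^0$ is disjoint from $W^0 \supseteq W'$ (since $C^0$ and $W^0$ are complementary in the greedily localized pair).

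The main content is the triangle condition: for every triangle $\Delta$ of $\T[W' \cup B']$, $|V(\Delta) \cap W'| \le 1$. The key observation is that $W' \cup B' = W \cup B \cup X^C$, so any such triangle has its three vertices in this union. I would argue by contradiction: suppose $\Delta$ contains two vertices $v_1, v_2 \in W' = W \setminus U$, and let $v_3$ be the third vertex, which must lie in $W \cup B \cup X^C$.

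The argument then splits according to which piece contains $v_3$. If $v_3 \in W \cup B$, then $V(\Delta) \subseteq W \cup B$, and the nice-pair property of $(W,B)$ forces $|V(\Delta) \cap W| \le 1$, contradicting $v_1, v_2 \in W$. If instead $v_3 \in X^C$, then since $v_1, v_2 \in W$ we have $V(\Delta) \subseteq W \cup X^C$, and the hypothesis on triangles of $\T[W \cup X^C]$ gives $|V(\Delta) \cap (W \setminus U)| \le 1$, again contradicting that both $v_1, v_2$ lie in $W \setminus U = W'$. (The potential overlap cases are ruled out automatically: $X^C \cap B = \emptyset$ by assumption, and $X^C \cap W = \emptyset$ since $X^C \subseteq C^0$ and $W \subseteq W^0$.) This case analysis exhausts the possibilities and yields the result.

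I do not expect any step to be a real obstacle here: the lemma is a verification, and the two substantive hypotheses have been tailored so that each one precisely handles one of the two cases for $v_3$. The only subtlety is recognizing that the hypothesis on $\T[W \cup X^C]$ is stated in terms of $W \setminus U$ rather than $W'$, so that after the redefinition $W' = W \setminus U$ the hypothesis applies verbatim.
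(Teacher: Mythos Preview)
Your proposal is correct and follows essentially the same approach as the paper: verify the bookkeeping conditions, then assume a triangle with two vertices in $W'$ and derive a contradiction by case analysis on where the third vertex lies. The only cosmetic difference is that the paper splits $B' = B \cup U \cup X^C$ into three cases and handles $u \in U$ separately via the acyclicity of $\T[W^0]$, whereas you fold $U$ into the $W \cup B$ case and let the nice-pair property of $(W,B)$ do the work---a slight economy, since the nice-pair condition already rules out any triangle with two (or three) vertices in $W$.
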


\begin{proof}
As $X^C \subseteq C^0$ and $W \cap C^0 = \emptyset,$ we have that $W'
\cap B' = \emptyset.$  Let us now suppose, towards a contradiction,
that there exists a triangle $\Delta$ of $\T[W' \cup B']$ such that
$|\Delta \cap W'| \ge 2.$  This implies that $|\Delta \cap W'| =2
$ and $|\Delta \cap B'| \ge 1,$ as we cannot have $|\Delta \cap W'| =
3,$ because $W' \subset W^0$ and $\T[W^0]$ is acyclic.  Let $\Delta
\cap B'=\{u\}$ and $\Delta \cap W' = \{v,w\}.$ If $u \in X^C,$ then
this contradicts the hypothesis that for any triangle $\Delta$ of
$\T[W \cup X^C]$ we have $|\Delta \cap (W \setminus U)| \le 1.$  If $u
\in U,$ then $\Delta \subseteq W,$ contradicting the fact that
$\T[W^0]$ is acyclic.  Finally, if $u \in B,$ then this contradicts
the fact that $(W,B)$ is a nice pair as, for any triangle $\Delta$ of
$\T[W \cup B],$ we should have $|\Delta \cap W| \le 1.$
\end{proof}

As buckets will correspond to vertices that we want to keep in our
kernel, we need to control their size, motivating the following
definition, which is illustrated in Figure~\ref{@secularizations}.

\begin{figure}[!ht]
  \centering
  \includegraphics[scale=.75]{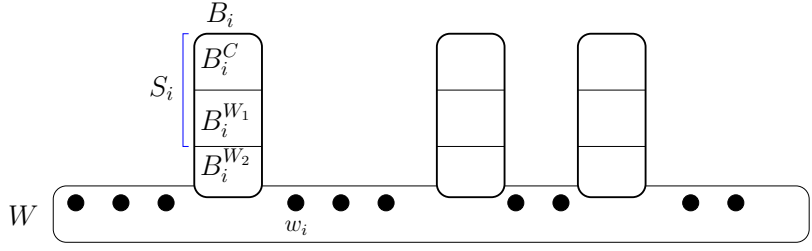}
\caption{A bucket partition}
\label{@secularizations}
\end{figure}

\begin{definition}[Bucket partition of a nice pair]
\label{buk_par_pair}
Let $f$ be a function from $\mathbb{N}$ to $\mathbb{R}^+$ and
$\npair=(W,B)$ be a nice pair.  Let $B^C = B \cap C^0,$ and $B^W = B
\cap W^0.$ For any $i \in S^\npair,$ let $B_i^C = B_i \cap B^C.$  For
any partition $(B^{W_1},B^{W_2})$ of $B^W$ and any $i \in S^\npair,$
let $B_i^{W_1} = B_i \cap B^{W_1},$ $B_i^{W_2} = B_i \cap B^{W_2}$ and
let $S_i =B_i^C \cup B_i^{W_1}.$

A \emph{bucket partition of a nice pair $\npair=(W,B)$} is a partition
$(B^{W_1},B^{W_2})$ of $B^W$ such that
  \begin{enumerate}
    \item for any $i \in S^\npair,$ $|S_i| \ge 1$
  \item $|B^{W_1}| \le 10 |B^C|.$
  \end{enumerate}
      We say that a bucket partition has local size $f$ if for any $i \in S^\npair,$ $|B_i^{W_2}| \le f(|S_i|).$
\end{definition}
Notice that the size condition $|B^{W_1}| \le 10 |B^C|$ required in
the bucket partition is a ``global'' constraint on the size of the
$B^{W_1},$ while the condition size $|B_i^{W_2}| \le f(|S_i|)$
required by the local size function is a ``local'' constraint on every
bucket. We introduced this local notion of size as a way to control more precisely the number of vertices added to $B$ (and thus to the kernel output),
  which will be critical when typically adjacent buckets are merged using the \textit{add} operation o  Definition~\autoref{@impoverished}. However,
when the kernel will find a rainbow matching and stop, the only role of these local sizes will be to upper bound the total size of $B$.

The object that our kernel will manipulate is a partial decomposition,
as defined below.

\begin{definition}[Partial decomposition]
We say that a tuple $(W,B,C,B^{W_1},B^{W_2})$ is a \emph{partial
decomposition} if
  \begin{enumerate}
  \item\label{@proclamations} there is a partition $V(\T) = W \cup B \cup C$
  \item\label{@inextinguishably} $C \subseteq C^0$
  \item\label{@inconvinientes} $(W,B)$ is a nice pair
    \item\label{@congregating} $(B^{W_1},B^{W_2})$ is a bucket partition of $(W,B)$
    
  \end{enumerate}
  We say that a partial decomposition has local size $f$ if
  $(B^{W_1},B^{W_2})$ has local size $f.$\\
  \stf{Moreover, we will say that a partial decomposition is
    \emph{clean} if it satisfies the following extra condition:
  \begin{enumerate}
  \item\label{@cleaness-TPT} for every vertex $c\in C$ the tournament
    $T[W\cup \{c\}]$ contains a triangle.
  \end{enumerate} }
\end{definition}

\stf{Here again, we have a simple process, called {\it the cleaning
    phase}, to obtain a clean partial decomposition from any partial
  decomposition.}

\stf{
\begin{lemma}[Cleaning Lemma]
  \label{lem:clean-TPT}
  Let $(W,B,C,B^{W_1},B^{W_2})$ be a partial decomposition and $X$ be
  set the of vertices $x$ of $C$ such that $T[W\cup \{x\}]$ does not
  contain any triangle. Then
  $(W,B\cup X,C\setminus X, B^{W_1},B^{W_2})$ is a clean partial
  decomposition of $T$, with the same local size than
  $(W,B,C,B^{W_1},B^{W_2})$.
\end{lemma}
}

\stf{
\begin{proof}
  Let us first check that $(W,B\cup X,C\setminus X, B^{W_1},B^{W_2})$
  satisfies the requirements of a partial
  decomposition. Properties~\ref{@proclamations}
  and~\ref{@inextinguishably} are clearly satisfy, and by choice of
  $X$, no vertex of $X$ is contained in a triangle with two vertices
  of $W$, so $\npair'=(W,B\cup X)$ is also a nice pair and
  property~\ref{@inconvinientes} is satisfies. Notice that vertices of
  $X$ can create new buckets or be added in existing buckets of
  $\npair$, but in all cases we have
  $S^\npair \subseteq S^{\npair'}$. Finally, in the buckets partition
  of $\npair'$, the vertices of $X$ will be added to
  $B^C$. That is, with the notations of Definition~\ref{buk_par_pair},
  we have $(B\cup X)^W=B^W$, $(B\cup X)_i^{W_j} = B_i^{W_j}$ for $i\in
  S^\npair$ and $j=1,2$ and $(B\cup X)_i^{W_j} = \emptyset$ for $i\in
  S^{\npair'}\setminus S^\npair$ and
  $j=1,2$. It is then straightforward to check that Properties~{\it
    1.} and~{\it 2.}, as well as the local size requirement, from
  Definition~\ref{buk_par_pair} still hold for
  $\npair'$. In all, we can conclude that $(W,B\cup X,C\setminus X,
  B^{W_1},B^{W_2})$ is a clean partial decomposition with the same
  local size than $(W,B,C, B^{W_1},B^{W_2})$.
\end{proof}
}

\subsection{Intervals: demand definition and basic properties} 
In this section we introduce the notion of \emph{demand} for a partial
decomposition.  Informally, a demand is a set of bucket intervals with
a value attached to each interval.  The value $\textsf{val}(I)$
attached to interval $I$ depends on the contents of the buckets $\{B_i \mid i \in B(I)\}$ contained in $I$.

The notion of demand will be used in \autoref{@dialectician} to define the auxiliary edge-colored multigraph necessary in our approach.
  In particular, given a $\subsetint$-minimal interval $I$ (implying that buckets $B_{l_I}$ and $B_{r_I}$ are consecutive),
  it will be important that $\textsf{val}(I)$ upper bounds the size of packing $\P$ where all triangles in $\P$ have one vertex in $B_{l_I}$, one in
  $B_{r_I}$, and one in $W(I)$. Indeed, $\textsf{val}(I)$ will correspond to the number of vertices that we want to find in $W(I)$ in the rainbow matching we look for.
  Then, if we indeed find such a rainbow matching, and thus a set $Q_I \subseteq W(I)$ with $|Q_I|=\textsf{val}(I)$, we must be able to repack such a packing $\P$ (corresponding to a part of an optimal solution) into a packing $\P'$ of same size with $V(\P') \subseteq B_{l_I} \cup B_{r_I} \cup Q_I$ by changing vertices used in $W(I)$ to take instead vertices in $Q_I$.

The following definition is illustrated Figure~\ref{@biancheggiar}.

\begin{definition}[Block partition]
We call a set $\I^\textrm{max}\subseteq \BI(\npair)$ \emph{proper} if
there do not exist $I_1,I_2\in\I^\textrm{max}$ such that $I_1 \neq
I_2$ and $I_1 \subsetint I_2.$
      
Given a proper set $\I^\textrm{max}\subseteq \BI(\npair)$ of bucket
intervals we define the \emph{block partition} of $\I^\textrm{max},$
denoted $\{Z_1,\dots,Z_t\},$ as follows.  Let us order
$\I^\textrm{max}$ according to the left points, meaning that
$\I^\textrm{max}=\{I_1,\dots,I_x\},$ where $l_i < l_{i+1}$ (and $r_i <
r_{i+1}$) for any $i.$
      
We find the largest $x_1$ such that $I_1 \prec I_2 \prec \dots \prec
I_{x_1}$ and define $Z_1 = \{I_i \mid x_0+1 \le i \le x_1\}$ where
$x_0=0.$  Then, we find the largest $x_2$ such that $I_{x_1+1} \prec
I_{x_1+2} \prec \dots \prec I_{x_2},$ and define $Z_2 = \{I_{i} \mid
x_1+1\le i \le x_2\}.$  We continue until we have a partition
$\{Z_\ell, \ell \in [t]\}$ of $\I^\textrm{max}$ (implying
$x_t=x$)\footnote{Please note that our relation $\prec$ is not
transitive and thus $x_i\neq x_{i+1}.$}.
      
Notice that, for any $\ell \in [t-1],$ $I_{x_l} \nprec I_{x_l+1},$
implying that $r_{I_{x_l}} \le l_{I_{x_l+1}}.$  We also define the
\emph{block intervals} as $I^*_\ell = \unionint_{I \in Z_\ell}I$ for
any $\ell \in [t].$
\end{definition}
\begin{figure}[!ht]
  \centering
  \includegraphics[scale=.75]{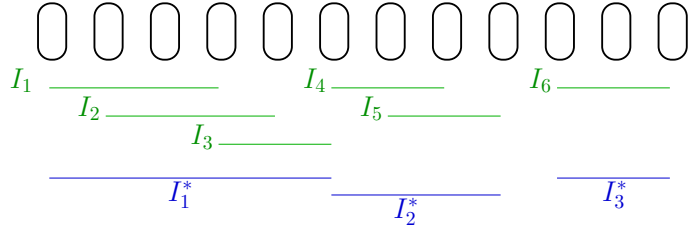}
\caption{A block partition of $\I^\textrm{max}=\{I_1,\dots,I_6\}.$ We have $Z_1 = \{I_1,I_2,I_3\},$ $Z_2 = \{I_4,I_5\}$ and $Z_3 = \{I_6\}.$}
\label{@biancheggiar}
\end{figure}

The next lemma follows from the previous definition.

\begin{lemma}\label{lemma_block}
Let $\npair$ be a nice pair, $\J \subseteq \BI(\npair)$ a set of
bucket intervals, and let $\J^\textrm{max}\subseteq \J$ be the subset
of $\subsetint$-wise maximal bucket intervals of $\J.$  Let $\{Z_\ell,
\ell \in [t]\}$ and $\{I^*_\ell,\ell \in [t]\}$ be the block partition
and block intervals of $\J^\textrm{max}$ (which is well defined as
$\J^\textrm{max}$ is proper).  Then, we have $\J=\bigcup_{\ell \in
  [t]}\P_{\J}(I^*_\ell)$
  \end{lemma}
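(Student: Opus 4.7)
The plan is to show mutual inclusion, with the $\supseteq$ direction being immediate from definitions and the $\subseteq$ direction following from the fact that $\subsetint$ is a partial order on the finite set $\J$ combined with the way blocks are defined as $\subsetint$-suprema of chains under $\prec$.

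First I would dispatch the easy direction: by definition $P_{\J}(I^*_\ell) = \{I' \in \J \mid I' \subsetint I^*_\ell\} \subseteq \J$, so $\bigcup_{\ell \in [t]} P_{\J}(I^*_\ell) \subseteq \J$. For the reverse inclusion, fix an arbitrary $I \in \J$. Since $\subsetint$ is a partial order (it is set inclusion of closed intervals) and $\J$ is finite, there exists a $\subsetint$-maximal element $I^\dagger \in \J^\textrm{max}$ with $I \subsetint I^\dagger$. By definition of the block partition, $I^\dagger$ belongs to some block $Z_\ell$, and then by definition of $I^*_\ell = \unionint_{I'' \in Z_\ell} I''$, every element of $Z_\ell$ is contained in $I^*_\ell$ under $\subsetint$; indeed, writing $I^*_\ell = (\min_{I'' \in Z_\ell} l_{I''}, \max_{I'' \in Z_\ell} r_{I''})$, we have $[l_{I^\dagger}, r_{I^\dagger}] \subseteq [l_{I^*_\ell}, r_{I^*_\ell}]$, i.e., $I^\dagger \subsetint I^*_\ell$.

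Combining these two containments and using transitivity of $\subsetint$ (again just transitivity of set inclusion), one gets $I \subsetint I^*_\ell$, hence $I \in P_{\J}(I^*_\ell)$, which yields $\J \subseteq \bigcup_{\ell \in [t]} P_{\J}(I^*_\ell)$ and concludes the argument.

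I do not expect any genuine obstacle here: the lemma is essentially a restatement that every element of a finite poset is dominated by a maximal element, combined with the tautology that the $\unionint$ over a block dominates every member of that block. The only mild subtlety is to note that $P_{\J}(I^*_\ell)$ is well defined even though $I^*_\ell$ is not necessarily itself a bucket interval (i.e., its endpoints need not lie in $S^\npair$), which is harmless because $P^\npair(\cdot)$ is defined via the $\subsetint$-relation and makes sense for any pair of integer endpoints with $l < r$.
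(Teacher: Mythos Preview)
Your proof is correct and follows essentially the same route as the paper's (very terse) argument: both rely on the fact that every element of $\J$ sits below some $\subsetint$-maximal element of $\J^\textrm{max}$, and that each such maximal element lies in some block $Z_\ell$ and hence below $I^*_\ell$. One minor inaccuracy: your closing caveat is unnecessary, since $I^*_\ell$ \emph{is} always a bucket interval---its endpoints are $l_{I_{x_{\ell-1}+1}}$ and $r_{I_{x_\ell}}$, both of which lie in $S^\npair$ as endpoints of intervals in $\J^\textrm{max}\subseteq\BI(\npair)$.
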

\begin{proof} Indeed, we have the following.
   \begin{eqnarray*}
   \J = \bigcup_{I \in \J^\textrm{max}}P_{\J}(I) = \bigcup_{\ell \in
     [t]}\bigcup_{I \in Z_\ell}P_{\J}(I) = \bigcup_{\ell \in
     [t]}\P_{\J}(I^*_\ell)
 \end{eqnarray*}
\end{proof}

The two next definitions introduce the notion of demand for a partial
decomposition of a nice pair and are illustrated in
Figure~\ref{@gaspilleront}.

\begin{definition}
  Let $(W,B,C,B^{W_1},B^{W_2})$ be a partial decomposition and $I$ a
  bucket interval.  We define
  \begin{itemize}
  \item $\Sigma(I) = \sum_{i \in B(I)}|S_i|$
    \item $m(I) = \sum_{i \in B(I)}|B_i|-|B_{i_0}|$ where $i_0 \in \textrm{ argmax}_{i \in B(I)}|B(i)|$
    \item $\mu(I) = \min\big\{\Sigma(I),m(I)\big\}$
    \item $t(I) = \begin{cases} 
          \ssigma & \text{if } \Sigma(I) < m(I) \\
          \msynt & \text{if } \Sigma(I) > m(I) \\
          \eqsynt & \text{otherwise.}
        \end{cases}$
  \end{itemize}
\end{definition}

\begin{figure}[!ht]
  \begin{minipage}{6cm}
    \scalebox{0.75}{\input{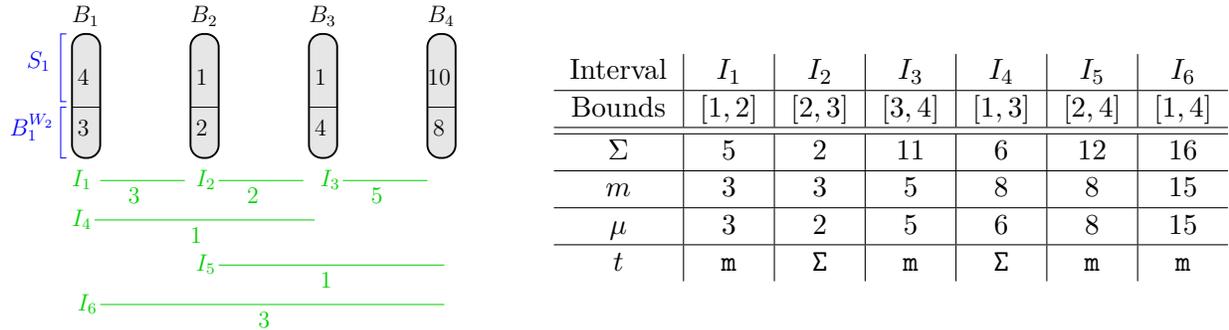}}
\end{minipage}
  \hspace*{1cm}
\begin{minipage}{6cm}
\begin{tabular}{c|c|c|c|c|c|c}
    Interval & $I_1$ & $I_2$ & $I_3$ & $I_4$ & $I_5$ & $I_6$\\
    \hline
    Bounds & $[1,2]$ & $[2,3]$ & $[3,4]$ & $[1,3]$ & $[2,4]$ & $[1,4]$\\
    \hline
    \hline
    $\Sigma$ & 5 & 2 & 11 & 6 & 12 & 16\\
    \hline
    $m$ & 3 & 3 & 5 & 8 & 8 & 15\\
    \hline
    $\mu$ & 3 & 2 & 5 & 6 & 8 & 15\\
    \hline
    $t$ & \msynt & \ssigma & \msynt & \ssigma & \msynt  & \msynt\\
\end{tabular}
\end{minipage}
  \caption{Example of a demand $(\I,\textsf{val}).$ On the left, the
    buckets are depicted and the cardinalities of $S_1$ and
    $B_1^{W_2}$ are indicated. Below, stand the intervals with their
    value. On the right, for every interval, the values $\Sigma,$ $m,$
    $\mu$ and $t$ are computed.}
\label{@gaspilleront}
\end{figure} 

To get an insight on why we used the previous definition of $\mu$, one can check that the property announced at the begining of this section holds:
 let us consider a $\subsetint$-minimal interval $I$ (implying that buckets $B_{l_I}$ and $B_{r_I}$ are consecutive),
  and check that $\textsf{val}(I)$ upper bounds the size of packing $\P$ where all triangles in $\P$ have one vertex in $B_{l_I}$, one in
  $B_{r_I}$, and one in $W(I)$.
  On one hand, as any such triangle consumes one vertex in $B_{l_I}$ and one in $B_{r_I}$, we get $|\P| \le \min(|B_{l_I}|,|B_{r_I}|) = m(I)$.
  On the other hand, as any triangle in $\P$ cannot use only vertices in $W^0$, it must uses at least one vertex from $S_{l_I} \cup S_{r_I}$, implying that $|\P| \le \ssigma(I)$.
We discuss at the end of \autoref{@intentionally} why taking simply $\mu(I)=\ssigma(I)$ or $\mu(I)=m(I)$ would not be sufficient to get a kernel in $\O(k^\delta)$ for any $\delta > 1$.

\begin{definition}[Demand]
Let $(W,B,C,B^{W_1},B^{W_2})$ be a partial decomposition of $\npair.$
Let us define the following polynomial algorithm which, given
$(W,B,C,B^{W_1},B^{W_2}),$ computes a set $\I$ of bucket intervals of
$\npair,$ and a value $\textsf{val}(I)$ for any $I \in \I.$
  \begin{itemize}
  \item start with $X=\emptyset$
  \item for any $L \in [2,|S^\npair|]$ do
    \begin{itemize}
    \item for any bucket interval $I$ of $\npair$ where $|B(I)|=L$:
    \item let $\textsf{val}(P_X(I)) = \sum_{I \in
      P_X(I)}\textsf{val}(I).$ Notice that at this stage $I \notin
      P_X(I),$ and in particular we have $\textsf{val}(P_X(I))=0$ if
      $L=2.$
    \item if $\textsf{val}(P_X(I)) \le \mu(I)$ then define $\textsf{val}(I)=\mu(I)-\textsf{val}(P_X(I)),$ and add $I$ to $X$ (and thus, in this case, $I \in P_X(I)$)
    \end{itemize}
  \item let $\I = X$
  \item return $(\I,\textsf{val})$
  \end{itemize}

We call the pair $(\I,\textsf{val})$ the \emph{demand for
$(W,B,C,B^{W_1},B^{W_2})$}.  We denote $$\I_{>0}=\{I \in \I \mid
\textsf{val}(I) > 0\}.$$ For any $X \subseteq \I,$ we also denote
$\textsf{val}(X)=\sum_{I \in X}\textsf{val}(I).$
\end{definition}

Observe that in the example of Figure~\ref{@gaspilleront} we have
$\I_{>0}=\I=\BI,$ but it may be the case that $\I_{>0} \subsetneq \I
\subsetneq \BI.$  Let us now provide properties on demands.

\begin{lemma}\label{@personalmente}
Let $(W,B,C,B^{W_1},B^{W_2})$ be a partial decomposition and
$(\I,\textsf{val})$ be its demand.  Then, for any $I \in \BI(W,B)$:
  \begin{enumerate}
  \item $\textsf{val}(P_{\I}(I)) = \textsf{val}(P_{\I_{>0}}(I))$
  \item $\textsf{val}(P_{\I}(I)) \ge \mu(I)$
  \item $I \in \I \Leftrightarrow \textsf{val}(P_\I(I)) = \mu(I)
    \Leftrightarrow \textsf{val}(P_\I(I) \setminus \{I\}) \le \mu(I) $
  \item $I \in \I_{>0} \Leftrightarrow \textsf{val}(P_\I(I)\setminus
    \{I\}) < \mu(I)$
  \item if $t(I)= \ssigma,$ then for any $i_0 \in \textrm{ argmax}_{i
    \in B(I)}|B(i)|,$ we have $|S_{i_0}| < \sum_{i \in B(I) \setminus
    \{i_0\}} |B_i^{W_2}|$
  \item if there exists $i_0 \in \textrm{ argmax}_{i \in B(I)}|B(i)|$ such that
    $|S_{i_0}| < \sum_{i \in B(I) \setminus \{i_0\}} |B_i^{W_2}|,$ then we have $t(I)= \ssigma$
  \item if $t(I)= \msynt,$ then for any $i_0 \in \textrm{ argmax}_{i
    \in B(I)}|B(i)|,$ we have $|S_{i_0}| > \sum_{i \in B(I) \setminus \{i_0\}}
    |B_i^{W_2}|$
  \end{enumerate}
\end{lemma}

Before proving Lemma~\ref{@personalmente}, notice that for every interval $I$
in Figure~\ref{@gaspilleront}, computing its demand leads to $I\in
\I$ and $\textsf{val}(P_\I(I)) = \mu(I),$ illustrating Property~{\bf
  3} above.


\begin{proof}
Let $I \in \BI(W,B).$ In what follows, we consider the iteration where
the algorithm considers $I,$ and let $X$ denote the variable of the
algorithm at the line where it computes
$\textsf{val}(P_X(I)).$ Observe first that, no matter whether the algorithm
decides to add $I$ in $X$ or not, we have $P_X(I) = P_\I(I) \setminus
\{I\}.$  In particular, we have the following equivalences :
  \begin{itemize}
  \item $I \notin \I$ $\Leftrightarrow P_\I(I)=P_X(I) \Leftrightarrow \textsf{val}(P_X(I)) > \mu(I).$
  \item $I \in \I$ $\Leftrightarrow  P_\I(I)=P_X(I) \cup \{I\} \Leftrightarrow \textsf{val}(P_X(I)) \le \mu(I)$ . 
  \end{itemize}
  Now, we can prove Properties {\bf 1} to {\bf 7}.
  
Property $\mathbf{1}$ is immediate.
  
Property $\mathbf{2}$  holds in the case where $I \notin \I.$  In
case $I \in \I,$ we have
$\textsf{val}(P_{\I}(I))=\textsf{val}(P_{X}(I))+\textsf{val}(I)=\mu(I)$
and the property holds.

The first equivalence of Property {\bf 3} follows from the proof of
Property {\bf 2} above.  As $\textsf{val}(P_X(I)) =
\textsf{val}(P_\I(I) \setminus \{I\}),$ we get the second part of
Property $\mathbf{3}$ from the seminal observation.
  
Property $\mathbf{4}$ can be rewritten as $I \in \I_{>0}
\Leftrightarrow \textsf{val}(P_X(I)) < \mu(I).$  Since $\I \in \I
\Leftrightarrow \textsf{val}(I)=\mu(I)-P_X(\I),$ we get
$\textsf{val}(I)>0 \Leftrightarrow \textsf{val}(P_X(I)) < \mu(I),$
implying Property $\mathbf{4}.$

For Properties $\mathbf{5}$ and $\mathbf{6},$ let $i_0 \in \textrm{
  argmax}_{i \in B(I)}|B(i)|.$
  \begin{eqnarray*}
    t(I)=\ssigma & \Leftrightarrow & \sum_{i \in B(I)}|S_i| < \sum_{i \in B(I) \setminus \{i_0\}}|B_i| \\
    & \Leftrightarrow & \sum_{i \in B(I)}|S_i| < \sum_{i \in B(I) \setminus \{i_0\}}(|S_i|+|B^{W_2}_i|)\\
    & \Leftrightarrow & |S_{i_0}| < \sum_{i \in B(I) \setminus \{i_0\}}|B^{W_2}_i|
  \end{eqnarray*}

Property $\mathbf{7}$ is obtained by reversing the above inequalities.
\end{proof}

\subsection{The small total demand property}\label{@bedingungslos}
The objective of the section is only to prove
\autoref{lemma_valblock}.  To that end, we prove that the
``intersection property'' of \autoref{lemma_intersection} implies the
``union property'' of \autoref{lemma_union}, which finally implies
\autoref{lemma_valblock}.

\begin{lemma}[The intersection property]\label{lemma_intersection}
  Let $(W,B,C,B^{W_1},B^{W_2})$ be a partial decomposition and $(\I,\textsf{val})$ be its demand.
  Let $I_1, I_2$ in $\I_{>0}$ such that $I_{1} \prec I_{2}.$
  Then, $t(I_1 \interint I_2)=\ssigma.$
\end{lemma}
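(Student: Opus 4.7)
The plan is to argue by contradiction: I would assume $t(J) \neq \ssigma$, so that $\mu(J) = m(J) \leq \Sigma(J)$, and combine two elementary set-theoretic identities with the basic properties of the demand (Lemma~\ref{obs:val}) to derive a contradiction with the hypothesis $I_1, I_2 \in \I_{>0}$.

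For the setup, I would introduce $K := I_1 \unionint I_2 = (l_{I_1}, r_{I_2})$, which is a bucket interval of $\npair$ since $\{l_{I_1}, r_{I_2}\} \subseteq S^\npair$, and let $M \subseteq \BI(\npair)$ be the set of bucket intervals strictly spanning $J$, namely those $I'$ with $l_{I_1} \leq l_{I'} < l_{I_2}$ and $r_{I_1} < r_{I'} \leq r_{I_2}$. Two observations from the definitions organize the argument: $P(I_1) \cap P(I_2) = P(J)$, and $P(K)$ is the disjoint union $(P(I_1) \cup P(I_2)) \sqcup M$. Intersecting with $\I$ and applying inclusion-exclusion to $\textsf{val}$, then substituting $\textsf{val}(P_\I(I_j)) = \mu(I_j)$ (Property~3 of Lemma~\ref{obs:val}) together with $\textsf{val}(P_\I(K)) \geq \mu(K)$ and $\textsf{val}(P_\I(J)) \geq \mu(J)$ (Property~2), we arrive at
\begin{align*}
\mu(J) + \mu(K) \ \leq\ \mu(I_1) + \mu(I_2) + \textsf{val}(M_\I),
\end{align*}
where $M_\I := M \cap \I$. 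Call this inequality $(\star)$.

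The key combinatorial step will be to establish the strict reverse of $(\star)$ under the hypotheses. Since $B(I_1) \cup B(I_2) = B(K)$ and $B(I_1) \cap B(I_2) = B(J)$ as sets, the functional $\Sigma$ is modular: $\Sigma(J) + \Sigma(K) = \Sigma(I_1) + \Sigma(I_2)$. By contrast, $m$ is supermodular with respect to $\unionint, \interint$ because $\max_{i \in B(K)} |B_i| = \max\bigl(\max_{i \in B(I_1)} |B_i|, \max_{i \in B(I_2)} |B_i|\bigr)$, so the global maximum bucket of $K$ is double-counted on the right of $m(I_1) + m(I_2)$ whenever it lies outside $B(J)$. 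Under the hypothesis $\mu(J) = m(J)$, a case analysis on the types $t(I_1), t(I_2), t(K) \in \{\ssigma, \msynt, \eqsynt\}$, together with an upper bound on $\textsf{val}(M_\I)$ obtained by iterating Property~2 of Lemma~\ref{obs:val} on the intervals in $M_\I$ (which lie strictly inside $K$ and strictly above $J$), should yield $\mu(J) + \mu(K) > \mu(I_1) + \mu(I_2) + \textsf{val}(M_\I)$, contradicting $(\star)$. The strict-positivity hypotheses $\textsf{val}(I_j) > 0$ will enter through Property~4 of Lemma~\ref{obs:val}, providing the slack that turns the inequality strict.

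The hard part will be this case analysis: the min-of-two-linear behaviour of $\mu$ interacts non-trivially with the supermodularity of $m$, and spanning intervals in $M_\I$ can absorb value that would otherwise be available for $\mu(K)$. Controlling $\textsf{val}(M_\I)$ will require showing that each $I'' \in M_\I$ is simultaneously ``charged'' to both non-shared parts $B(I_1) \setminus B(J)$ and $B(I_2) \setminus B(J)$, so that its total contribution is bounded by the excess of $\mu(I_1) + \mu(I_2)$ over $m(J)$ coming from those non-shared buckets; only then is the double-counting of the global maximum bucket enough to close the strict gap.
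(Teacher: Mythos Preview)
Your global inclusion--exclusion via $K=I_1\unionint I_2$ and the spanning class $M$ is a clean reformulation, and $(\star)$ is correct. But the proof stops exactly where the difficulty is: you acknowledge the case analysis and the control of $\textsf{val}(M_\I)$ as the ``hard part'' and do not carry them out. This is a genuine gap, not just missing bookkeeping. To get the strict reverse of $(\star)$ you would need an \emph{upper} bound on $\textsf{val}(M_\I)$, yet the only tool you invoke (Property~2 of Lemma~\ref{obs:val}) gives lower bounds on $\textsf{val}(P_\I(\cdot))$. Note also that $K$ itself always lies in $M$, so whenever $K\in\I$ the term $\textsf{val}(M_\I)$ already absorbs $\textsf{val}(K)$; your supermodularity slack $m(J)+m(K)-m(I_1)-m(I_2)$ can vanish (it equals $\max_{B(I_2)}|B_i|-\max_{B(J)}|B_i|$ after symmetry, which is zero when the maximiser of $B(I_2)$ sits in $B(J)$), so there is no evident source of the strict inequality you need. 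The ``charging'' heuristic you sketch for $M_\I$ is not substantiated.

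The paper's proof takes a different and more local route that avoids $K$ and $M$ entirely. It fixes $i_0\in\operatorname{argmax}_{i\in B(J)}|B_i|$, splits $J$ at $i_0$ into $I_a=(l_{I_2},i_0)$ and $I_b=(i_0,r_{I_1})$, and shows each half has type $\ssigma$. The point of splitting at $i_0$ is that $i_0$ remains the argmax in the half, so $m(I_a)=\sum_{i\in B(I_a)\setminus\{i_0\}}|B_i|$ has a usable closed form. If, say, $t(I_a)\neq\ssigma$, one pairs $I_a$ with its complement $I_{a'}=(i_0,r_{I_2})$ \emph{inside $I_2$}; because $P_\I(I_a)$ and $P_\I(I_{a'})$ are disjoint subsets of $P_\I(I_2)\setminus\{I_2\}$, Property~2 yields $\textsf{val}(P_\I(I_2)\setminus\{I_2\})\ge \mu(I_a)+\mu(I_{a'})$, and a short two-case computation on $t(I_{a'})$ shows this is at least $\mu(I_2)$, contradicting $I_2\in\I_{>0}$ via Property~4. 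Having $t(I_a)=\ssigma$ (and symmetrically $t(I_b)=\ssigma$), Properties~5 and~6 then transfer the $\ssigma$-type to $J$. The crucial difference is that the paper never needs to bound contributions of intervals that strictly span $J$; by working inside a single $I_j$ with disjoint sub-intervals, only lower bounds (Property~2) are required.
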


An example of the intersection property can be seen in
Figure~\ref{@gaspilleront} where $t(I_4 \interint
I_5)=t(I_2)=\ssigma.$

\begin{proof}
We refer the reader to \autoref{@glorification} for an illustration of
the notations used in this proof.
  \begin{figure}[!ht]
  \centering
  \includegraphics[scale=.75]{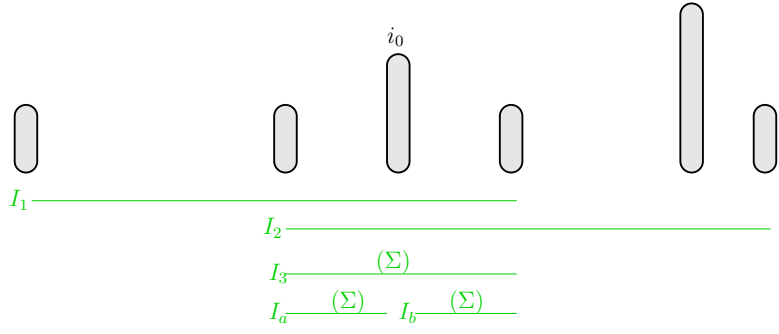}
  \caption{Notations for the intersection property proof.}
\label{@glorification}
  \end{figure}

  Let $I_3 = I_1 \interint I_2,$ and observe that $I_3=[l_{I_2},r_{I_1}].$
  Let $i_0 \in \textrm{ argmax}_{i \in B(I_3)}|B_i|.$
  Let $I_a = (l_{I_2},i_0)$ and $I_b = (i_0,r_{I_1}).$ Observe that $I_a$ or $I_b$ may not be a bucket interval (when $i_0=l_{I_2}$ or $i_0=r_{I_1}$),
  but  at least one of them is a bucket interval.  Let $X = \{I_a,I_b\} \cap \BI.$

  We now show that, for any $I \in X,$ $t(I)=\ssigma.$  Without loss
  of generality, assume $I_a \in X,$ which implies $l_{I_2}<i_0.$
  Suppose, towards a contradiction, that $t(I_a) \in
  \{\eqsynt,\msynt\}.$  Observe that $i_0 \in \textrm{ argmax}_{i \in
    B(I_a)}|B_i|,$ implying $\mu(I_a)=\sum_{i \in
    B(I_a)}|B_i|-|B_{i_0}|.$  Let $I_{a'}=(i_0,r_{I_2}).$  Observe
  that $I_{a'}$ is a bucket interval as $i_0 \le r_{I_1} < r_{I_2}$
  (as we have $I_1 \prec I_2$).  Note that $P_\I(I_a) \cap
  P_\I(I_{a'}) = \emptyset,$ $P_\I(I_a) \subseteq P_\I(I_2) \setminus
  \{I_2\},$ and $P_\I(I_{a'}) \subseteq P_\I(I_2) \setminus \{I_2\}.$
  By combining these observations, we obtain
  \begin{eqnarray*}
    \textsf{val}(P_\I(I_2) \setminus \{I_2\}) &\ge& \textsf{val}(P_\I(I_a))+ \textsf{val}(P_\I(I_{a'})) \\
    & \ge & \mu(I_a)+\mu(I_{a'}) \mbox{ by Lemma~\ref{@personalmente} Property {\bf 2}}
  \end{eqnarray*}
  Let us now prove that $\mu(I_a)+\mu(I_{a'}) \ge \mu(I_2).$
  This will imply $\textsf{val}(P_\I(I_2) \setminus \{I_2\}) \ge \mu(I_2),$ which contradicts \autoref{@personalmente} Property~{\bf 4} as $\textsf{val}(I_2) > 0.$ 

  We now distinguish two cases according to $t(I_{a'}).$\\
  
  \noindent{\bf Case 1}: $t(I_{a'}) \in \{\ssigma,\eqsynt\}.$\\
  In this case, $\mu(I_{a'})=\sum_{i \in B(I_{a'})}|S_i|$ and we have the following.
\begin{eqnarray*}
\mu(I_a)+\mu(I_{a'}) & = & \sum_{i\in B(I_a)}|B_i|-|B_{i_0}|+\sum_{i
  \in B(I_{a'})}|S_i|\\ & = &\sum_{i \in B(I_a) \setminus
  \{i_0\}}|B_i|+\sum_{i \in B(I_{a'})}|S_i| \\ & \ge & \sum_{i \in
  B(I_a) \setminus \{i_0\}}|S_i|+\sum_{i \in B(I_{a'})}|S_i| \\ & \ge
& \Sigma(I_2) \\ & \ge & \mu(I_2)
  \end{eqnarray*}
  
  \noindent{\bf Case 2}: $t(I_{a'}) = \msynt.$\\  In this case,
  $\mu(I_{a'})=\sum_{i \in B(I_{a'})}|B_i|-|B_{i_1}|,$ where $i_1 \in
  \textrm{ argmax}_{i \in B(I_{a'})}|B_i|.$  Notice that $|B_{i_1}|
  \ge |B_{i_0}|.$  Let $i_2 \in \textrm{ argmax}_{i \in
    B(I_2)}|B_i|.$ Observe that $|B_{i_2}|=|B_{i_1}|,$ and that in the
  two cases ($i_1 = i_0$ or $i_1 \neq i_0$) we have
  $\mu(I_a)+\mu(I_{a'}) = \sum_{i \in B(I_2)}|B_i|-|B_{i_1}| = m(I_2)
  \ge \mu(I_2).$  This concludes the proof of our claim, and we now
  assume that for any $I \in X,$ $t(I)=\ssigma.$
  
 Let us now prove that $t(I_3)=\ssigma.$  If $i_0=l_{I_2}$ or
 $i_0=r_{I_1},$ then $X=\{I_3\},$ and we are done. Suppose now
 $l_{I_2} < i_0 < r_{I_1}.$  Observe that $i_0 \in \textrm{ argmax}_{i
   \in B(I_3)}|B_i|$ implies $i_0 \in \textrm{ argmax}_{i \in
   B(I_a)}|B_i|.$  As $t(I_a)=\ssigma$ and as $i_0 \in \textrm{
   argmax}_{i \in B(I_a)}|B_i|,$ Lemma~\ref{@personalmente} Property {\bf 5}
 implies $|S_{i_0}| < \sum_{i \in B(I_a) \setminus \{i_0\}}
 |B_i^{W_2}| \le \sum_{i \in B(I_3) \setminus \{i_0\}} |B_i^{W_2}|.$
 By \autoref{@personalmente} Property {\bf 6}, this implies $t(I_3)=\ssigma.$
\end{proof}

\begin{lemma}[The union property]\label{lemma_union}
  Let $(W,B,C,B^{W_1},B^{W_2})$ be a partial decomposition and
  $(\I,\textsf{val})$ be its demand.  {Let $\mathcal{Z} \subseteq
    \I_{>0},$ where $\mathcal{Z}= \{I_1,\dots,I_x\}$} for some $x \ge
  1,$ and where $I_{i} \prec I_{i+1}$ for any $i=1,\dots ,x-1.$  Then,
  we have $\unionint_{I \in {\mathcal Z}}I \in \I.$
\end{lemma}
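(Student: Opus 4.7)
I proceed by induction on $x$. The base case $x=1$ is immediate since $\unionint_{I\in\mathcal{Z}}I=I_1\in\I_{>0}\subseteq\I$. For the inductive step with $x\ge 2$, set $J:=\unionint_{i=1}^{x-1}I_i=(l_{I_1},r_{I_{x-1}})$; by the inductive hypothesis $J\in\I$. Since both $l_{I_i}$ and $r_{I_i}$ are strictly increasing in $i$ along the chain, $J\prec I_x$ and $J\interint I_x=(l_{I_x},r_{I_{x-1}})=I_{x-1}\interint I_x$. Applying \autoref{lemma_intersection} to $I_{x-1},I_x\in\I_{>0}$ yields $t(J\interint I_x)=\ssigma$, hence $\mu(J\interint I_x)=\Sigma(J\interint I_x)$. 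The inductive step therefore reduces to the following two-interval auxiliary claim: \emph{if $J_1\in\I$, $J_2\in\I_{>0}$, $J_1\prec J_2$, and $t(J_1\interint J_2)=\ssigma$, then $K:=J_1\unionint J_2\in\I$.}

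To prove the auxiliary claim, by Property~\textbf{3} of \autoref{obs:val} it suffices to show $\textsf{val}(P_\I(K)\setminus\{K\})\le\mu(K)$. Set $M:=J_1\interint J_2$, and partition $P_\I(K)\setminus\{K\}$ into (i) intervals $I'\subsetint J_1$, (ii) intervals $I'\subsetint J_2$, and (iii) the remaining intervals, which necessarily satisfy $l_K\le l_{I'}<l_M$ and $r_M<r_{I'}\le r_K$ with $I'\ne K$. Families (i) and (ii) overlap exactly in $P_\I(M)$, so inclusion-exclusion together with Property~\textbf{3} applied to $J_1,J_2\in\I$ gives their joint contribution $\mu(J_1)+\mu(J_2)-\textsf{val}(P_\I(M))$, which is at most $\mu(J_1)+\mu(J_2)-\Sigma(M)$ via $\textsf{val}(P_\I(M))\ge\mu(M)=\Sigma(M)$ (from Property~\textbf{2} and the $\ssigma$ hypothesis on $M$). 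Family (iii) is handled by iterating the argument: each straddling interval strictly contains $M$ and is strictly contained in $K$, so its contribution can be bounded by a secondary induction on $|B(K)|-|B(M)|$.

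The remaining core inequality is $\mu(J_1)+\mu(J_2)-\Sigma(M)\le\mu(K)$ (with a matching adjustment absorbing family (iii)). The case $\mu(K)=\Sigma(K)$ is immediate from $\Sigma(J_1)+\Sigma(J_2)=\Sigma(K)+\Sigma(M)$ together with $\mu(J_j)\le\Sigma(J_j)$ for $j=1,2$. The hard part, and the main obstacle, is the case $\mu(K)=m(K)<\Sigma(K)$: one must case-split on the location of $i^*\in\mathrm{argmax}_{i\in B(K)}|B_i|$, distinguishing $i^*\in B(J_1)\setminus B(M)$, $i^*\in B(J_2)\setminus B(M)$, and the delicate subcase $i^*\in B(M)$, where the $\ssigma$-type information at $M$ given by Property~\textbf{5} of \autoref{obs:val} (namely $|S_{i_0^M}|<\sum_{i\in B(M)\setminus\{i_0^M\}}|B^{W_2}_i|$ for $i_0^M\in\mathrm{argmax}_{i\in B(M)}|B_i|$) must be combined with a sharper bound on $\mu(J_j)$ or with the family (iii) contribution to close the inequality.
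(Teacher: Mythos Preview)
Your inductive reduction to an auxiliary two-interval claim has a genuine gap in the treatment of what you call family (iii). Concretely, with $K=J_1\unionint J_2$ and $M=J_1\interint J_2$, any interval $I'\in\I_{>0}$ with $M\subsetintstrict I'\subsetintstrict K$ is contained in neither $J_1$ nor $J_2$, so its positive value $\textsf{val}(I')$ is \emph{not} accounted for by your inclusion--exclusion $\mu(J_1)+\mu(J_2)-\textsf{val}(P_\I(M))$. Such straddling intervals can certainly exist, and ``a secondary induction on $|B(K)|-|B(M)|$'' is not a proof: you never state an inductive hypothesis, and there is no obvious functional of a single straddling interval $I'$ that decreases while still controlling $\textsf{val}(I')$.

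The paper sidesteps this difficulty by abandoning the two-interval reduction entirely. It works directly with the family $\mathcal{Z}^{\max}=\{I'_1,\dots,I'_y\}$ of \emph{all} $\subsetint$-maximal elements of $\I_{>0}$ strictly inside $I_\unionint$, proves (using the original chain $\mathcal{Z}$) that they again form a $\prec$-chain, and then observes that every interval of $\I_{>0}$ strictly inside $I_\unionint$ lies in some $P_\I(I'_\ell)$. Because the $I'_\ell$ form a $\prec$-chain, the inclusion--exclusion $\textsf{val}(\bigcup_\ell P_\I(I'_\ell))=\sum_\ell\textsf{val}(P_\I(I'_\ell))-\sum_\ell\textsf{val}(P_\I(I'_\ell\interint I'_{\ell+1}))$ is \emph{exact}: every interval is counted once. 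This absorbs all straddling intervals automatically, which your two-interval scheme cannot do.

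One more remark: the subcase you flag as ``delicate'', namely $i^*\in B(M)$ when $t(K)=\msynt$, is in fact impossible and costs nothing. Since $i^*$ maximises $|B_i|$ over $B(K)\supseteq B(M)$, it also maximises over $B(M)$; then Property~\textbf{5} of \autoref{obs:val} (using $t(M)=\ssigma$) gives $|S_{i^*}|<\sum_{i\in B(M)\setminus\{i^*\}}|B_i^{W_2}|\le\sum_{i\in B(K)\setminus\{i^*\}}|B_i^{W_2}|$, contradicting Property~\textbf{7} for $t(K)=\msynt$. The paper uses exactly this observation to force $i_0$ into one of the two ``extreme'' ends of the $\prec$-chain. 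So the real obstacle is not this subcase but the straddling-interval bookkeeping described above.
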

\begin{proof}
  We refer the reader to Figure~\ref{@entstandenen} for notations.
  \begin{figure}[!ht]
  \centering
  \includegraphics[scale=.71]{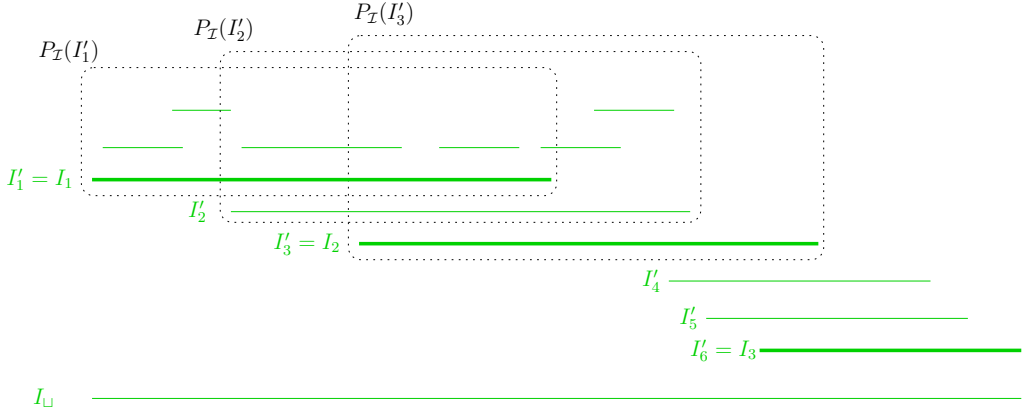}
  \caption{Interval of $\mathcal Z$ are represented in bold.}
\label{@entstandenen}
  \end{figure}

  Let $I_\unionint = \unionint_{I \in Z}I.$  Let
  also $${\mathcal{Z}^\textrm{max}}=\{I\in \I_{>0}\mid I \subseteq
  I_\unionint, I \neq I_\unionint \mbox{ and $I$ is $\subsetint$-wise
    maximal} \}.$$ Let $\mathcal{Z}^\textrm{max}=\{I'_1,\dots,I'_y\},$
  where intervals are ordered according to their left endpoint.  If
  $x=1,$ the lemma follows immediately, and thus we may assume $x \ge
  2$ which implies $y \ge 2$ as well.

Let us prove that we even have $I'_{i} \prec I'_{i+1},$ for any $i.$
Let $i \in [y-1]$ and let $j_0$ be the minimum $j$ such that $r_{I'_i}
< r_{I_j}.$  Such a $j_0$ exists, as $i \le y-1,$ and thus $r_{I'_i} <
r(I_\unionint).$  Moreover, $j_0 > 1$ as otherwise we would have $I'_i
\subsetintstrict I_{j_0},$ a contradiction to the fact that $I'_i$ is
$\subsetint$-wise maximal.  Notice first that $I_{j_0} \nsubsetint
I'_\ell,$ for any $\ell \le i.$  As $I_{j_0-1} \prec I_{j_0},$ we get
$l_{I_{j_0}} < r_{I_{j_0-1}}$ and, moreover, by the definition of
$j_0,$ we get $r_{I_{j_0-1}} \le r_{I'_i},$ implying $l_{I_{j_0}} <
r_{I'_i}.$  Thus, if, towards a contradiction, we do not have
$I_i'\prec I_{i+1}'$ then we had $r_{I'_i} \le l_{I'_{i+1}} $ and we
would have $l_{I_{j_0}} < l_{I'_{i+1}},$ implying that $I_{j_0}
\nsubsetint I'_\ell$ for any $\ell \ge i+1.$  Thus, there would be no
$I \in \mathcal{Z}^\textrm{max}$ such that $I_{j_0} \subsetint I,$
which is a contradiction as $I_{j_0} \neq I_\unionint.$  This
concludes the proof that $I'_{i} \prec I'_{i+1},$ for any $i.$

According to Lemma~\ref{@personalmente}, $I_\unionint \in \I \Leftrightarrow
\textsf{val}(P_\I(I_\unionint) \setminus \{I_\unionint\}) \le
\mu(I_\unionint),$ and thus our objective is to prove that
$\textsf{val}(P_\I(I_\unionint) \setminus \{I_\unionint\}) \le
\mu(I_\unionint).$

By definition of $\mathcal{Z}^\textrm{max},$ we get $P_\I(I_\unionint)
\setminus \{I_\unionint\} = \bigcup_{\ell \in [y]}P_\I(I'_\ell)$ (see
Figure~\ref{@entstandenen}).  Thus, $\textsf{val}(P_\I(I_\unionint)
\setminus \{I_\unionint\}) = \textsf{val}(\bigcup_{\ell \in
  [y]}P_\I(I'_\ell)).$  Let us now prove that
$\textsf{val}(\bigcup_{\ell \in [y]}P_\I(I'_\ell))=s,$
where $$s=\sum_{\ell \in [y]}\textsf{val}(P_\I(I'_\ell))-\sum_{\ell
  \in [y-1]}\textsf{val}(P_\I(I'_\ell) \cap P_\I(I'_{\ell+1})).$$ Let
$I \in \bigcup_{\ell \in [y]}P_\I(I'_\ell)$ and let us prove that
$\textsf{val}(I)$ appears exactly one time in $s.$  Let $\ell_1$
(resp. $\ell_2$) the minimum (resp. maximum) value such that $I \in
\P_I(I'_\ell).$  Observe that $\textsf{val}(I)$ appears
$\ell_2-\ell_1+1$ times in $\sum_{\ell \in
  [y]}\textsf{val}(P_\I(I'_\ell)),$ and $\ell_2-\ell_1$ times in
$\sum_{\ell \in [y-1]}\textsf{val}(P_\I(I'_\ell) \cap
P_\I(I'_{\ell+1})),$ and thus exactly $1$ time in $s.$  Moreover, as
for any $\ell \in [y-1],$ $P_\I(I'_\ell) \cap P_\I(I'_{\ell+1}) =
P_\I(I'_\ell \cap I'_{\ell+1}),$ we can even rewrite
  $$s=\sum_{\ell \in [y]}\textsf{val}(P_\I(I'_\ell))-\sum_{\ell \in [y-1]}\textsf{val}(P_\I(I'_\ell \cap I'_{\ell+1})).$$
  It now remains to prove that $s \le \mu(I_\unionint).$
   \begin{eqnarray*}
    s & = & \sum_{\ell \in [y]}\textsf{val}(P_\I(I'_\ell))-\sum_{\ell
      \in [y-1]}\textsf{val}(P_\I(I'_\ell \cap I'_{\ell+1})) \\ & = &
    \sum_{\ell \in [y]}\mu(I'_\ell)-\sum_{\ell \in
      [y-1]}\textsf{val}(P_\I(I'_\ell \cap I'_{\ell+1})), \mbox{ by
      Lemma~\ref{@personalmente} Property {\bf 3} as $I'_\ell \in \I$} \\ &
    \le & \sum_{\ell \in [y]}\mu(I'_\ell) -\sum_{\ell \in
      [y-1]}\mu(I'_\ell \cap I'_{\ell+1}), \mbox{ by
      Lemma~\ref{@personalmente} Property {\bf 2}} \\ & \le & \sum_{\ell \in
      [y]}\mu(I'_\ell) -\sum_{\ell \in [y-1]}\sum_{i \in B(I'_\ell
      \cap I'_{\ell+1})}|S_i|, \mbox{ by
      Lemma~\ref{lemma_intersection} and as we have } I_\ell'\in
    \I_{>0} \mbox{ for all } \ell \in [y] \\
  \end{eqnarray*}

   Let us now distinguish two cases according to $t(I_\unionint).$\\
   
   {\bf Case 1:} $t(I_\unionint) \in \{\ssigma,\eqsynt\}.$\\ In this
   case, $\mu(I_\unionint)=\sum_{i \in B(I_\unionint)}|S_i|,$ and we
   have the following.
   \begin{eqnarray*}
s & \le & \sum_{\ell \in [y]}\sum_{i \in B(I'_\ell)}|S_i| -\sum_{\ell
  \in [y-1]}\sum_{i \in B(I'_\ell \cap I'_{\ell+1})}|S_i|, \mbox{ by
  definition of $\mu$}\\ & = & \sum_{i \in B(\bigcup_{\ell \in
    [y]}I'_l)}|S_i|, \mbox{ as for any $i,$ the term $|S_i|$ appears
  exactly once} \\ & = & \Sigma(I_\unionint) \\ & \le &
\mu(I_\unionint) \mbox{ as we are in case 1}
  \end{eqnarray*}

{\bf Case 2:} $t(I_\unionint) = \msynt.$\\ Let $i_0 \in \textrm{
  argmax}_{i \in B(I_\unionint)}|B_i|.$  In this case,
$\mu(I_\unionint)=\sum_{i \in B(I_\unionint)}|B_i|-|B_{i_0}|,$ and
according to \autoref{@personalmente} Property {\bf 7}, we have $|S_{i_0}| >
\sum_{i \in B(I_\unionint) \setminus \{i_0\}} |B_i^{W_2}|.$
  
  Let us prove the following property: Let $B^* = \Big( B(I'_1)
  \setminus \bigcup_{2 \le \ell \le y}B(I'_\ell) \Big) \cup \Big(
  B(I'_y) \setminus \bigcup_{1 \le \ell \le y-1}B(I'_\ell)\Big),$ then
  $i_0 \in B^*.$  This property means that $i_0$ is in one of the two
  ``extreme'' sides of the union.  For example, in
  \autoref{@gaspilleront}, where $I'_1=I_4, I'_2=I_5,$ {and}
  $I_\unionint = I_6,$ we have that $t(I_\unionint)=\msynt,$ $i_0=4,$
  and $i_0 \in B(I'_2) \setminus B(I'_1).$
  
  Observe that, as $I'_\ell \prec I'_{\ell+1}$ for any $\ell,$ it
  holds that, for all $i \in B(I_\unionint) \setminus B^*,$ there
  exists $\ell$ such that $i \in B(I'_\ell) \cap B(I'_{\ell+1}).$
  Assume, towards a contradiction, that $i_0 \notin B^*,$ implying
  that there exists $\ell_0$ such that $i_0 \in B(I'_{\ell_0}) \cap
  B(I'_{\ell_0+1}).$  Therefore $i_0 \in B(I'_\interint),$ where
  $I'_\interint = I'_{\ell_0} \interint I'_{\ell_0+1}.$  By
  \autoref{lemma_intersection}, as $I'_{\ell_0}$ and $I'_{\ell_0}$ are
  in $\I_{>0}$ and $I'_{\ell_0} \prec I'_{\ell_0+1},$ we get
  $t(I'_\interint)=\ssigma.$  By \autoref{@personalmente} Property {\bf 5},
  and as $i_0 \in \textrm{ argmax}_{i \in B(I'_\interint)},$ we get
  $|S_{i_0}| < \sum_{i \in B(I'_\interint) \setminus \{i_0\}}
  |B_i^{W_2}| < \sum_{i \in B(I_\unionint) \setminus \{i_0\}}
  |B_i^{W_2}|,$ a contradiction.
  
  Hence the property holds and, without loss of generality, we may
  assume $i_0 \in B(I'_1) \setminus \bigcup_{2 \le \ell \le
    y}B(I'_\ell).$  Note that the case where $i_0 \in B(I'_y)
  \setminus \bigcup_{1 \le \ell \le y-1}B(I'_\ell)$ is symmetric.

  Observe that $i_0 \in \textrm{ argmax}_{i \in B(I'_1)} |B(i)|$ which
  implies $m(I'_1)=\sum_{i \in B(I'_1) \setminus \{i_0\}}|B_i|.$
  
As $i_0 \in B(I'_1) \setminus \bigcup_{2 \le \ell \le y}B(I'_\ell),$
we have $i_0 \in [l(I'_1),l(I'_2)[,$ and thus
      $$m(I'_1)=\sum_{i \in [l(I'_1),l(I'_2)[ \setminus \{i_0\}}|B_i|+\sum_{i \in [l(I'_2),r(I'_1)]}(|S_i|+|B^{W_2}_i|).$$  
{We conclude that}
           \begin{eqnarray*}
  s & \le &  \sum_{\ell \in [y]}\mu(I'_\ell) -\sum_{\ell \in [y-1]}\sum_{i \in B(I'_\ell \cap I'_{\ell+1})}|S_i| \\
  & \le & m(I'_1)+\sum_{2 \le \ell \le y}\Sigma(I'_\ell)  -\sum_{\ell \in [y-1]}\sum_{i \in B(I'_\ell \cap I'_{\ell+1})}|S_i| \\
    & = & \sum_{i \in [l(I'_1),l(I'_2)[ \setminus \{i_0\}}|B_i|+\sum_{i \in [l(I'_2),r(I'_1)]}(|S_i|+|B^{W_2}_i|) +\\
        & & ~~~~~~~~~~~~~~~~~~~~~~~~~~~~~~~~~~~~~~~~~~~~~~~~~~~~~~~~  \sum_{2 \le \ell \le y}\sum_{i \in B(I'_\ell)}|S_i|-\sum_{\ell \in [y-1]}\sum_{i \in B(I'_\ell \cap I'_{\ell+1})}|S_i| \\ 
        & = & \sum_{i \in [l(I'_1),l(I'_2)[ \setminus \{i_0\}}|B_i|+\sum_{i \in [l(I'_2),r(I'_1)]}|B^{W_2}_i|+\sum_{1 \le \ell \le y}\sum_{i \in B(I^*_\ell)}|S_i|-\sum_{\ell \in [y-1]}\sum_{i \in B(I^*_\ell \cap I^*_{\ell+1})}|S_i| \\
            &~ & ~\mbox{ by defining $I^*_1 = [l(I'_2),r(I'_1)]$ and $I^*_\ell = I'_\ell,$ for any $2 \le \ell \le y$}\\
          & = & \sum_{i \in [l(I'_1),l(I'_2)[ \setminus \{i_0\}}|B_i|+\sum_{i \in [l(I'_2),r(I'_1)]}|B^{W_2}_i|+ \sum_{i \in B(\bigcup_{\ell \in [y]}I^*_l)}|S_i| \\
              & ~& ~\mbox{ as, for any $i,$ each term $|S_i|$ appears exactly once} \\
              & = & \sum_{i \in B(I'_1) \setminus \{i_0\}}|B_i|+\sum_{i \in B(I_\unionint) \setminus B(I'_1)}|S_i| \\
              & \le & \sum_{i \in B(I'_1) \setminus \{i_0\}}|B_i|+\sum_{i \in B(I_\unionint) \setminus B(I'_1)}|B_i| \\
              & = & m(I_\unionint) \\
                & \le & \mu(I_\unionint), \mbox{ as we are in Case 2.}
                       \end{eqnarray*}
\end{proof}

The following lemma will be used in both cases of our kernelization algorithm  (rainbow matching or small vertex cover)  in order to prove
  that what is added in the kernel is small.
\begin{lemma}[The small total demand property]\label{lemma_valblock}
  Let $(W,B,C,B^{W_1},B^{W_2})$ be a partial decomposition and  
$(\I,\textsf{val})$ be its demand.  Let $\J \subseteq \I_{>0},$ and
let $\J^\textrm{max}\subseteq \J$ be the subset of $\subsetint$-wise
maximal bucket intervals of $\J.$  Let $\{Z_\ell, \ell \in [t]\}$ and
$\{I^*_\ell,\ell \in [t]\}$ be the block partition and block intervals
of $\J^\textrm{max}.$  Then, we have $\textsf{val}(\J)\le \sum_{\ell
  \in [t]}\mu(I^*_\ell).$
\end{lemma}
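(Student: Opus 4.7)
The plan is to reduce the statement to a sum of per-block bounds, where each block bound is an immediate consequence of the union property (Lemma~\ref{lemma_union}) combined with Property~3 of Lemma~\ref{obs:val}. I will proceed in three steps.

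First, I would establish the decomposition
\[
\J \;=\; \bigsqcup_{\ell \in [t]} P_{\J}(I^*_\ell),
\]
as a \emph{disjoint} union. The equality $\J = \bigcup_{\ell} P_{\J}(I^*_\ell)$ is exactly Lemma~\ref{lemma_block} applied to $\J$. For disjointness, I would use the block partition structure: by construction of $\{Z_\ell\}$, consecutive blocks satisfy $r_{I^*_\ell} \le l_{I^*_{\ell+1}}$. If some $I \in \J$ belonged simultaneously to $P_{\J}(I^*_\ell)$ and $P_{\J}(I^*_{\ell'})$ with $\ell < \ell'$, then $r_I \le r_{I^*_\ell} \le l_{I^*_{\ell'}} \le l_I$, contradicting $l_I < r_I$. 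Hence the union is disjoint and $\textsf{val}(\J) = \sum_{\ell} \textsf{val}(P_{\J}(I^*_\ell))$.

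Second, I would prove the per-block inequality $\textsf{val}(P_{\J}(I^*_\ell)) \le \mu(I^*_\ell)$ for each $\ell \in [t]$. By definition of the block partition, $Z_\ell = \{I_{x_{\ell-1}+1}, \dots, I_{x_\ell}\}$ is a $\prec$-chain of intervals in $\J^{\textrm{max}} \subseteq \J \subseteq \I_{>0}$, with $I^*_\ell = \sqcup_{I \in Z_\ell} I$. Therefore the union property (Lemma~\ref{lemma_union}) applies and gives $I^*_\ell \in \I$. Property~3 of Lemma~\ref{obs:val} then yields $\textsf{val}(P_{\I}(I^*_\ell)) = \mu(I^*_\ell)$. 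Since $\J \subseteq \I$, we have $P_{\J}(I^*_\ell) \subseteq P_{\I}(I^*_\ell)$, so
\[
\textsf{val}\bigl(P_{\J}(I^*_\ell)\bigr) \;\le\; \textsf{val}\bigl(P_{\I}(I^*_\ell)\bigr) \;=\; \mu(I^*_\ell).
\]

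Finally, summing the per-block inequalities and combining with the disjoint-union decomposition of the first step,
\[
\textsf{val}(\J) \;=\; \sum_{\ell \in [t]} \textsf{val}\bigl(P_{\J}(I^*_\ell)\bigr) \;\le\; \sum_{\ell \in [t]} \mu(I^*_\ell),
\]
which is the desired bound. The only nontrivial point is the disjointness argument in the first step, and all the real combinatorial work has been absorbed into Lemmas~\ref{lemma_union} and~\ref{obs:val}; there should be no further obstacle.
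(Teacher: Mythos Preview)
Your proof is correct and follows essentially the same route as the paper: decompose $\J$ via Lemma~\ref{lemma_block}, apply Lemma~\ref{lemma_union} to each block $Z_\ell$ to get $I^*_\ell \in \I$, and then use Property~3 of Lemma~\ref{obs:val} together with $P_{\J}(I^*_\ell)\subseteq P_{\I}(I^*_\ell)$ to bound each term by $\mu(I^*_\ell)$. The only difference is that you additionally argue disjointness of the $P_{\J}(I^*_\ell)$ to obtain an equality in the first step; the paper is content with the inequality $\textsf{val}(\J)\le \sum_\ell \textsf{val}(P_{\J}(I^*_\ell))$, which already follows from the (not necessarily disjoint) union and nonnegativity of $\textsf{val}$.
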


\begin{proof}
According to \autoref{lemma_block} we have, $\J = \bigcup_{\ell \in
  [t]}\P_{\J}(I^*_\ell)$ and thus, $\textsf{val}(\J) \le \sum_{\ell
  \in [t]}\textsf{val}(\P_{\J}(I^*_\ell)).$  Let $\ell \in [t].$  As
$Z_\ell = \{I_{x_{l-1}+1},\dots,I_{x_l}\},$ where $I_{x_{l-1}+i} \prec
I_{x_{l-1}+i+1}$ for any $i,$ and all $I \in Z_\ell$ satisfy
$\textsf{val}(I) > 0,$ \autoref{lemma_union} implies that $I^*_\ell
\in \I.$  Thus, \autoref{@personalmente} Property {\bf 3} implies that
$\textsf{val}(P_\J(I^*_\ell)) \le \mu(I^*_\ell),$ implying in turn
that $\textsf{val}(\J) \le \sum_{\ell \in [t]}\mu(I^*_\ell).$
\end{proof}

\subsection{Auxiliary graph and bucket allocation}\label{@dialectician}
In this section we first define an auxiliary graph based
  on a demand. Then, we prove in \autoref{lemma_safebucketT} that if a
  rainbow matching gives us in particular a matching for all the
  colored loops of this auxiliary graph, which correspond to what we
  call here a \emph{bucket all-ocation}, then we can repack any
  triangle-packing contained in $W \cup B$ into this bucket
  allocation.
  
We refer the reader to \autoref{@reciprocation} for an illustration
of the situation.
  \begin{figure}[!ht]
  \centering
  \includegraphics[scale=.90]{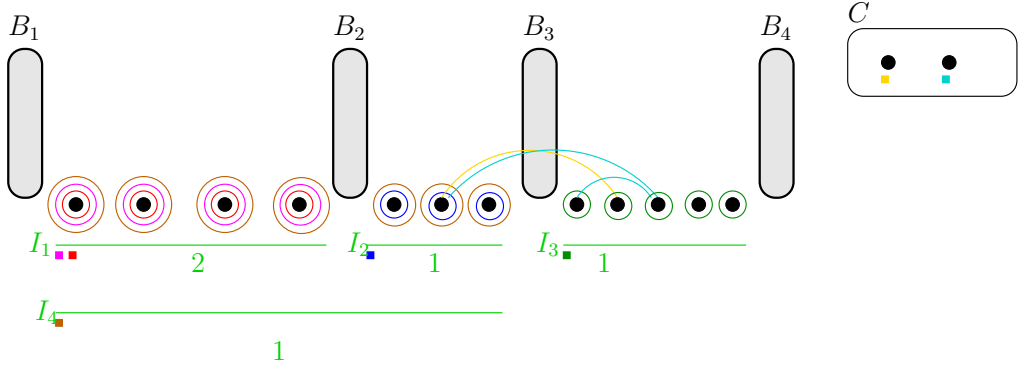}
  \caption{Example of an auxiliary graph. Squares indicate the colors associated to each interval or element of $C.$
    For example, $D_{I_1}$ is composed of colors pink and red. Notice that there is a rainbow matching.}
\label{@reciprocation}
  \end{figure}

\begin{definition}[Auxiliary graph]
Let $(W,B,C,B^{W_1},B^{W_2})$ be a partial decomposition and
$(\I,\textsf{val})$ be its demand.  Let also $p=|C|+\sum_{i\in
  \I_{>0}}\textsf{val}(I).$  We next define the $p$-edge-colored mutigraph
$(G, χ)\langle W,B,C,B^{W_1},B^{W_2}\rangle$ where the vertex set of
$G$ is $W$ and the edges of $G,$ as well as their colors, are defined
as follows.  We start with $E_1=E_2=\emptyset.$ Then, for any $c \in
C,$ for any $\{u,w\} \subseteq W$ such that $\{u,w,c\}$ is a triangle,
we add the edge $e=\{u,w\}$ to $E_1$ and we set $χ(e)=c.$  Moreover,
for any $I \in \I_{>0},$ we define a set of new colors $D_I$ where
$|D_I|=\textsf{val}(I).$  Then, for any $u \in D_I$ and $v \in W(I),$
we add to $E_2$ edge $e=\{v\}$ and we set $χ(e)=u.$  Finally, we
denote $\bigcup_{I \in \I_{>0}}D_I$ by $D.$
\end{definition}

\stf{Notice that if the partial decomposition
  $(W,B,C,B^{W_1},B^{W_2})$ is clean, then for any vertex $c\in C$,
  there exists an edge with color $c$ in $E_1$.} Notice also that the
value of $p$ is a polynomial function of the size $|V(T)|$ of the
tournament.  In fact, $p=\O(|V(T)|^3).$

\begin{definition}[Bucket allocation]
Let $(W,B,C,B^{W_1},B^{W_2})$ be a partial decomposition and
$(\I,\textsf{val})$ be its demand.  A \emph{bucket allocation} for
$(W,B,C,B^{W_1},B^{W_2})$ is a set $\Q=\{Q_I,I\in \I_{>0}\}$ such that
\begin{itemize}
	\item for any $I \in \I_{>0},$ $Q_I \subseteq W(I)$
	\item for any $I \in \I_{>0},$ $|Q_I| = \textsf{val}(I)$
	\item for any $I_1$ and $I_2$ in $\I_{>0},$ $Q_{I_1} \cap Q_{I_2} = \emptyset$
\end{itemize}
We also denote $V(\Q)=\bigcup_{I \in \I_{>0}}Q_I.$
\end{definition}

The next lemma shows that a bucket allocation allows to repack any
packing triangle inside $\T[W \cup B].$

\begin{lemma}[Safeness of a bucket allocation for packing]\label{lemma_safebucketT}
Let $(W,B,C,B^{W_1},B^{W_2})$ be a partial decomposition and let $\Q$
be a bucket allocation for $(W,B,C,B^{W_1},B^{W_2}).$  Let also $\P$
be a triangle-packing with $V(\P) \subseteq W \cup B.$  Then, there
exists a triangle-packing $\P'$ such that
\begin{itemize}
	\item $V(\P') \subseteq V(\Q) \cup B$ and 
	\item $|\P'|=|\P|$
\end{itemize}
\end{lemma}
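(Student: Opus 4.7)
The plan is to split $\mathcal{P}=\mathcal{P}_B\sqcup \mathcal{P}_W$, where $\mathcal{P}_B$ collects the triangles contained in $B$ (which are kept verbatim in $\mathcal{P}'$) and $\mathcal{P}_W$ collects those using a vertex of $W$. For each $\Delta\in\mathcal{P}_W$, \autoref{obs:structBUW} assigns a unique bucket interval $I(\Delta)\in\BI(\npair)$ whose endpoints label the buckets containing the two $B$-vertices of $\Delta$, and guarantees that the $W$-vertex of $\Delta$ may be replaced by any vertex of $W(I(\Delta))$ while preserving the triangle. I form the bipartite graph $H$ with parts $\mathcal{P}_W$ and $V(\mathcal{Q})$ where $\Delta\sim v$ iff $v\in W(I(\Delta))$. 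A matching of $H$ saturating $\mathcal{P}_W$ defines $\Delta':=(\Delta\cap B)\cup\{v_\Delta\}$, producing pairwise vertex-disjoint triangles (disjoint $B$-parts since $\mathcal{P}$ is a packing, distinct $v_\Delta$'s by the matching, and disjoint from $\mathcal{P}_B$ since $v_\Delta\in V(\mathcal{Q})\subseteq W$). Thus $\mathcal{P}':=\mathcal{P}_B\cup \mathcal{P}'_W$ is the desired packing, and the whole proof reduces to exhibiting such a matching.

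To verify Hall's condition $|S|\le |N_H(S)|$ for $S\subseteq\mathcal{P}_W$, I take the $\subsetint$-maximal elements $\mathcal{J}^{\max}(S)$ of $\{I(\Delta):\Delta\in S\}$ (a proper family) and apply the block partition to obtain $\prec$-chains $Z_1,\dots,Z_t$ with block intervals $I^*_1,\dots,I^*_t$. The block intervals have pairwise disjoint supports in $[t_0]$ (by construction of the block partition), so the sets $W(I^*_\ell)$ are pairwise disjoint, and $\Delta\mapsto \ell$ with $I(\Delta)\subsetint I^*_\ell$ is a well-defined partition $S=\bigsqcup_\ell S_\ell$ satisfying $N_H(S)=\bigsqcup_\ell N_H(S_\ell)$. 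Hall's inequality thereby reduces to the per-block claim $|S_\ell|\le |N_H(S_\ell)|$.

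Inside a single block $Z_\ell=\{I_1\prec\cdots\prec I_m\}$, the defining condition $l_{I_{j+1}}<r_{I_j}$ of $\prec$ makes the intervals $[l_{I_j},r_{I_j})$ a connected cover of $[l_{I^*_\ell},r_{I^*_\ell})$, so $\bigcup_j W(I_j)=W(I^*_\ell)$. Since every $I_j$ occurs as $I(\Delta)$ for some $\Delta\in S_\ell$ (because $I_j\in\mathcal{J}^{\max}(S)\subseteq\mathcal{J}(S)$ and any witness $\Delta$ lies in $S_\ell$), this yields $\bigcup_{\Delta\in S_\ell}W(I(\Delta))=W(I^*_\ell)$, hence $N_H(S_\ell)=V(\mathcal{Q})\cap W(I^*_\ell)$. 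Its cardinality is at least $\sum_{I\in\I_{>0},\,I\subsetint I^*_\ell}|Q_I|=\textsf{val}(P_{\I_{>0}}(I^*_\ell))$, which by Property~2 of \autoref{obs:val} is $\ge\mu(I^*_\ell)$.

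For the opposite bound, every $\Delta\in S_\ell$ uses two vertices in buckets indexed by $B(I^*_\ell)$; the standard handshake-style bound on edges of a multigraph with vertex capacities $|B_i|$ (case-splitting on whether the largest bucket dominates or not) gives $|S_\ell|\le m(I^*_\ell)$. Moreover, because $T[W^0]$ is acyclic and the $W$-vertex of $\Delta$ lies in $W^0$, at least one of the two $B$-vertices of $\Delta$ must belong to $C^0$, hence to $B_i\cap C^0=B_i^C\subseteq S_i$ for some $i\in B(I^*_\ell)$; vertex-disjointness of $\mathcal{P}$ yields $|S_\ell|\le\Sigma(I^*_\ell)$. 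Combining, $|S_\ell|\le\min(m(I^*_\ell),\Sigma(I^*_\ell))=\mu(I^*_\ell)\le |N_H(S_\ell)|$, Hall's theorem supplies the matching, and $\mathcal{P}'$ is constructed. I expect the main obstacle to be the telescoping identity $\bigcup_j W(I_j)=W(I^*_\ell)$: it is this identity that forces every $V(\mathcal{Q})$-vertex inside the block interval to be reachable from $S_\ell$, thereby neutralizing the otherwise adversarial flexibility in the choice of $\mathcal{Q}$.
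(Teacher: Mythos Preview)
Your proof is correct and follows essentially the same approach as the paper's: both reduce to Hall's theorem on a bipartite structure linking the triangles of $\mathcal{P}_W$ to $V(\mathcal{Q})$, and both verify Hall's condition via the two-sided bound $|S_\ell|\le\mu(I^*_\ell)\le\textsf{val}(P_{\I_{>0}}(I^*_\ell))\le|N_H(S_\ell)|$ after an interval-covering argument. The only cosmetic differences are that the paper routes the argument through an auxiliary graph $H$ on vertex set $B$ (encoding the backward arcs of $\mathcal{P}_2$ as a matching) before applying Hall, and that the paper decomposes the Hall set via the connected components of $\bigcup_e[l_{I(e)},r_{I(e)}]$ rather than the block partition of the $\subsetint$-maximal intervals; both decompositions yield the needed covering identity $\bigcup W(I(\Delta))=W(I^*_\ell)$.
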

\begin{proof}
According to \autoref{@significativo}, we can partition $\P=\P_1 \cup
\P_2$ such that, for any $\Delta \in \P_1,$ $\Delta \subseteq B$ and,
moreover, for any $\Delta \in \P_2,$ there exists $I \in \BI(W,B)$
such that $|\Delta \cap B_{l_I}|=|\Delta \cap B_{r_I}|=1$
and $\Delta \cap W \subseteq W_I.$  Let $A(\P_2) = \{\Delta \cap B,
\Delta \in \P_2\}$ be the set of backward arcs used by triangles in
$\P_2.$

\stf{Let us now define an auxiliary graph $H$ that will help us to describe
how to repack $\P.$  The vertex set of $H$ is $B$ and for any bucket
interval $I=[l_I,r_I]$ of $(W,B),$ we define $E_I=\{\{u,v\}\mid u \in
B_{l_I} \setminus B^{W_2}_{l_I} \mbox{ or } u \in B_{r_I} \setminus
B^{W_2}_{r_I}\},$ and we set $E(H) = \bigcup_{\BI(W,B)}E_I.$  Informally,
for any $i \neq i',$ $H[B_i \cup B_{i'}]$ is the complete bipartite
graph $K_{|B_i|,|B'_i|}$ where we remove all edges between $B^{W_2}_i$ and
$B^{W_2}_{i'}.$  For any $e \in E(H),$ we denote $I(e)$ the unique bucket
interval $I$ such that $e \in E_I.$}
     
\stf{Observe that $A(\P_2) \subseteq E(H)$ as for any $\{u,v\} \in
A(\P_2),$ where $\{u,v\}=\Delta \cap B,$ there exists $I \in \BI(W,B)$
such that $u \in B_{l(I)},$ $v \in B_{r(I)}.$ Moreover, we cannot have
$u \in B^{W_2}_{l(I)}$ and $v \in B^{W_2}_{r(I)},$ as this would imply
$\Delta \subseteq W_0.$ Finally, as $\P_2$ is a packing, $A(\P_2)$ is
even a matching in $H.$}

\paragraph*{Property $\Pi_1.$}
Let us prove the following property $\Pi_1$: 
     
\begin{quote}{\sl Property $\Pi_1$}: 
For any $I \in \BI(W,B),$ $\textsf{MaxM}(H(I)) \le \mu(I),$ where
$\textsf{MaxM}$ denotes the size of a maximum matching, and $H(I)=
H[\bigcup_{i \in B(I)}B_i].$
\end{quote}     

\noindent \textsl{Proof of $\Pi_1$}: Firstly, as any edge in $H(I)$
uses at least one vertex from a set $S_i$ for $i \in B(I),$ we deduce
that $\textsf{MaxM}(H(I)) \le \Sigma(I).$  Let $i_0 = \textrm{
  argmax}_{i \in B(I)}|B_i|.$  Moreover, as for any graph $G'$ and any
independent set $X' \subseteq V(G')$ we have $\textsf{MaxM}(G') \le
|V(G')|-|X'|,$ and as $B_{i_0}$ is an independent set in $H,$ we get
$\textsf{MaxM}(H(I)) \le m(I).$ This implies that $\textsf{MaxM}(H(I))
\le \mu(I).$

\paragraph*{Property $\Pi_2.$}
We proceed by proving property $\Pi_2$ which allows us to associate,
in a well defined way, a vertex in $V(\Q)$ to any arc in $A(\P_2).$
An example of this property can be found in
\autoref{@mountaintops}.
     
 \begin{figure}[!ht]
  \centering
  \includegraphics[scale=.90]{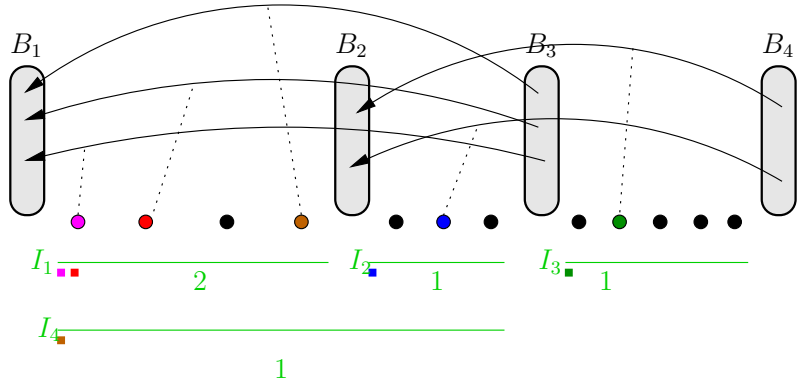}
  \caption{Example of property $\Pi_2.$ The bucket allocation is
    represented by the non-black vertices. Function $f,$ which
    associates injectively to each backward arc of the initial packing
    a vertex of the bucket allocation, is depicted using dashed
    lines.}
\label{@mountaintops}
  \end{figure}

\begin{quote}
\textsl{Property $\Pi_2$}:  For any matching $M$ in $H,$ there is a function $f$ from $M$ to $V(\Q)$ such that $f$ is injective and, for any $e \in M,$ $f(e) \in W(I(e)) \cap V(\Q).$
\end{quote}
     
\noindent \textsl{Proof of $\Pi_2$}: Let $U$ be a bipartite graph,
where $V(U)=(M,Q),$ and for any $e \in M,$ $N(e)=W(I(e)) \cap V(\Q).$
What remains is the proof that there is a perfect matching $f$ (which
associates to each $e \in M$ a vertex $f(e) \in N(e)$) in $U$ which
saturates $M.$  According to Hall's Theorem, it is sufficient to prove
that, for any $M' \subseteq M,$ $|N(M')| \ge |M'|.$

Let $M' \subseteq M.$  Observe that $\bigcup_{e \in
  M'}[l_{I(e)},r_{I(e)}]$ is a union of $t \ge 1$ disjoint intervals
denoted $\{I'_1,\dots,I'_t\},$ and that $M'$ can be partitioned into
$\bigcup_{i \in [t]}M'_t$, such that for any $i \in [t],$ $\bigcup_{e
  \in M'_i}[l_{I(e)},r_{I(e)}]$ $=I'_i.$

As an example, consider a matching $M' = \{e_1,e_2,e_3,e_4,e_5\}$
where
$I(e_1)=[1,10],I(e_2)=[3,4],I(e_3)=[10,11],I(e_4)=[20,22],I(e_5)=[21,23],$
$\bigcup_{e \in M'}[l_{I(e)},r_{I(e)}]=\{[1,11],[20,23]\}$ and $M'_1 =
\{e_1,e_2,e_3\}.$
     
Let $i \in [t].$  Observe that, for any $v \in W(I'_i),$ there exists
$e \in M'_i$ such that $v \in W(I(e)).$  Thus, for any vertex $v \in
\bigcup_{I' \in P_{\I_{>0}}(I'_i)}Q_{I'},$ as $v$ belongs to
$W(I'_i),$ and as there exists $e \in M'_i$ such that $v \in W(I(e)),$
we get $v \in W(I(e)) \cap Q = N(e).$ This implies $\bigcup_{I' \in
  P_{\I_{>0}}(I'_i)}Q_{I'} \subseteq N(M'_i).$  As $Q$ is a bucket
allocation for $(W,B,C,B^{W_1},B^{W_2}),$ we know that the $Q_{I'}$
are disjoint and $|Q_{I'}|=\textsf{val}(I'),$ implying that
$$|\bigcup_{I' \in P_{\I_{>0}}(I'_i)}Q_{I'}|=\sum_{I' \in P_{\I_{>0}}(I'_i)}|Q_{I'}|=\sum_{I' \in P_{\I_{>0}}(I'_i)}\textsf{val}(I')=\textsf{val}(P_{\I_{>0}}(I'_i)).$$
According to \autoref{@personalmente}, $\textsf{val}(P_{\I_{>0}}(I'_i)) \ge \mu(I'_i).$ According to Property $\Pi_1,$ $\mu(I'_i) \ge \textsf{MaxM}(H(I'_i)).$
This implies that $|N(M'_i)| \ge \textsf{MaxM}(H(I'_i)).$
Since $M'$ is a matching in $H,$ $M'_i$ is a matching in $H(I'_i).$
This implies $\textsf{MaxM}(H(I'_i)) \ge |M'_i|$ and therefore $|N(M'_i)| \ge |M'_i|.$
To conclude, let us now consider the partition $\{N(M'_i)\mid i \in [t]\}$ of $N(M').$
As the $N(M'_i)$s are vertex-disjoint, we get
$$|N(M')|=\sum_{i \in [t]}|N(M'_i)| \ge \sum_{i \in [t]}\textsf{MaxM}(H(I'_i)).$$
As $M'$ is a matching in $\bigcup_{i \in [t]}H(I'_i),$ and as $\sum_{i \in [t]}\textsf{MaxM}(H(I'_i)) \ge \textsf{MaxM}(\bigcup_{i \in [t]}H(I'_i)),$ we conclude that $|N(M')| \ge |M'|.$\medskip

We can now conclude the proof of Lemma.  As $A(\P_2)$ is a matching in
$H,$ by Property $\Pi_2,$ there is a function $f$ from $A(\P_2)$ to
$V(\Q)$ such that $f$ is injective and, for any $e \in A(\P_2),$ $f(e)
\in W(I(e)) \cap V(\Q).$  By \autoref{@significativo}, $e \cup f(e)$ is
still a triangle.  Thus, we define $\P'_2 = \{e \cup f(e), e \in
A(\P_2)\}$ and $\P' = \P_1 \cup \P'_2.$ As $V(\P'_2) \cap B = V(\P_2)
\cap B$ and as $f$ is injective, $\P'$ is still a triangle-packing,
and $|\P'|=|\P|.$  Moreover, as $V(\P_1) \subseteq B$ and $V(\P'_2)
\subseteq B \cup V(\Q),$ we get $V(\P') \subseteq B \cup V(\Q).$
\end{proof}


\stf{Finally, the last lemma of the section indicates how to build a
  feedback vertex set of $T[W\cup B]$ with the help of a bucket
  allocation. This will be usefull for the kernel of \FVST designed
  Section~\ref{@linearKernelIPHS}. We enounce this lemma here, as the
  notations and the techniques used in its proof are similar, though
  easier, than in the previous lemma.}

\stf{
\begin{lemma}[Safeness of a bucket allocation for hitting]\label{lemma_safebucketT_hit}
Let $(W,B,C,B^{W_1},B^{W_2})$ be a partial decomposition and let $\Q$
be a bucket allocation for $(W,B,C,B^{W_1},B^{W_2}).$ Let also $X$ be
a feedback vertex set of $T[V(\Q)\cup B]$.  Then, there exists $X'$ a
feedback vertex set of $T[W\cup B]$ with $|X'|\le|X|$.
\end{lemma}
}
\begin{proof}\stf{
We partition $X$ into two sets $X_B=X\cap B$ and $X_{\Q}=X\cap V(\Q)$.
And we define $H$ being the graph on vertex set $B$ and with edge set
$\{uv\ :\ uv\in A(T), u\in B_i\setminus X_B, v\in B_j\setminus X_B
\textrm{ with } j<i\}$. Informally, from $T[B]$ we only keep in $H$
the backward arcs between the $B_i$'s not incident with vertices of
$X_B$. Notice that in particular, there is no edge in $H$ between any
two $B_i^{W_2}$, as there where originally part of $W_0$.}

\paragraph*{Property $\Pi_1'.$}
\stf{Let us prove the following property $\Pi_1'$: 
\begin{quote}{\sl Property $\Pi_1'$}: 
For any $I \in \BI(W,B),$ $\textsf{MinVC}(H(I)) \le \mu(I),$ where
$\textsf{MinVC}$ denotes the size of a minimum vertex cover, and
$H(I)= H[\bigcup_{i \in B(I)}B_i].$
\end{quote}     
}
\noindent \textsl{Proof of $\Pi_1'$}: \stf{As every edge of $H$ is incident
with a vertex of one $S_i$ (which is $B_i\setminus B_i^{W_2}$), the
set $\cup_i S_i$ is a vertex cover of $H$, and then
$\textsf{MinVC}(H(I)) \le \Sigma(I)$. Similarly, let $i_0 = \textrm{
  argmax}_{i \in B(I)}|B_i|$, the set $(\cup_i B_i) \setminus B_{i_0}$
is clearly a vertex cover of $H$. Then we get $\textsf{MinVC}(H(I))
\le m(I)$, and finally conclude that $\textsf{MinVC}(H(I)) \le \mu(I)$.\\}

\stf{Now, let us consider an arc $uv$ of $T$ such that $uv\in E(H)$. We
have $u\in B_i$ and $v\in B_j$ for some $j<i$ and denote by $I$ the
bucket interval $[j,i]$. By Observation~\ref{@significativo}, $uvw$ is
a triangle for every vertex $w$ of $W(I)$. In particular, for every
$w\in W(I)\cap V(\Q)$ we must have $w\in X_{\Q}$.\\ So, denote by
$\cal J$ the set of bucket intervals whose extremities contains the
end of an edge of $H$ (formally, $[j,i] \in {\cal J}$ if there exists
$uv\in E(H)$ with $u\in B_i$ and $v\in B_j$). Let $J^\star$ be a block
of the block partition of $\cal J$, meaning that $J^\star$ is a
non-empty, inclusion-wise minimal bucket interval with the property
that for every $J\in {\cal J}$, either $J\cap J^\star=\emptyset$ or
$J\subseteq J^\star$. By the previous remark remark, we know that
$X_{\Q}$ contains at least $|W(J^\star)\cap V(\Q)|$ vertices in
$W(J^\star)$. As $\Q$ is a bucket allocation, we have in particular
that $|W(J^\star)\cap V(\Q)|=|\cup \{ Q_J\ :\ Q_J\in {\cal Q} \textrm{
  and } J\subseteq J^\star\}|=\textsf{val}(P_{\I}(J^\star))\ge
\mu(J^\star)$ by Lemma~\ref{@personalmente}.}

\stf{Finally, by Property $Pi_1'$, there exists a vertex cover
$X_{J^\star}'$ of $H(J^\star)$ containing at most $\mu(J^\star)\le
|W(J^\star)\cap X_\Q|$ vertices. Considering all the block intervals
$J_1^\star,\dots ,J_p^\star$ of $\cal J$, we obtain a vertex cover
$X'_\Q=\cup_{j} X_{J_j^\star}'$ of $H$ such that $|X'_\Q|\le
|X_\Q|$. In particular, for every arc $uv$ of $T$ with $u\in B_i$ and
$v\in B_j$ with $j<i$ we have $u\in X'_\Q\cup X_B$ or $u\in X'_\Q\cup
X_B$.\\ We now define $X'= X'_\Q\cup X_B$. We have $|X'|\le |X|$ and
$X'$ is a feedback vertex set of $T[W\cup B]$. Indeed, let $\Delta$ by
a triangle of $T[W\cup B]$. If $V(\Delta)\subseteq B$, then
$V(\Delta)\cap X_B\neq \emptyset$. Otherwise, as $(W,B)$ is a nice
pair, there exist buckets $B_i$ and $B_j$ with $j<i$ such that
$\Delta=uvw$ with $u\in B_i$, $v\in B_j$ and $w\in W_{[j,i]}$. But in
this case we must have $u\in X'_\Q$ or $v\in X'_\Q$.}
\end{proof}


\subsection{Operations on buckets}\label{@intentionally}
Our kernelization algorithm has only one rule that is applied
exhaustively.  Each application of the this {rule} (except the last
one that finds a rainbow matching) triggers an ``add'' operation,
defined below, where we add new vertices to the buckets. In this
section, we define the two variants of the add operation and prove
that, given a partial decomposition, these operations output another
partial decomposition.

\begin{definition}\label{@impoverished}
Let $(W,B,C,B^{W_1},B^{W_2})$ be a partial decomposition.  Let $U
\subseteq W$ and $X^C \subseteq C.$  For any $j \in
\{1,2\},$ we define $\textsf{add}_j(W,B,B^{W_1},B^{W_2},U \cup X^C)$
as $(W',B',C',B^{'W_1},B^{'W_2})$ where
\begin{itemize}
	\item $W' = W \setminus U$
	\item $B' = B \cup U \cup X^C$
	\item $C' = C \setminus X^C$
	\item $B^{'W_j} = B^{W_j} \cup U$ and $B^{'W_{3-j}} = B^{W_{3-j}}.$ 
\end{itemize}
\end{definition}

\begin{figure}[!ht]
  \centering
  \includegraphics[scale=.90]{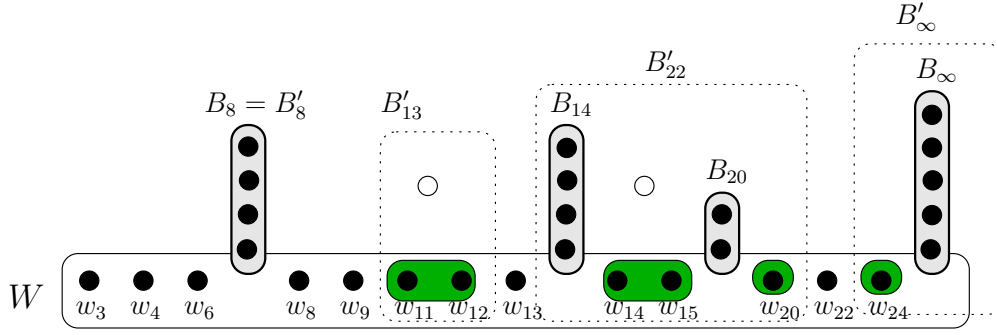}
  \caption{An application of $\textsf{add}_1$ (or
    $\textsf{add}_2$). Vertices of $U$ are {inside} the green
    territories and the two vertices of $X^C$ are the two non-filled
    vertices.  We have $S^\npair = \{8,14,20,\infty\}$ and
    $S^{\npair'}=\{8,13,22,\infty\}.$}
\label{@saltimbanque}
  \end{figure}

\begin{lemma}\label{lemma_add1}
Let $(W,B,C,B^{W_1},B^{W_2})$ be a partial decomposition of local size
$f.$ Suppose $f(x)=c x^\delta$ for some constants $c$ and $\delta,$
where $\delta > 1.$  Let $U \subseteq W$ and $X^C \subseteq C$ such
that $|U| \le 10 |X^C|$ and such that, for any triangle $\Delta$ of
$\T[W \cup X^C],$ we have that $|\Delta \cap (W \setminus U)| \le 1.$
Then, $\textsf{add}_1(W,B,B^{W_1},B^{W_2},U \cup X^C)$ is a partial
decomposition of local size $f.$
\end{lemma}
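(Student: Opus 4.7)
The plan is to verify each of the requirements of a partial decomposition, together with the local size condition, for the tuple $(W',B',C',B^{'W_1},B^{'W_2})$ produced by $\textsf{add}_1$. The partition $V(\T) = W' \cup B' \cup C'$ and the inclusion $C' \subseteq C^0$ are immediate from the construction. The nice-pair property of $(W',B')$ follows at once from Lemma~\ref{lemma_addtonicepair} applied to $U$ and $X^C$ (noting that $X^C \subseteq C \subseteq C^0 \setminus B$). What remains is to show that $(B^{'W_1},B^{'W_2})$ is a bucket partition of $(W',B')$ of local size $f$.

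The global bucket-partition bound is direct: one computes $|B^{'W_1}| = |B^{W_1}| + |U| \le 10|B^C| + 10|X^C| = 10|B^{'C}|$, combining the bound already valid for $(B^{W_1},B^{W_2})$ with the hypothesis $|U| \le 10|X^C|$. The heart of the argument is understanding how the unique bucket decomposition of $(W',B')$ relates to that of $(W,B)$. The key observation is that any two vertices of the same old bucket $B_j$ have identical in/out patterns relative to $W$, hence also relative to $W' \subseteq W$. Consequently, each old $B_j$ is entirely contained in some single new bucket $B'_{i'}$, yielding a well-defined map $\phi\colon S^\npair \to S^{\npair'}$. Setting $J(i') = \phi^{-1}(i')$ and letting $Y_{i'}$ denote the subset of $U \cup X^C$ landing in $B'_{i'}$, one obtains $B'_{i'} = \bigl(\bigcup_{j \in J(i')} B_j\bigr) \cup Y_{i'}$.

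Using this structural description, the requirement $|S'_{i'}| \ge 1$ for $i' \in S^{\npair'}$ splits into two easy cases. If $J(i') \neq \emptyset$, pick any $j \in J(i')$; then $S_j \subseteq S'_{i'}$ since $B^C \subseteq B^{'C}$ and $B^{W_1} \subseteq B^{'W_1}$, and $|S_j| \ge 1$ by the input bucket partition. Otherwise $B'_{i'} = Y_{i'} \subseteq U \cup X^C \subseteq B^{'W_1} \cup B^{'C}$, so $S'_{i'} = B'_{i'}$, which is nonempty by the definition of $S^{\npair'}$. For the local size bound, one uses that $B^{'W_2} = B^{W_2}$ (this is precisely where $\textsf{add}_1$ rather than $\textsf{add}_2$ matters), giving $B^{'W_2}_{i'} = \bigcup_{j \in J(i')} B^{W_2}_j$ and hence $|B^{'W_2}_{i'}| \le \sum_{j \in J(i')} f(|S_j|)$.

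The main obstacle is the merging case $|J(i')| \ge 2$, where several old buckets collapse into a single new one. I would conclude via the superadditivity of $f(x) = c x^\delta$ for $\delta \ge 1$, namely $\sum_l a_l^\delta \le \bigl(\sum_l a_l\bigr)^\delta$ for nonnegative $a_l$. This gives
\[
\sum_{j \in J(i')} f(|S_j|) \;\le\; f\!\left(\sum_{j \in J(i')} |S_j|\right) \;\le\; f(|S'_{i'}|),
\]
where the last inequality uses the disjointness of the $S_j$'s together with $\bigcup_{j \in J(i')} S_j \subseteq S'_{i'}$, and the monotonicity of $f$. Hence $|B^{'W_2}_{i'}| \le f(|S'_{i'}|)$, completing the verification.
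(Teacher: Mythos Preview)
Your proof is correct and follows essentially the same approach as the paper's: both invoke Lemma~\ref{lemma_addtonicepair} for the nice-pair property, establish the same structural fact that each old bucket $B_j$ lands entirely in a single new bucket $B'_{i'}$, handle the two cases for $|S'_{i'}|\ge 1$ identically, and finish the local-size bound via the superadditivity inequality $\sum_j |S_j|^\delta \le (\sum_j |S_j|)^\delta$. Your explicit map $\phi\colon S^\npair \to S^{\npair'}$ is a notational convenience the paper omits, but the argument is the same.
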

\begin{proof}
Let $(W',B',C',B^{'W_1},B^{'W_2}) =
\textsf{add}_1(W,B,B^{W_1},B^{W_2},U \cup X^C).$  We begin by  proving
that $(W',B',C',B^{'W_1},B^{'W_2})$ is a partial decomposition.
Properties~\ref{@proclamations} and~\ref{@inextinguishably} are clear.  As, by assumption,
for any triangle $\Delta$ of $\T[W \cup X^C],$ we have that $|\Delta
\cap (W \setminus U)| \le 1,$ \autoref{lemma_addtonicepair} implies
that $\npair'=(W',B')$ is a nice pair, and thus we obtain
Property~\ref{@inconvinientes}.

Let us now prove Property~\ref{@congregating}.  We start by proving the
following structural property of the $B'_i$ (see \autoref{@saltimbanque}).
  \begin{quote}
  For any $i \in S^{\npair'},$ $\exists U' \subseteq U,$ $X' \subseteq
  X^C$ and $S' \subseteq S^\npair$ such that $B'_i = \bigcup_{i \in
    S'}B_i \cup X' \cup U'.$
\end{quote}
  We point out that we could even prove a stronger property (for
  example that buckets in $S'$ are consecutive), but we don't need it
  in the remainder of the proof.
  
Let $i \in \npair'.$  According to \autoref{@accompanying},
$B'_i$ is the set of vertices $v$ of $B'$ such that all arcs between
$W'_{[1,i-1]}$ and $v$ are oriented from $W'_{[1,i-1]}$ to $v$ and all
arcs between $W'_{[i,t_0]}$ and $v$ are oriented from $v$ to
$W'_{[i,t_0]}.$ Assume that there exists $j \in \npair$ and $v_1 \in
B_j$ such that $v_1 \in B'_i.$  Now, observe that for any $v_2 \in
B_j$ and $w \in W',$ $v_1w \in A(\T)$ iff $v_2w \in A(\T),$ and the
same holds for $wv_1$ and $wv_2.$  This implies that $v_2 \in B'_i$
and thus $B_j \subseteq B'_i.$  As all the vertices we add to $B$ are
from $U \cup X^C,$ this proves the above structural property.

Next we prove that $(B^{'W_1},B^{'W_2})$ is a bucket partition of
$(W',B').$  The fact that $(B^{'W_1},B^{'W_2})$ is a partition of
$B^{'W}$ is clear, as $B^{'W}=B^W \cup U,$ $B^{'W_1}=B^{W_1} \cup U,$
and $B^{'W_2}=B^{W_2}.$  Let $i \in S^{\npair'}.$ By the previous
structural property, $B'_i = \bigcup_{i \in S'}B_i \cup X' \cup U'.$
If $S' \neq \emptyset,$ then $|S'_i| \ge \bigcup_{i \in S'}|S_i| \ge
1.$  Otherwise, as $X' \cup U' \neq \emptyset,$ $U' \subseteq
B^{'W_1}$ and $X' \subseteq B^{'C}_i,$ we get $U' \cup X' \subseteq
S^{'}_i,$ implying $|S'_i| \ge 1.$  Let us prove that $|B^{'W_1}| \le
10 |B^{'C}|.$
\begin{eqnarray*}
    |B^{'W_1}|& = &|B^{W_1}|+|U| \\
    & \le & 10|B^C|+|U| \\
    & \le &10|B^C|+10 |X^C|\\
    & = & 10|B^{'C}|.
\end{eqnarray*}
This concludes the proof that $(B^{'W_1},B^{'W_2})$ is a bucket
partition of $(W',B').$

Let us finally prove that $(B^{'W_1},B^{'W_2})$ has local size $f,$
meaning that for any $i \in S^{\npair'},$ we have $|B_i^{'W_2}| \le
f(|S'_i|)=c|S_i'|^\delta.$  Let $i\in S^{\npair'}.$ By the previous
structural property, there exists $S'\subseteq S^{\npair}$ such that
$S'_i \supseteq \bigcup_{j \in S'}S_j.$ In particular, we have $|S'_i|
\ge \sum_{j \in S'}|S_j|$ as the $S_j$'s are disjoint.  Now, noticing
that $B'^{W_2}=B^{W_2},$ we have the following.
\begin{eqnarray*}
    |B_i^{'W_2}| & = & |\bigcup_{i \in S'}B_i \cap B^{'W_2}|+ |(X'
    \cup U') \cap B^{'W_2}| \\ & = & |\bigcup_{i \in S'}B_i \cap
    B^{W_2}|+ 0\\ & \le & \sum_{i \in S'}c|S_i|^\delta \\ & \le &
    c(\sum_{i \in S'}|S_i|)^\delta \\ & \le & c|S'_i|^\delta.
\end{eqnarray*}
\end{proof}

Now, we analyze the $\textsf{add}_2$ operation.

\begin{lemma}\label{lemma_add2}
Let $(W,B,C,B^{W_1},B^{W_2})$ be a partial decomposition of local size
$f.$ Suppose $f(x)=c x^\delta$ for some constants $c$ and $\delta,$
where $1<\delta ≤ 2$ and $c \ge \max(\frac{20}{(2^\delta-2)},(\frac{21}{\delta})^\frac{1}{\delta-1}).$
 Let $I$ be a
bucket interval and $U=W(I)$ such that $|U| \le 10 \mu(I).$  Then,
$\textsf{add}_2(W,B,B^{W_1},B^{W_2},U)$ is a partial decomposition of
local size $f.$
\end{lemma}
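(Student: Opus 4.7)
My plan is to verify in three steps that $(W',B',C',B'^{W_1},B'^{W_2}) := \textsf{add}_2(W,B,B^{W_1},B^{W_2},U)$ (taking the implicit $X^C=\emptyset$) is a partial decomposition of local size $f$. Conditions~\ref{p1tpt} and~\ref{p2tpt} follow immediately from the definition of $\textsf{add}_2$. For Condition~\ref{p3tpt}, I would invoke \autoref{lemma_addtonicepair} with $X^C=\emptyset$; its triangle hypothesis is vacuously satisfied because $\T[W\cup X^C]=\T[W]\subseteq \T[W^0]$ is acyclic, hence contains no triangle at all.

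The structural core is to identify how the bucket decomposition changes when $U = W_{[l_I,r_I[}$ is removed from $W$. The claim is that every bucket $B_i$ with $i\in B(I)$ merges, together with the set $U$, into a single new bucket $B'_{r_I}$, while every bucket $B_i$ with $i\in S^{\npair}\setminus B(I)$ remains unchanged. This follows from the canonical definition in \autoref{prop:bucketdecompo}: for any $v\in B_i$ with $i\in B(I)$, the smallest index $j$ such that $v$ dominates some $w_j\in W'$ is exactly $r_I$, because all positions in $[l_I,r_I-1]$ have been removed from $W$ while $w_{r_I}\in W'$; the same argument applies to each $w_j\in U$. Consequently $S^{\npair'} = (S^{\npair}\setminus B(I))\cup\{r_I\}$ and $B'_{r_I} = U \cup \bigcup_{i\in B(I)} B_i$. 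Since $U\subseteq B'^{W_2}$ and $U$ is disjoint from $B^C\cup B^{W_1}$, one obtains $S'_{r_I} = \bigcup_{i\in B(I)} S_i$ and $|S'_{r_I}| = \Sigma(I)\ge |B(I)|\ge 2$. The remaining bucket-partition conditions ($|S'_i|\ge 1$ for $i\ne r_I$ and $|B'^{W_1}|\le 10|B'^C|$) are inherited from $(W,B)$ because $B'^{W_1}=B^{W_1}$ and $B'^C=B^C$, completing Condition~\ref{p4tpt}.

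The main obstacle is the local size bound at the merged bucket. Using $|B'^{W_2}_{r_I}| = |U| + \sum_{i\in B(I)}|B_i^{W_2}|$, together with $|U|\le 10\mu(I)$ and the inductive bound $|B_i^{W_2}|\le c|S_i|^\delta$, the target $|B'^{W_2}_{r_I}|\le c\,\Sigma(I)^\delta$ reduces to
\[
10\,\mu(I) \;\le\; c\Bigl(\Sigma(I)^\delta - \sum_{i\in B(I)} |S_i|^\delta\Bigr).
\]
I would establish this by a case analysis on the sizes $n_i:=|S_i|\ge 1$, setting $N:=\Sigma(I)$ and $L:=|B(I)|\ge 2$. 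In the all-ones extreme $n_i=1$ we have $\mu(I)\le N=L$ and $\sum n_i^\delta=L$, and the inequality becomes $10\le c(L^{\delta-1}-1)$, which is tightest at $L=2$ and is exactly the assumption $c\ge 20/(2^\delta-2)$. In the one-large extreme where $n_1$ is large and the other $n_i$ equal $1$, I would split by whether $n_1\le c$ (so $\mu(I)=\Sigma(I)$) or $n_1>c$ (so $\mu(I)$ is controlled by $\min_i|B_i|\le 1+c$); applying the mean-value estimate $(n_1+1)^\delta-n_1^\delta\ge \delta n_1^{\delta-1}$, both regimes culminate in the requirement $\delta c^\delta\ge 11c+10$, which is implied by $c\ge (21/\delta)^{1/(\delta-1)}$: this assumption gives $\delta c^{\delta-1}\ge 21$, whence $\delta c^\delta\ge 21c\ge 11c+10$ for $c\ge 1$. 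Intermediate configurations of the $n_i$ are reduced to these two extremes by the superadditivity $(a+b)^\delta\ge a^\delta+b^\delta$ valid for $\delta\ge 1$, which shows that fixing $N$ and $L$, the quantity $N^\delta-\sum n_i^\delta$ is minimized precisely at the extreme with one large coordinate. The two constants in the hypothesis on $c$ are tuned exactly to dominate these two tight regimes.
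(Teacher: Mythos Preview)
Your structural analysis (Conditions~\ref{p1tpt}--\ref{p4tpt}, the merging of all buckets in $B(I)$ together with $U$ into a single new bucket, and the identities $S'_{r_I}=\bigcup_{i\in B(I)}S_i$, $B'^{W_1}=B^{W_1}$, $B'^C=B^C$) is correct and matches the paper.

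The local-size argument, however, has a genuine gap. Two specific problems:

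\textbf{(i) Wrong formula for $m(I)$.} You write ``$\mu(I)$ is controlled by $\min_i|B_i|\le 1+c$''. But $m(I)=\sum_i|B_i|-\max_i|B_i|$, which equals $\min_i|B_i|$ only when $L=|B(I)|=2$. For $L>2$ the correct bound (in your one-large configuration $n_1$ large, $n_2=\cdots=n_L=1$) is $m(I)\le (L-1)(1+c)$, not $1+c$.

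\textbf{(ii) The reduction to extremes is circular.} Your superadditivity argument shows that, fixing $N=\sum n_i$ and $L$, the right-hand side $c(N^\delta-\sum n_i^\delta)$ is minimised at the one-large configuration. This is fine \emph{provided the left-hand side $10\mu(I)$ is bounded independently of the configuration}. The only such configuration-independent bound is $\mu(I)\le\Sigma(I)=N$. But with $\mu(I)\le N$ the one-large case \emph{fails} for large $n_1$: for $L=2$ one needs $10(n_1+1)\le c\big((n_1+1)^\delta-n_1^\delta-1\big)\sim c\delta n_1^{\delta-1}$, which is impossible for $\delta<2$ and $n_1\to\infty$. You rescue this by switching to $\mu(I)\le m(I)$ at that point, but $m(I)$ depends on the $|B_i^{W_2}|$ of the \emph{original} configuration, not of the one-large extreme you reduced to. So the reduction and the rescue do not compose.

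The paper closes this gap by also bounding $m(I)$ in terms of the $n_i$ \emph{before} any reduction: ordering $n_1\ge n_2\ge\cdots$, one has $m(I)\le\sum_{\ell\ge 2}|B_{i_\ell}|\le\sum_{\ell\ge 2}(c\,n_\ell^\delta+n_\ell)$. This turns the target into $\min\big(f_1(n_1),f_2(n_1)\big)\le 0$ for two explicit functions of $n_1$ (with the remaining $n_\ell$ as parameters). The analysis then proceeds by showing $f_2$ is non-increasing (handling the regime $n_1\ge c\sum_{\ell\ge 2}n_\ell^\delta$) and $f_1$ is convex in $n_1$ for $\delta\le 2$ (reducing the complementary regime to its two endpoints), followed by a further monotonicity argument in $n_3,\ldots,n_L$ to reach the all-ones/$L=2$ base case. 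Your two constants on $c$ do turn out to be exactly what is needed at the two extremal checks, but the path to those checks needs the more careful bookkeeping above.
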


\begin{proof}
Let $(W',B',C',B^{'W_1},B^{'W_2}) =
\textsf{add}_2(W,B,B^{W_1},B^{W_2},U).$  As a first step we prove that
$(W',B',C',B^{'W_1},B^{'W_2})$ is a partial decomposition.
Properties~\ref{@proclamations} and~\ref{@inextinguishably} are clear.  By applying
\autoref{lemma_addtonicepair} (for $X^C = \emptyset$ here), we get
that $\npair'=(W',B')$ is a nice pair, and thus we obtain
Property~\ref{@inconvinientes}.

Let us now prove Property~\ref{@congregating}.  Let us first describe the
bucket decomposition in $\npair'.$  Informally, all buckets of $B(I)$
are merged, together with $U,$ into a new one ($B'_{r_I}$), and {all
  the other} buckets remain unchanged.  More formally, the following
three properties hold
  \begin{itemize}
    \item $S^{\npair'}=(S^\npair \setminus B(I)) \cup \{r_I\},$ 
    \item for any $i \in S^{\npair'} \setminus \{r_I\},$ it holds that 
    \begin{itemize}
    \item $B'_i = B_i,$ 
    \item   $B^C_i=B_i,$ and
    \item for any $j \in [2],$ $B^{'W_j}_i=B^{W_j}_i,$  
    \end{itemize}
    \item $B'_{r_I} = U \cup \bigcup_{i \in B(I)}B_i.$
  \end{itemize}

The next step is to prove that $(B^{'W_1},B^{'W_2})$ is a bucket
partition of $(W',B').$  The fact that $(B^{'W_1},B^{'W_2})$ is a
partition of $B^{'W}$ is clear as $B^{'W}=B^W \cup U,$
$B^{'W_1}=B^{W_1}$ and $B^{'W_2}=B^{W_2} \cup U.$  Let $i \in
S^{\npair'}.$  If $i \in S^{\npair'} \setminus \{r_I\},$ then, by the
previous property, we get $S'_i = S_i,$ implying $|S'_i| \ge 1,$ as
$(B^{W_1},B^{W_2})$ is a bucket partition of $(W,B).$  If $i = r_{I},$
then $B'_{r_I} \supseteq \bigcup_{i \in B(I)}B_i$ and, as
$B^{'W_1}=B^{W_1},$ $|S'_{r_I}| \ge \sum_{i \in B(I)}|S'_i| \ge |B(I)|
\ge 2.$  Moreover, we have $|B^{'W_1}| = |B^{W_1}| \le 10 |B^{C}| = 10
|B^{'C}|.$  This concludes the proof that $(B^{'W_1},B^{'W_2})$ is a
bucket partition of $(W',B').$
  
Let us finally prove that $(B^{'W_1},B^{'W_2})$ has local size $f,$
meaning that, for any $i \in S^{\npair'},$ $|B_i^{'W_2}| \le
f(|S'_i|).$  Recall that $(W',B')$ is a nice pair, implying, because
of Proposition~\ref{@accompanying}, that there is a unique bucket
decomposition $B'_i$ of $B'.$  Thus, for any $i \in S^{\npair'}
\setminus \{r_I\},$ $|B^{'W_2}_i| = |B^{W_2}_i| \le
f(|S_i|)=f(|S'_i|).$

Let $B(I)=\{i_1,\dots,i_L\}$, where $|S_{i_x}| \ge |S_{i_{x+1}}|,$ for
any $x$.  Notice that $f(|S'_{r_I}|) = c|S'_{r_I}|^\delta =
c(\sum_{\ell \in [1,L]}S_{i_\ell})^\delta $, and that $m(I) = \sum_{i
  \in B(I)}|B_i|-|B_{i_*}|$ where $i_* = \textrm{ argmax}_{i \in
  B(I)}|B(i)|$, implying that $m(I) \le \sum_{\ell \in
  [2,L]}|B_{i_\ell}| \le \sum_{\ell \in
  [2,L]}(|B_{i_\ell}^{W_2}|+|S_{i_\ell}|) \le \sum_{\ell \in
  [2,L]}(c|S_{i_\ell}|^\delta+|S_{i_\ell}|)$.  It remains to prove
that $|B^{'W_2}_{r_I}| - f(|S'_{r_I}|) \le 0$.

Observe that 
  \begin{eqnarray*}
    |B^{'W_2}_{r_I}| - f(|S'_{r_I}|) & = & \sum_{i \in B(I)}|B^{W_2}_i| + |U|- c(\sum_{\ell \in [1,L]}|S_{i_\ell}|)^\delta \\
    & \le & \sum_{i \in B(I)}|B^{W_2}_i| + 10 \min\big\{\Sigma(I),m(I)\big\} - c(\sum_{\ell \in [1,L]}|S_{i_\ell}|)^\delta\\
    & \le & c|S_{i_1}|^\delta+c\sum_{\ell \in [2,L]}|S_{i_\ell}|^\delta+\\
    & & \phantom{c|S_{i_1}|^\delta+}~10 \min\big\{\sum_{i \in B(I)}|S_i|,\sum_{\ell \in [2,L]}(c|S_{i_\ell}|^\delta+|S_{i_\ell}|)\big\} -c(\sum_{\ell \in [1,L]}|S_{i_\ell}|)^\delta \\
    & = & \min\big\{f_1(S_{i_1}),f_2(S_{i_1})\big\}, 
  \end{eqnarray*}  
  {where}
  \begin{itemize}
  \item  $f_1(x)=cx^\delta +c\sum_{\ell \in [2,L]}|S_{i_\ell}|^\delta+10 (x+\sum_{\ell \in [2,L]}|S_{i_\ell}|)-c(x+\sum_{\ell \in [2,L]}|S_{i_\ell}|)^\delta $
  \item  $f_2(x)=cx^\delta +c\sum_{\ell \in [2,L]}|S_{i_\ell}|^\delta+10 \sum_{\ell \in [2,L]}(c|S_{i_\ell}|^\delta+|S_{i_\ell}|)-c(x+\sum_{\ell \in [2,L]}|S_{i_\ell}|)^\delta $
  \end{itemize}
Let $g:\mathbb{R}^+\to\mathbb{R}^+$ be such that $g(x)=(x+\sum_{\ell
  \in [2,L]}S_{i_\ell})^\delta - x^\delta$ and observe that it is a
non-decreasing function, as $\delta>1$.  Therefore $$f_2(x) =
c\sum_{\ell \in [2,L]}|S_{i_\ell}|^\delta+10 \sum_{\ell \in
  [2,L]}(c|S_{i_\ell}|^\delta+|S_{i_\ell}|)-cg(x),$$ and thus {$f_2$}
a non-increasing function.  Let $x_0=\sum_{\ell \in
  [2,L]}|S_{i_\ell}|^\delta$.  Notice that according to the first
property of a bucket partition, for any $i \in S^\npair,$ we have
$|S_i| \ge 1$, implying that $x_0 \ge L-1 \ge 1$.

To complete the proof we distinguish two cases according to the value
of $|S_{i_1}|$.  However, before we proceed with these two cases, we
prove a helpful claim.

\paragraph*{Claim 1:}

Let $g_2(x)= c^{\delta}x^\delta + 21x -
(cx+x^{\frac{1}{\delta}})^\delta$. For $x\ge 0$, $\delta>1$ and $c \ge
(\frac{21}{\delta})^\frac{1}{\delta-1}$, we have $g_2(x)\le 0$.

\noindent {\sl Proof of Claim 1:}
We have
\begin{eqnarray*}
  (cx+x^{\frac{1}{\delta}})^\delta & = &  c^\delta x^\delta(1+\frac{x^{\frac{1}{\delta}-1}}{c})^\delta \\
  & \ge & c^\delta x^\delta (1+\delta \frac{x^{\frac{1}{\delta}-1}}{c}) \mbox{ as $(1+u)^\alpha \ge 1+\alpha u$ for any $\alpha > 1$ and $u >0$} \\
  & = & c^\delta x^\delta + \delta c^{\delta-1} x^{\frac{1}{\delta}-1+\delta} \\
\end{eqnarray*}
Thus,
\begin{eqnarray*}
  g_2(x) & \le & 21x-\delta c^{\delta-1} x^{\frac{1}{\delta}-1+\delta} \\
  & \le & 21x-\delta c^{\delta-1} x \mbox{ as $\frac{1}{\delta}-1+\delta = \frac{\delta^2-2\delta+1}{\delta}+1 = \frac{(\delta-1)^2}{\delta}+1 \ge 1$ } \\
  & \le & 0 \mbox{~~~ as $c \ge (\frac{21}{\delta})^\frac{1}{\delta-1}$}
\end{eqnarray*}
This implies that $g_2(x) \le 0$ for any $x \ge 0$. This completes the proof of Claim 1.\medskip

Now that Claim 1 is proved, let us come back to our two cases.
 \paragraph*{Case 1 : if $|S_{i_1}| \ge cx_0$.}
 
In this case, we will prove that $f_2(|S_{i_1}|) {\le} 0$.
As $f_2$ is non-increasing, we get $f_2(|S_{i_1}|) \le f_2(cx_0)$ therefore, it remains to prove that $f_2(cx_0) \le 0$.
  Observe that as $\delta \ge 1$, $(\sum_{\ell \in [2,L]}|S_{i_\ell}|)^\delta \ge \sum_{\ell \in [2,L]}|S_{i_\ell}|^\delta$, implying that $(\sum_{\ell \in [2,L]}|S_{i_\ell}|) \ge x_0^{\frac{1}{\delta}}$. {Observe that} 
    \begin{eqnarray*}
    f_2(cx_0) & \le & cc^{\delta}x_0^\delta + cx_0+20cx_0 - c(cx_0+x_0^{\frac{1}{\delta}})^\delta \iff\\
    \frac{f_2(cx_0)}{c} & \le & c^{\delta}x_0^\delta + 21x_0 - (cx_0+x_0^{\frac{1}{\delta}})^\delta   = g_2(x_0)\\
  \end{eqnarray*}
The assertion in this case now follows from Claim 1.

\paragraph*{Case 2 : if $|S_{i_1}| < cx_0$.}

In this case, we will prove that $f_1(|S_{i_1}|) \ge 0$.
Let us first prove that $f_1$ is convex.
\begin{eqnarray*}
\frac{d^2}{dx^2}f_1(x) & = & c(\delta)(\delta-1)(x^{\delta-2}-(x+\sum_{\ell \in [2,L]}S_{i_\ell})^{\delta-2}) \\
& \ge & 0, \mbox{ as $\delta \le 2$}
\end{eqnarray*}
Thus, as $|S_{i_2}| \le |S_{i_1}| < cx_0$, by the definition of the $S_{i_j}$, and $f_1$ is convex, $f_1(|S_{i_1}|) \le \max(f_1(|S_{i_2}|),f_1(cx_0))$.
Let us prove that  both these values are lower or equal to zero.
Observe that $\frac{f_1(cx_0)}{c} \le g_2(x_0) \le 0$ by Claim 1, and thus it remains to prove that $f_1(|S_{i_2}|) \le 0$.

For any $x_i \ge 0$, let $f(x_2,\dots,x_L)= 2c^\delta x_2^\delta+c\sum_{\ell \in [3,L]}x_\ell^\delta+10 (2x_2+\sum_{\ell \in [3,L]}x_\ell)-c(2x_2+\sum_{\ell \in [3,L]}x_\ell)^\delta$.
Observe that $f_1(|S_{i_2}|)=f(|S_{i_2}|,|S_{i_3}|,\dots,|S_{i_L}|)$.

\paragraph*{Claim 2:} For any tuple of integers $(x_2,\dots,x_L)$ where $x_\ell \ge 1$ for any $\ell \in [2,L]$,
$f(x_2,\dots,x_L) \le f(x_2,0,\dots,0)$.\smallskip

\noindent {\sl Proof of Claim 2:}
Let $(x_2,\dots,x_L)$ be a tuple of integers where $x_\ell \ge 0$. Let $i_0 \in [3,L]$ such that $x_{i_0} \ge 1$, and $x'_{i_0}=x_{i_0}-1$. Let
$D  =  f(x_2,\dots,x_{i_0},\dots,x_{L})-f(x_2,\dots,x'_{i_0},\dots,x_L)$. Our goal is to prove that $D \le 0$.
Repeating the decrementation, this will imply Claim $2$.
\begin{eqnarray*}
  D & = & cx_{i_0}^\delta -c(x_{i_0}-1)^\delta+10-c(\alpha^\delta-(\alpha-1)^\delta) \mbox{ ~~~~~( where $\alpha = 2x_2+\sum_{\ell \in [3,L]}x_\ell$)} \\
  & \le  & c(x_{i_0}^\delta -(x_{i_0}-1)^\delta-(3x_{i_0})^\delta+(3x_{i_0}-1)^\delta) + 10 \mbox{  ~~~~~(as $D$ is non increasing in $\alpha$, and $\alpha \ge 3x_{i_0}$)} 
\end{eqnarray*}
For any $y \ge 1$, let $u(y)=y^\delta-(y-1)^\delta-(3y)^\delta+(3y-1)^\delta$. 
Let us first lower bound two different parts of $u$.
We have
\begin{eqnarray*}
  (y-1)^\delta & = & y^\delta(1-\frac{1}{y})^\delta \\
  & \ge & y^\delta(1-\frac{\delta}{y})  \mbox{ ~~~~~(as $(1-u)^\alpha \ge 1-\alpha u$ for any $\alpha > 1$ and $0 \le u \le 1$)} \\
  & = & y^\delta-\delta y^{\delta-1}
\end{eqnarray*}
Moreover, denoting $z=3y-1$, we have
\begin{eqnarray*}
  (3y)^\delta & = & (z+1)^\delta\\
  & = & z^\delta(1+\frac{1}{z})^\delta \\
  & \ge & z^\delta(1+\frac{\delta}{z})  \mbox{ ~~~~~ (as $(1+u)^\alpha \ge 1+\alpha u$ for any $\alpha > 1$ and $u >0$)} \\
  & = & z^\delta+\delta z^{\delta-1} \\
  & = & (3y-1)^\delta+\delta (3y-1)^{\delta-1} 
\end{eqnarray*}
Using these two lower bounds, we get
\begin{eqnarray*}
  u(y) &\le &\delta (y^{\delta-1} - (3y-1)^{\delta-1}) \\
       & \le & \delta (1-2^{\delta-1}) \mbox{ ~~~~~ (as $y^{\delta-1} - (3y-1)^{\delta-1}$ is decreasing in $y$ and $y \ge 1$)} 
\end{eqnarray*}
This implies
 \begin{eqnarray*}
   D & \le & c u(x_{i_0}) + 10 \\
   & \le &c  \delta (1-2^{\delta-1})  + 10 \\
   & \le & 0 \mbox{ ~~~~~(as $c \ge \frac{20}{(2^\delta-2)} \ge \frac{10}{\delta(2^{\delta-1}-1)}$ for $\delta \ge 1$)}
 \end{eqnarray*}
 Concluding the proof of Claim 2.\medskip

 Let us now come back to the end of proof of Case 2. We had:
 \begin{eqnarray*} 
   f_1(|S_{i_2}|)& =& f(|S_{i_2}|,|S_{i_3}|,\dots,|S_{i_L}|) \\
   & \le & f(|S_{i_2}|,0,\dots,0) \mbox{ by Claim $2$} \\
   & = & 2c^\delta |S_{i_2}|^\delta+20 |S_{i_2}|-c(2|S_{i_2}|)^\delta\\
   & = & c|S_{i_2}|^\delta(2-2^\delta)+20|S_{i_2}| \mbox{ ~~~~~( which is decreasing in $|S_{i_2}|$ as $c \ge \frac{20}{(2^\delta-2)}$ and $\delta \ge 1$)} \\
   & \le & c(2-2^\delta)+20 \mbox{ ~~~~~(as $|S_{i_2}| \ge 1$)} \\
   & \le & 0 \mbox{ ~~~~~(as $c \ge \frac{20}{(2^\delta-2)}$ and $\delta \ge 1$)} 
 \end{eqnarray*}
 This concludes the proof of Case 2.
\end{proof}

\paragraph*{Discussion on why a simpler definition of $\mu$ is not sufficient.}
We explain here why taking simply $\mu(I)=\ssigma(I)$ or $\mu(I)=m(I)$ (instead of $\mu(I)=\min(\ssigma(I),m(I))$) would not be sufficient to get a kernel in $\O(k^\delta)$ for any $\delta > 1$.
Let us assume that the current partial decomposition is $(W,B,C,B^{W_1},B^{W_2})$ where in particular $S^\npair = \{i_1,\dots,i_{k'}\}$ for $k'=\frac{k}{2}$, $B^{W_j} = \emptyset$
for $j \in \{1,2\}$, implying that buckets only contain vertices of $C^0$, and $|B_i|=1$ for any $i \in S^\npair$. Suppose that $k=2^x$ for some $x$.

Suppose first that we define $\mu(I)=\ssigma(I)$.
Notice that particular $\textsf{val}([i_1,i_2])=\mu([i_1,i_2])=2$, and thus that in the auxiliary graph we will have in particular two colors $d_1$, $d_2$ such that all vertices of $W([i_1,i_2])$ have one loop of color $d_1$
and one loop of color $d_2$.
The kernel now applies its unique reduction rule (see Definition \autoref{def_reductionrule}). Suppose that
we do not find a rainbow matching, and that the set of color we find using  \autoref{cor_rainbowvc} is $X=\{d_1,d_2\}$, and thus that the small associated vertex cover is the set $U = W([i_1,i_2])$ ($U$ must cover all these loops),
where $|U| = 5 |X| = 10$. In this case, we fall into Case 2, and the $add_2$ operation will merge $B_{i_1}$, $B_{i_2}$ and $W([i_1,i_2])$ into a new bucket $B'_{i_2}$.
If we now repeat the same scenario between $B'_{i_2}$ and $B_{i_3}$, we will now have $\mu(I)=3$ (where $I=[i_2,i_3]$), and thus $|U|\le 15$.
Thus, if we repeat again until all buckets are merged into a single one, the total number of vertices of $W$ added to buckets during these $add_2$ operations will be $5(\sum_{i=2}^{k'}i)=\O(k^2)$, and thus we could not obtain
a kernel in $\O(k^\delta)$ vertices.

Suppose now that we define $\mu(I)=m(I)$.
Notice that particular $\textsf{val}([i_\ell,i_{\ell+1}])=\mu([i_\ell,i_{\ell+1}])=1$, and thus that for any $\ell$ in the auxiliary graph we will have in particular one color $d_\ell$ such that all vertices of $W([i_\ell,i_{\ell+1}])$ have one loop of color $d_\ell$.
The kernel now applies its unique reduction rule (see Definition\autoref{def_reductionrule}). Suppose that
we do not find a rainbow matching, and that the set of color we find using  \autoref{cor_rainbowvc} is $X=\{d_1\}$, and thus that the small associated vertex cover is the set $U = W([i_1,i_2])$,
where $|U| = 5 |X| = 5$. In this case, we fall into Case 2, and the $add_2$ operation will merge $B_{i_1}$, $B_{i_2}$ and $W([i_1,i_2])$ into a new bucket $B'_{i_2}$ where $|B'_{i_2}|=7$.
If we now repeat the same scenario for $\ell=3,5,\dots$ between $B_{i_\ell}$ and $B_{i_\ell+1}$, the $add_2$ operation will merge $B_{i_\ell}$, $B_{i_{\ell+1}}$ and $W([i_\ell,i_{\ell+1}])$ into a new bucket $B'_{i_{\ell+1}}$ where $|B'_{i_{\ell+1}}|=7$.
We now have $2^{x-2}$ buckets $B'_{i_2},B'_{i_4},\dots,B'_{k'}$, each of size $7$.
The total number of vertices added to $B$ in this first part of the scenario is thus $n_1 =2^{(x-2)}5$.
If we again repeat the same scenario for $\ell=2,6,\dots$, the $add_2$ operation will merge $B'_{i_\ell}$, $B'_{i_{\ell+2}}$ and we will have $|U| = 5|X| \ge 5\times 7 \ge 5^2$.
Thus, the total number of vertices added to $B$ in this second part of the scenario is $n_2 = 2^{(x-3)}5^2$. If we continue until there is only one bucket,
the total number of vertices added in the last part of the scenario will be $n_{x-1}=5^{(x-1)}=2^{(x-1)log(5)}=k^{'log(5)}$, and thus we cannot hope for a kernel in $\O(k^\delta)$ vertices.
            
\subsection{Analysis the two cases: rainbow matching or small vertex cover}
Our kernelization algorithm takes as input a partial decomposition
$(W,B,C,B^{W_1},B^{W_2})$ of size $f$ where $f(x)=c x^\delta$ for some
constants $c$ and $\delta.$ Before going to the formal description of
the algorithm, let us sketch how it works, and why we need the
following lemmas.  \stf{At each round, the kernelization algorithm first
derives from $(W,B,C,B^{W_1},B^{W_2})$ a clean partial decomposition
with same size $f$ using Lemma~\ref{lem:clean-TPT}, then
builds the auxiliary edge-colored graph
$(G,χ)\langle W,B,C,B^{W_1},B^{W_2}\rangle$ and tries to find a
rainbow matching using \autoref{cor_rainbowvc}}.  If it finds such a
matching, then the algorithm stops and concludes using
\autoref{lemma_saferainbowT}.  Otherwise, the algorithm finds a small
vertex cover and performs an add operation in order to obtain another
tuple $(W',B',C',B^{'W_1},B^{'W_2}).$ \autoref{lemma_smallvc} will
allow us to ensure that $(W',B',C',B^{'W_1},B^{'W_2})$ is still a
partial decomposition of local size $f.$

\begin{lemma}[Case of rainbow matching]\label{lemma_saferainbowT}
  Let $(W,B,C,B^{W_1},B^{W_2})$ be a \stf{clean partial decomposition}
  of local size $f,$ where $f(x)=c x^\delta$ for some positive
  constants $c$ and $\delta.$ Suppose also that the colored multigraph
  $(G,χ)\langle W,B,C,B^{W_1},B^{W_2}\rangle$ admits a rainbow
  matching $M.$ Let $A=V(M) \cup B \cup C.$ Then, for any integer $k,$
  $(\T,k)$ is a \yes-instance of \TPT iff $(T[A],k)$ is a
  \yes-instance of \TPT.  Moreover, we have
  $|A| \le 3(c+1)(33)^\delta(k)^\delta.$
\end{lemma}

\begin{proof}
Let $\T'=T[A].$  As $\T'$ is an induced tournament of $\T,$ direction
$\Leftarrow$ is clearly true.  Let us now suppose that $(\T,k)$ is a
\yes-instance and prove that $(\T',k)$ is a \yes-instance.  Let $\P$
be a triangle-packing of $\T$ of size $k$ and let us prove that there
is a triangle-packing $\P'$ of size $k$ in $\T'.$  Let us
partition $\P = \P_1 \cup \P_2,$ where $\P_1 = \{\Delta \in \P \mid
\Delta \subseteq W \cup B\},$ and $\P_2 = \{\Delta \in \P \mid \Delta
\cap C \neq \emptyset\}.$  For any $\Delta \in \P_2,$ let $c(\Delta)$
be a vertex in $\Delta \cap C.$
  
For any $c \in C,$ let $f(c)$ be the edge of color $c$ in $M$. \stf{Notice
that $f(c)$ is well defined as $(W,B,C,B^{W_1},B^{W_2})$ is
clean.} Moreover, for any $I \in \I_{>0}$ (where $(\I,\textsf{val})$ is
the demand of $(G,χ)\langle W,B,C,B^{W_1},B^{W_2}\rangle$) and color
$d \in D_I,$ let $f(I,d)$ be the edge of color $d$ in $M.$ Let
$f(I)=\{f(I,d), d \in D_I\}.$ Observe that, for any $c \in C,$ by the
definition of edges of colors $c,$ $\{c\} \cup f(c)$ is a triangle in
$\T.$ Let $\Q_2 = \{f(c),c \in C\}$ and
$\Q_1 = \{f(I), I \in \I_{>0} \}.$ Observe that $M=\Q_1 \cup \Q_2.$ As
$M$ is a matching, for any $I \in \I_{>0},$
$|f(I)|=|D_I|=\textsf{val}(I)$; moreover notice that the $f(I)$ are
disjoint.  As, for any $d \in D_I,$ the only vertices in
$(G,χ)\langle W,B,C,B^{W_1},B^{W_2}\rangle$ that have color $d$ are
$W(I),$ we get $f(I) \subseteq W(I).$ This implies that $\Q_1$ is a
bucket allocation for $(W,B,C,B^{W_1},B^{W_2}).$ Thus, according to
\autoref{lemma_safebucketT}, there exists a packing $\P'_1$ such that
$V(\P_1') \subseteq V(\Q_1) \cup B$ and $|\P'_1|=|\P_1|.$
 
Let us now restructure triangles in $\P_2.$  For any $\Delta \in
\P_2,$ we define $g(\Delta)=\{c(\Delta)\} \cup f(c(\Delta)).$  Let
$\P'_2 = \{g(\Delta), \Delta \in \P_2\}.$ Finally, we define $\P' =
\P'_1 \cup \P'_2.$  Let us prove that $\P'$ is a triangle-packing.
For any $\Delta'_1, \Delta'_2$ in $\P'_1,$ $\Delta'_1 \cap \Delta'_2 =
\emptyset,$ as $\P'_1$ is a packing.  Let us now consider $\Delta'_1,
\Delta'_2$ in $\P'_2,$ where $\Delta'_i = g(\Delta_i).$  First, as
$\Delta_1 \cap \Delta_2 = \emptyset,$ we obtain that $c(\Delta_1) \neq
c(\Delta_2).$  Moreover, as $M$ is a matching, {$f(c(\Delta_1)) \cap
  f(c(\Delta_2)) = \emptyset.$}  As $f(c(\Delta_i)) \subseteq V(\Q_i)$
and $c(\Delta_i) \in C,$ we also have $f(c(\Delta_i)) \cap
\{c(\Delta_{3-1})\} = \emptyset,$ for $i \in [2],$ implying $\Delta'_1
\cap \Delta'_2 = \emptyset.$  Consider now the last case where
$\Delta'_i \in \P'_i$ for $i \in [2].$  As $\Delta'_1 \subseteq
V(\Q_1) \cup B$ and $\Delta'_2 \subseteq V(\Q_2) \cup C,$ and as
$V(\Q_1),$ $V(\Q_2),$ $B$ and $C$ are pairwise disjoint sets, we get
the desired result.  Finally,
$|\P'|=|\P_1'|+|\P'_2|=|\P_1|+|\P_2|=|\P| \ge k.$
  
The next step is to prove an upper bound on $|V(\T')|.$  We start by
bounding $|V(M)|.$  Recall that $V(M) = V(\Q_1) \cup V(\Q_2),$
implying that $|V(M)| = |\Q_1| + 2\cdot |C|.$  Let us now bound
$|\Q_1|.$  Let $\I_{>0}^\textrm{max}$ be the set of $\subsetint$-wise
maximal intervals of $\I_{>0}.$  Let also $\{Z_1,\dots,Z_t\}$ and
$\{I^*_1,\dots,I^*_t\}$ be the block partition and the block
intervals of $\I_{>0}^\textrm{max}.$  Observe that $r_{I^*_\ell} \le
l_{I^*_{\ell+1}},$ implying that any bucket index $i \in S^\npair$
belongs to at most two $B(I^*_\ell)$'s.  Moreover, the only case where
$i$ belongs to two $B(I^*_\ell)$'s is when $i = r_{I^*_\ell}$ and
$r_{I^*_\ell} = l_{I^*_{\ell+1}}.$

We are now ready to bound $|\Q_1|.$
  \begin{eqnarray*}
    |\Q_1| & = & \sum_{I \in \I_{>0}}|f(I)| \\
    & = & \textsf{val}(\I_{>0})\\
    & \le & \sum_{\ell \in [t]} \mu(I^*_\ell)\mbox{ by \autoref{lemma_valblock}} \\
      & \le & \sum_{\ell \in [t]} \sum_{i \in B(I^*_\ell)}|B_i| \\
      & \le & 2 \sum_{i \in S^\npair}|B_i|, \mbox{ as any $i \in S^\npair$ belongs to at most two $B(I^*_\ell)$}
  \end{eqnarray*}
 
This implies $|V(M)| \le 2|B|+2|C|.$
We are now ready to bound the size of the kernel output:
  \begin{eqnarray*}
    |V(\T')| &= & |V(M)|+|B|+|C| \\
    &\le & 3|B|+3|C| \\
    &=&3\big( |B^C|+|B^{W_1}|+|B^{W_2}|\big)+3|C|\\
    &\le&3\big(|B^C|+|B^{W_1}|+c\sum_{i \in S^\npair}(|B_i^C|+|B_i^{W_1}|)^\delta\big)+3|C|\\
    &\le&3\big(|B^C|+|B^{W_1}|+c(\sum_{i \in S^\npair}(|B_i^C|+|B_i^{W_1}|))^\delta\big)+3|C|\\
    &\le&3\big(|B^C|+|B^{W_1}|)+c(|B^C|+|B^{W_1}|)^\delta\big)+3|C|\\
    &\le&3(c+1)(|B^C|+|B^{W_1}|)^\delta+3|C|\\
    &\le&3(c+1)(11|B^C|)^\delta +3|C|, \mbox{by \autoref{buk_par_pair}}\\
     &\le&3(c+1)11^\delta(|B^C|+|C|)^\delta\\
    &=&3(c+1)11^\delta(|C^0|)^\delta\\
    &\le&3(c+1)11^\delta(3k)^\delta   
  \end{eqnarray*}

\end{proof}

\stf{The next lemma deals with the other case of the reduction rule,
  where the auxiliary edge-colored graph contains a small vertex cover
  for a set of colors. Notice that in this case, we do no necessarily
  need that the considered partial decomposition is clean.}

\begin{lemma}[Case of a small vertex cover]\label{lemma_smallvc}
Let $(W,B,C,B^{W_1},B^{W_2})$ be a partial decomposition of local size
$f,$ where $f(x)=c x^\delta$ for some constants $c>0$ and $\delta$
with $1<\delta≤2.$  Suppose that there exists a non-empty subset of
colors $X$ such that {$(G,χ)\langle W,B,C,B^{W_1},B^{W_2}\rangle[X]$}
admits a vertex cover $U$ with $|U| \le 5|X|.$  Recall that colors of
$(G,χ)\langle W,B,C,B^{W_1},B^{W_2}\rangle$ are partitioned into $C
\cup D.$  Let $X^C = X \cap C$ and $X^D = X \cap D.$
\begin{itemize}
 \item (Case 1:) If $|X^D| \le |X^C|,$ then
   $(W',B',C',B'^{W_1},B'^{W_2})=\textsf{add}_1(W,B,C,B^{W_1},B^{W_2},U
   \cup X^C)$ is still a partial decomposition of local size $f$ where
   $|W'|+|C'|< |W|+|C|.$
  \item (Case 2:) If $|X^D| > |X^C|,$ then we can find, in polynomial
    time, a non-empty $U' \subseteq U$ such that
    $(W',B',C',B'^{W_1},B'^{W_2})=\textsf{add}_2(W,B,C,B^{W_1},B^{W_2},U')$
    is a partial decomposition of local size $f,$ where $|W'|+|C'|<
    |W|+|C|.$
  \end{itemize}
\end{lemma}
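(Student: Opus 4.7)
The plan is to derive each case from Lemmas \ref{lemma_add1} and \ref{lemma_add2} respectively. The inequality $|W'|+|C'|\le|W|+|C|$ is immediate in both cases from the definitions of $\textsf{add}_1$ and $\textsf{add}_2$ (we only move vertices out of $W$ and $C$), so the substantive content is to check the hypotheses of those lemmas.

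For Case 1, the plan is to apply Lemma \ref{lemma_add1} with $U$ and $X^C$. The size hypothesis is satisfied since $|U|\le 5|X|=5(|X^C|+|X^D|)\le 10|X^C|$, using $|X^D|\le|X^C|$. For the triangle hypothesis, observe that every triangle $\Delta$ of $T[W\cup X^C]$ with $|\Delta\cap W|\ge 2$ must be of the form $\{u,v,c\}$ with $u,v\in W$, $c\in X^C$ (the case $\Delta\subseteq W$ being excluded by acyclicity of $T[W^0]$); then $\{u,v\}$ is an edge of $(G,\chi)\langle W,B,C,B^{W_1},B^{W_2}\rangle$ colored by $c\in X$, hence covered by $U$, giving $|\Delta\cap(W\setminus U)|\le 1$.

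For Case 2, the plan is to locate a single bucket interval $I^*$ with $W(I^*)\subseteq U$ and $|W(I^*)|\le 10\mu(I^*)$, and then invoke Lemma \ref{lemma_add2} with $U'=W(I^*)$. Let $J=\{I\in\I_{>0}:X^D\cap D_I\neq\emptyset\}$. For each $I\in J$ and $d\in X^D\cap D_I$, every vertex $v\in W(I)$ carries a loop of color $d$; since $U$ is a vertex cover of $(G,\chi)[X]$, this forces $W(I)\subseteq U$. Consider the $\subsetint$-maximal subset $J^{\max}$ of $J$ and its block partition $\{Z_\ell,\ell\in[t]\}$ with block intervals $\{I^*_\ell,\ell\in[t]\}$. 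Two structural facts follow directly from the definitions: consecutive intervals within a block $Z_\ell$ overlap (because they are linked by $\prec$), so $\bigcup_{I\in Z_\ell}W(I)=W(I^*_\ell)$; and consecutive block intervals satisfy $r_{I^*_\ell}\le l_{I^*_{\ell+1}}$, so the sets $W(I^*_\ell)$ are pairwise disjoint. Combining with $\bigcup_{I\in J}W(I)\subseteq U$ yields $\sum_\ell|W(I^*_\ell)|\le|U|$.

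It remains to compare $\sum_\ell|W(I^*_\ell)|$ with $\sum_\ell\mu(I^*_\ell)$. Using $|X^C|<|X^D|$ in Case 2, we get $|U|\le 5|X|<10|X^D|$. On the other hand, $|X^D|=\sum_{I\in J}|X^D\cap D_I|\le\sum_{I\in J}\textsf{val}(I)=\textsf{val}(J)$, and by Lemma \ref{lemma_valblock} (the small total demand property) applied to $J$, $\textsf{val}(J)\le\sum_\ell\mu(I^*_\ell)$. Therefore $\sum_\ell|W(I^*_\ell)|<10\sum_\ell\mu(I^*_\ell)$, and since each $\mu(I^*_\ell)\ge 1$ (any bucket interval has $|B(I^*_\ell)|\ge 2$ buckets with $|B_i|\ge|S_i|\ge 1$, so both $\Sigma(I^*_\ell)$ and $m(I^*_\ell)$ are at least $1$), a pigeonhole argument yields some $\ell^*$ with $|W(I^*_{\ell^*})|\le 10\mu(I^*_{\ell^*})$. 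Setting $U'=W(I^*_{\ell^*})$ and applying Lemma \ref{lemma_add2} with $I=I^*_{\ell^*}$ produces the required partial decomposition of local size $f$. The main obstacle is verifying that $U'$ is non-empty; this rests on the invariant $W(I)\neq\emptyset$ for every $I\in\I_{>0}$, which we expect the algorithm to maintain (for instance by excluding intervals with empty $W(I)$ from the demand) so that each $W(I^*_\ell)\neq\emptyset$ and in particular $U'\neq\emptyset$.
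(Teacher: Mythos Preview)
Your proof is correct and follows the same route as the paper's: Case~1 reduces to Lemma~\ref{lemma_add1} via the triangle/vertex-cover argument, and Case~2 reduces to Lemma~\ref{lemma_add2} via the block decomposition of $J$, the bound of Lemma~\ref{lemma_valblock}, and pigeonhole to isolate a single block interval $I^*_{\ell^*}$.

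The only loose end is the one you flag yourself: non-emptiness of $U'=W(I^*_{\ell^*})$. This is not an algorithmic invariant to be maintained but a structural fact about bucket decompositions. From the proof of Proposition~\ref{prop:bucketdecompo}, for each $i\in S^{\npair}\setminus\{\infty\}$ and each $b\in B_i$, the index $i$ is precisely the minimum $j$ such that $w_j\in W$ and $b$ dominates $w_j$; in particular $w_i\in W$. Hence for every bucket interval $I=(l_I,r_I)$ we have $l_I\neq\infty$ (since $l_I<r_I$) and therefore $w_{l_I}\in W_{[l_I,r_I[}=W(I)$. So every $W(I^*_\ell)$ is automatically non-empty, and whichever $\ell^*$ your pigeonhole selects yields $U'\neq\emptyset$. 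Your side remark that each $\mu(I^*_\ell)\ge 1$ is correct but is not needed for the pigeonhole step and does not by itself imply $W(I^*_\ell)\neq\emptyset$.
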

\begin{proof}
Proof of \textsl{Case 1}.  Observe first that $|U|\le 5(|X^D|+|X^C|)
\le 10|X^C|,$ as we are in Case 1.  Let us now prove the two
conditions required by \autoref{lemma_add1}.  As $U$ is a vertex cover
of {$(G,χ)\langle W,B,C,B^{W_1},B^{W_2}\rangle[X]$}, for any color $c
\in X$ (and thus in $X^C$) and edge $e$ of color $c,$ $U \cap V(e)
\neq \emptyset.$  Suppose, towards a contradiction, that there exists
a triangle $\Delta \in \T[W \cup X^C]$ such that $|\Delta \cap (W
\setminus U)| \ge 2.$  Since we cannot have $\Delta \subseteq W,$ this
implies that $\Delta=\{w_1,w_2,c\}$ where $w_i \in W \setminus U$ and
$c \in X^C.$  It follows from $\Delta$ being a triangle that
$e=\{w_1,w_2\}$ is an edge of $(G,χ)\langle
W,B,C,B^{W_1},B^{W_2}\rangle$ of color $c,$ which is not intersected
by $U,$ a contradiction.  Thus, we can now apply \autoref{lemma_add1}
and obtain that $\textsf{add}_1(W,B,C,B^{W_1},B^{W_2},U \cup X^C)$ is
a partial decomposition of local size $f.$  Moreover, as $|X^D| \le
|X^C|$ and $X \neq \emptyset,$ we have $X^C \neq \emptyset,$ implying
$|C'| < |C|$ and thus $|W'|+|C'| < |W|+|C|,$ as required.\medskip
  
Proof of \textsl{Case 2}.  Observe first that $|U|\le 5(|X^D|+|X^C|)
\le 10|X^D|,$ as we are in Case 2.  Let us now find an interval $I'$
such that $|U'| \le 10\mu(I'),$ where $U'= W(I'),$ in order to apply
\autoref{lemma_add2}.  Remember that there is a partition $\{D_I\mid I
\in \I_{>0}\}$ of $D.$ For any $u$ in $X^D,$ let $I(u) \in \I_{>0}$ be
the unique bucket interval such that $u \in D_I.$  Let $\I(X^D) =
\{I(u)\mid u \in X^D\}$ and $\I^\textrm{max}(X^D)$ be the set of
$\subsetint$-wise maximal intervals of $\I(X^D).$  Let
$\{Z_1,\dots,Z_t\}$ and $\{I^*_1,\dots,I^*_\ell\}$ be the block
partition and block intervals of $\I^\textrm{max}(X^D).$  Moreover, as
$U' \neq \emptyset,$ we get $|W'| < |W|,$ implying $|W'|+|C'| <
|W|+|C|.$
 
For any $\ell \in [t],$ let $U_\ell = W(I^*_\ell).$  Remember that in
$(G,χ)\langle W,B,C,B^{W_1},B^{W_2}\rangle,$ for each color $u \in D,$
we add all $v \in W(I(u))$ as edges of color $u.$  Thus, as $U$ is a
vertex cover of {$(G,χ)\langle W,B,C,B^{W_1},B^{W_2}\rangle[X],$} for
any $u \in X^D,$ $U$ must contain $W(I(u)).$  Hence, we get
   \begin{eqnarray*}
     U &= &\bigcup_{u \in X^D}W(I(u)) \\
     & = &\bigcup_{I \in \I(X^D)}W(I) \\
     & = & \bigcup_{I \in \I^\textrm{max}(X^D)}W(I)\\
     & = &\bigcup_{\ell \in [t]}W(I^*_\ell)\\
     & = &\bigcup_{\ell \in [t]}U_\ell.
   \end{eqnarray*}
Finally, observe that the $U_\ell$ are disjoint.
   
Let us now upper bound $|X^D|.$
   \begin{eqnarray*}
     |X^D| &= &\sum_{I \in \I(X^D)}|X^D \cap D_I|\\
       & \le &\sum_{I \in \I(X^D)}|D_I|\\
     & = &\bigcup_{I \in \I(X^D)}\textsf{val}(I), \mbox{ by definition of $(G,χ)\langle W,B,C,B^{W_1},B^{W_2}\rangle$} \\
         & = & \textsf{val}(\I(X^D)) \\
         & \le & \sum_{\ell \in [t]} \mu(I^*_\ell), \mbox{ by \autoref{lemma_valblock}.}
   \end{eqnarray*}

Hence $|U| \le 10|X^D|$ implies that $\sum_{\ell \in [t]}|U_\ell| \le
10\sum_{\ell \in [t]} \mu(I^*_\ell).$  This, in turn, implies that
there exists $\ell \in [t]$ such that $|U_\ell| \le 10\mu(I^*_\ell)$
and we can find this $\ell$ in polynomial time by enumerating all
values in $[t].$  Thus, we set $U'=U_\ell$ and, as $U_\ell =
W(I^*_\ell)$ and $|U_\ell| \le 10\mu(I^*_\ell),$ it follows from
\autoref{lemma_add2} that $\textsf{add}_2(W,B,C,B^{W_1},B^{W_2},U')$
is a partial decomposition of local size $f,$ as required.
 \end{proof}

\subsection{Analysis of the overall kernel}
\stf{We are now ready to define the unique rule, except from the cleaning
phases, for our kernelization algorithm.}  In the following, we assume
that $\delta$ with $1<\delta≤2$ is fixed.

\begin{definition}[Reduction Rule for \TPT]\label{def_reductionrule}~
Given a \stf{clean partial decomposition} $(W,B,C,B^{W_1},B^{W_2})$ of local size
$f,$ where $f(x)=c(\delta)x^\delta$ with $c(\delta) =
\max(\frac{20}{(2^\delta-2)},(\frac{21}{\delta})^\frac{1}{\delta-1}),$ let us define the output $R(W,B,C,B^{W_1},B^{W_2})$
of the rule $R$ as follows:
\begin{itemize}
  \item Decide, using \autoref{cor_rainbowvc} (applied for $ε=1$),
    whether there exists a rainbow matching $M$ in the
    $p$-edge-colored mutigraph $(G,χ)\langle
    W,B,C,B^{W_1},B^{W_2}\rangle$ (where $p=|C|+\sum_{i\in
      \I_{>0}}\textsf{val}(I)$ -- recall that $p=\O(|V(\T)|^3)$).
  \begin{itemize}
  	\item If the algorithm finds a rainbow matching $M,$ then return $V(M) \cup B \cup C.$
  	\item Otherwise, let $X$ be the non-empty set of colors such
          that {$(G,χ)\langle W,B,C,B^{W_1},B^{W_2}\rangle[X]$} admits
          a vertex cover $U$ such that $|U| \le (4+ε)|X|.$ Let $X^C =
          X \cap C$ and $X^D = X \cap D.$
  	\begin{itemize}
  		\item (\textsl{Case 1}) If $|X^D| \le |X^C|,$
                  let $$(W',B',C',B'^{W_1},B'^{W_2}) =
                  \textsf{add}_1(W,B,C,B^{W_1},B^{W_2},U \cup C).$$
                  Return $(W',B',C',B'^{W_1},B'^{W_2}).$
		\item (\textsl{Case 2}) Otherwise we have $|X^D| >
                  |X^C|.$ In this case compute, in polynomial time,
                  according to \autoref{lemma_smallvc}, a non-empty
                  $U' \subseteq U$ such
                  that $$(W',B',C',B'^{W_1},B'^{W_2}) =
                  \textsf{add}_2(W,B,C,B^{W_1},B^{W_2},U')$$ is a
                  partial decomposition of local size $f.$
                  Return $(W',B',C',B'^{W_1},B'^{W_2}).$
  \end{itemize}
  \end{itemize}
  \end{itemize}
\end{definition}

Then, we obtain the following.

\begin{lemma}\label{lemma:ruleTPT}
\stf{Given a clean partial decomposition $(W,B,C,B^{W_1},B^{W_2})$ of local size
$f,$} where $f(x)=c(\delta)x^\delta$ with $c(\delta) =
\max(\frac{20}{(2^\delta-2)},(\frac{21}{\delta})^\frac{1}{\delta-1}),$ $R(W,B,C,B^{W_1},B^{W_2})$ either returns:
\begin{itemize}
  \item a set $A\subseteq V(T)$ such
that, if $T'=T[A],$ then $(T,k)$ and $(T',k)$ are equivalent
instances of \TPT\ and $|V(T')| \le
c_{0}\cdot c(δ)\cdot k^\delta,$ with $c_0 = 6534.$
 \item or a partial decomposition $(W',B',C',B'^{W_1},B'^{W_2})$ of local size
$f$, where $|W'|+|C'| < |W|+|C|.$
\end{itemize}
\end{lemma}
\begin{proof}
If $R$ finds a rainbow matching $M$ in $(G,χ)\langle
W,B,C,B^{W_1},B^{W_2}\rangle$, then \autoref{lemma_saferainbowT} immediately implies that the set $A=V(M) \cup B \cup C$ verifies the claimed properties,
as in particular $|V(T')| \le  3(c(δ)+1)33^\delta k^\delta \le 6\cdot c(δ)\cdot 33^2 k^\delta=6534\cdot c(δ)\cdot  k^\delta.$
Let us now consider that $R$ does not find a rainbow matching.

If $R$ falls into Case 1 or Case 2, then by \autoref{lemma_smallvc} we know, as we set $ε=1,$ that
$(W',B',C',B'^{W_1}$ $,B'^{W_2})$ is still a partial decomposition of
local size $f$ with $|W'|+|C'| < |W|+|C|.$
\end{proof}

Finally, we can prove the kernelization algorithm for \TPT stated in~\autoref{main_treh}.

\begin{proof}[Proof of \autoref{main_treh}]
  Let $\delta$ with $1 < \delta \le 2$,
  $c(δ) =
  \max(\frac{20}{(2^\delta-2)},(\frac{21}{\delta})^\frac{1}{\delta-1})$
  and notice that $c(δ)\geq 1$. Given an input $(T,k),$ the
  kernelization algorithm \texttt{A} starts by the greedy localization
  phase\footnote{The greedy localization is a greedy packing of
    triangles, see \autoref{greedy_loc}.}.  Assume that it does not
  find a packing of size $k,$ therefore it computes a greedy localized
  pair $(C^0,W^0)$ of $T.$ Observe that
  $(W^0,\emptyset,C^*,\emptyset,\emptyset)$ is a partial decomposition
  of local size $f,$ where $f(x)=c(\delta)x^\delta.$ \stf{Using
    Lemma~\ref{lem:clean-TPT}, we then obtain the first clean partial
    decomposition $(W,B,C,B^{W_1},B^{W_2})$ of the process, still with
    local size $f$. Now, a step of algorithm~\texttt{A} consists in
    applying the Reduction Rule $R$ for \TPT followed by a cleaning
    phase.  Algorithm~\texttt{A} exhaustively performs steps,
    obtaining a clean partial decomposition with size $f$ at the end
    of each step and stopping only when it
    falls into the matching case.}
  
  \stf{Let us prove, by induction on $|W|+|C|$, that this terminates
    in polynomial time and outputs a set $A\subseteq V(T)$ such that,
    if $T'=T[A],$ then $(T,k)$ and $(T',k)$ are equivalent instances
    of \TPT\ and $|V(T')| \le c_{0}\cdot c(δ)\cdot k^\delta,$ with
    $c_0 = 6534.$ In order to obtain this result, first notice that
    when Rule $R$ applied on a clean partial decomposition
    $(W,B,C,B^{W_1},B^{W_2})$ returns a partial decomposition
    $(W',B',C',B'^{W_1},B'^{W_2})$ with the same size and where
    $|W'|+|C'|<|W|+|C|$. When applying the cleaning phase,
    Lemma~\ref{lem:clean-TPT}, on $(W',B',C',B'^{W_1},B'^{W_2})$, we
    move vertices from $C'$ to $B'$, leaving unchanged $W'$ (and the
    sets $B'^{W_1}$ and $B'^{W_2}$, as well as the size of the
    decomposition). If we denote by
    $(W'',B'',C'',B''^{W_1},B''^{W_2})$ the clean partial
    decomposition obtain after the cleaning phase, we obtain
    $|W''|+|C''|\le |W'|+|C'|<|W|+|C|$.\\
    Now, we can  analyze the  entire process.  \\ If
    $W = C = \emptyset,$ then $W = \emptyset,$ the graph 
    {$(G,χ)\langle W,B,C,B^{W_1},B^{W_2}\rangle[X]$} is the empty
    graph, and we consider that $R$ returns the rainbow matching
    $M=\emptyset.$ Thus, according to \autoref{lemma:ruleTPT}, the
    rule $R$ outputs a set $A$ as required.  Otherwise, if
    $|W|+|C| > 0.$, it is immediate by induction, using
    \autoref{lemma:ruleTPT} and the previous remarks concerning
    cleaning phases, that \texttt{A} terminates in polynomial time and
    outputs a set $A\subseteq V(T)$ such that, if $T'=T[A],$ then
    $(T,k)$ and $(T',k)$ are equivalent instances of \TPT\ and
    $|V(T')| \le c_{0}\cdot c(δ)\cdot k^\delta,$ with $c_0 = 6534.$}
\end{proof}

Let us conclude this section by the following observation
  on bit-size.  According to the $\O(k^{2-\epsilon})$ bit-size lower
  bound of~\cite{BessyBT17trian}, our kernel of~\autoref{main_treh},
  as well as any subquadratic kernel for \TPT, necessarily ouputs a
  tournament $T'$ whose minimum feedback arc set cannot have
  $m' = \O(k^{2-\epsilon})$ arcs for any $\epsilon > 0$.  Indeed,
  suppose by contradiction that we have a kernel that outputs such a
  $T'$, where $|V(T')|=\O(k^{2-\epsilon'})$ and
  $m' = \O(k^{2-\epsilon})$.  We could now use the following classical
  encoding of $T'$.  Compute first a feedback arc set $F'$ of $T'$
  with $|F'| = \O(m')$ using any constant approximation approximation
  algorithm (for example the PTAS of~\cite{kenyon2007rank}).  Compute
  the topological ordering $\sigma$ of $T'\setminus F'$ (meaning when
  removing arcs of $F'$), and observe that in the ordering $\sigma$,
  the only backward arcs of $T'$ are $F'$.  Thus, $T'$ is encoded by
  $\sigma$ and $F'$, which requires
  $|V(T')|\log(|V(T')|)+m'(2\log(|V(T')|))=\O(k^{2-\min(\epsilon,\epsilon')})$
  bits.  This draws a parallel with the Vertex Cover problem, which
  admits a kernel with $\O(k)$ vertices~\cite{linearVc74}, but the
  resulting instance cannot have $\O(k^{2-\epsilon})$
  edges~\cite{dell2014kernelization}.

\section{Almost linear kernel for \FVST}
\label{@linearKernelFVST}

\stf{In this section we focus on the triangle hitting set probem,
  restated below. Remind that a set intesecting all the triangle of a
  tournament is also called  a feedback vertex set.}

\defparproblem{{\sc Feedback Vertex Set in Tournament}
  (\FVST)}{$(T,k)$ where $T$ is a tournament and
  $k \in \N$.}{$k.$}{Does $T$ contain a feedback vertex set of size at
  most $k$?}

\stf{We obtain a almost linear kernel for this problem, which is
  obtained by the same algorithm that the one designed in the previous
  section. The only thing that we will have to check is that, when the
  algorithm stops, we obtain an equivalence instance than the input
  instance for \FVST. The way we obtain this kernelization algorithm
  is similar to what is done~\autoref{@linearKernelIPHS}, where a
  kernel for \IPHS is derived from the one for \IPP.}

\begin{theorem}
\label{theo:kernelFVST}
\stf{There exists an algorithm that, given an instance $(T,k)$ of \FVST
outputs a set $S\subseteq V(T)$ such that $(T[S],k)$ is an equivalent
instance of $(T,k)$ where, for every $\delta$ with $1<\delta \le 2,$
we have $|S|\leq 6534\cdot c(δ)\cdot k^\delta$ (where
$c(\delta)=\max(\frac{20}{(2^\delta-2)},(\frac{21}{\delta})^\frac{1}{\delta-1})$).
In other words, for any $\delta$ with $1 < \delta \le 2$ \FVST admits a
kernel with $6534\cdot c(\delta)k^\delta$ vertices.}
\end{theorem}

\stf{As in~\autoref{@CorollaryTPT}, by setting the suitable value for
$\delta_0$, we obtain the following.}

\begin{corollary}
  \label{@CorollaryFVST}
\stf{\FVST admits a kernel with $k^{1+\frac{\O(1)}{\sqrt{\log{k}}}}$  vertices.}
\end{corollary}

\stf{Here again, the key tool to obtain the linear kernel for \FVST is
  the following lemma, analog of Lemma~\ref{lemma_saferainbowT} for
  \TPT.  All the notations and definitions follow previous section.}

\begin{lemma}[Case of rainbow matching for \FVST]\label{lem:rainbowFVST}
  \stf{Let $(W,B,C,B^{W_1},B^{W_2})$ be a clean partial decomposition
  of local size $f,$ where $f(x)=c x^\delta$ for some positive
  constants $c$ and $\delta.$ Suppose also that the colored multigraph
  $(G,χ)\langle W,B,C,B^{W_1},B^{W_2}\rangle$ admits a rainbow
  matching $M.$ Let $A=V(M) \cup B \cup C.$ Then, for any integer $k,$
  $(\T,k)$ is a \yes-instance of \FVST iff $(T[A],k)$ is a \yes-instance of \FVST.
  Moreover, we have $|A| \le 3(c+1)(33)^\delta(k)^\delta.$}
\end{lemma}

\begin{proof}
  \stf{The size requirement concerning $A$ follows from
    Lemma~\ref{lemma_saferainbowT}. Let us prove that $(T,k)$ and
    $(T[A],k)$ are equivalent instances of \FVST. As $T[A]$ is a
    subtournament of $T$, it is clear that if $T$ admits a feedback
    vertex set of size at most $k$, then it is also the case for
    $T[A]$.\\ For the converse direction, assume that $T[A]$ admits a
    feedback vertex set $X$ of size at most $k$, and let us see how to
    build one for $T$.  As in the proof of
    Lemma~\ref{lemma_saferainbowT}, we will first analyze the
    structure of $M$.  For any $c \in C,$ let $f(c)$ be the edge of
    color $c$ in $M$ ($f(c)$ is well defined as
    $(W,B,C,B^{W_1},B^{W_2})$ is clean). Moreover, for any $I \in
    \I_{>0}$ (where $(\I,\textsf{val})$ is the demand of $(G,χ)\langle
    W,B,C,B^{W_1},B^{W_2}\rangle$) and color $d \in D_I,$ let $f(I,d)$
    be the edge of color $d$ in $M.$ Let $f(I)=\{f(I,d), d \in D_I\}.$
    Observe that, for any $c \in C,$ by the definition of edges of
    colors $c,$ $\{c\} \cup f(c)$ is a triangle in $\T.$ Let $\Q_2 =
    \{f(c),c \in C\}$ and $\Q_1 = \{f(I), I \in \I_{>0} \}.$ As argued
    in the proof of Lemma~\ref{lemma_saferainbowT}, $\Q_1$ is a bucket
    allocation for $(W,B,C,B^{W_1},B^{W_2})$. So, denote by $X_1$ the
    set $X\cap (B\cup V(\Q_1))$ and by $X_2$ the set $X\setminus
    X_1$. As $X$ is a feedback vertex set of $T[A]$, it is clear that
    $X_1$ is a vertex set of $T[B\cup V({\cal Q}_1)]$. Thus by
    Lemma~\ref{lemma_safebucketT_hit}, there exists $X_1'$ a feedback
    vertex set of $T[W\cup B]$ with $|X_1'|\le|X_1|$. Finally, notice
    that $X_2$ is, in particular, a feedback vertex set of $T[V({\cal
        Q}_2)]$ which contains the $|C|$ disjoint triangles $\{f(c),c
    \in C\}$ and that $|C|\le |X_2|$. To conclude, we consider the set
    $X'=C\cup X_1'$. We have $|X'|\le |X|\le k$ and as every triangle
    of $T$ either intersects $C$ or is included in $T[W\cup B]$, $X'$
    is a feedback vertex set of $T$.\\ Therefore, the two instances
    $(T,k)$ and $(T[A],k)$ are equivalent.}
\end{proof}

\stf{Now, using Lemma~\ref{lem:rainbowFVST} instead of
  Lemma~\ref{lemma_saferainbowT} in the proof of
  Lemma~\ref{lemma:ruleTPT}, we directly obtain the analog of this
  latter one for \FVST.}

\begin{lemma}\label{lemma:ruleFVST}
\stf{Given a clean partial decomposition $(W,B,C,B^{W_1},B^{W_2})$ of local size
$f,$ where $f(x)=c(\delta)x^\delta$ with $c(\delta) =
\max(\frac{20}{(2^\delta-2)},(\frac{21}{\delta})^\frac{1}{\delta-1}),$ $R(W,B,C,B^{W_1},B^{W_2})$ either returns:
\begin{itemize}
  \item a set $A\subseteq V(T)$ such that, if $T'=T[A],$ then $(T,k)$
    and $(T',k)$ are equivalent instances of \FVST\ and $|V(T')| \le
    c_{0}\cdot c(δ)\cdot k^\delta,$ with $c_0 = 6534.$
 \item or a partial decomposition $(W',B',C',B'^{W_1},B'^{W_2})$ of local size
$f$, where $|W'|+|C'| < |W|+|C|.$
\end{itemize}}
\end{lemma}

\stf{Now, proof of Theorem~\ref{theo:kernelFVST} works the same than
  the kernelization process for \TPT that is~\autoref{main_treh}. We
  start by computing a greedy localized pair $(C^0,W^0)$. The set
  $C^0$ induces a packing of triangles of $T$. If there is more than
  $k$ triangles in the packing, then $T$ has no feedback vertex set of
  size at most $k$. Otherwise, we consider the initial partial
  decomposition $(W_0,\emptyset ,C_0)$ of $V(G)$.  Then, we
  exhaustively alternate a cleaning phase with an application of the
  rule $R$, until this last one falls into the matching case. As in
  the proof of~\autoref{main_treh}, using Lemma~\ref{lemma:ruleFVST}
  an induction on $|W|+|C|$ shows that this later case appears
  after a polynomial number of steps.  Then we conclude with
  Lemma~\ref{lem:rainbowFVST}.}

\section{Conclusion}
\label{@wissenschaftlichen}

In this paper we introduced the rainbow matching technique in
  order to derive kernelization algorithms for the
  \textsc{Triangle-Packing in Tournament} and \textsc{Feedback Vertex
    Set in Tournament} problems and the {\sc Induced 2-Path-Packing}
  and {\sc Induced 2-Paths Hitting Set} problems. {For the two first problems we derive kernels of
  $k^{1+\frac{\O(1)}{\sqrt{\log{k}}}}$ vertices, while for the two last we derive kernels of $\O(k)$ vertices}.  We stress that both kernels are producing equivalent
  instance that are sub(di)graphs of the original input (di)graphs. An
  interesting project is to investigate for which (di)graph packing or
  hitting set problems our technique can be used for the derivation of
  kernels of (almost) linear number of vertices. The general
  frameworks that encompass all such problems are the {\sc $r$-Set
    Packing} and the {\sc $r$-Hitting Set} problems where, given a
  hypergraph $H$ whose all hyperedges have $r$ vertices and a
  non-negative integer $k$, the questions respectively are wether $H$
  contains $k$ pairwise disjoint hyperedges and wether $H$ contains a
  set of size at most $k$ that intersects all the hyperedges.  In
  \cite{Abu_Khzam10animp} and \cite{Abu_Khzam10akadhs}, Abu-Khzam
  proved that both problems admit kernels of $\O(k^{r-1})$
  vertices. Any improvement of any of this to a kernel of
  $\O(k^{r-1-ε})$ vertices, {where the kernel is obtained by
    only removing vertices}, would imply equal size kernels
  for all packing problems or hitting set problems where the
  structures that are packed or hitted may be modeled by the
  hyperedges of the input of the {\sc $r$-Set Packing} or {\sc
    $r$-Hitting Set} problems. It is an open challenge whether our
  technique can be applied to these general settings.

Finally, we wish to mention that the questions of whether
  \textsc{Triangle-Packing in Tournament} or \textsc{Feedback Vertex
    Set in Tournament} admit a kernel of linear number of vertices
  remain open problems.

\newpage


\begin{thebibliography}{10}

\bibitem{AbrahamsonDF95fixed}
Karl~A. Abrahamson, Rodney~G. Downey, and Michael~R. Fellows.
\newblock Fixed-parameter tractability and completeness. {I}{V}. {O}n
  completeness for \mbox{{W}[{P}]} and {P}{S}{P}{A}{C}{E} analogues.
\newblock {\em Annals of Pure and Applied Logic}, 73(3):235--276, 1995.

\bibitem{Abu_Khzam09aquad}
Faisal~N. Abu{-}Khzam.
\newblock A quadratic kernel for 3-set packing.
\newblock In Jianer Chen and S.~Barry Cooper, editors, {\em Theory and
  Applications of Models of Computation, 6th Annual Conference, {TAMC} 2009,
  Changsha, China, May 18-22, 2009. Proceedings}, volume 5532 of {\em Lecture
  Notes in Computer Science}, pages 81--87. Springer, 2009.

\bibitem{Abu_Khzam10animp}
Faisal~N. Abu{-}Khzam.
\newblock An improved kernelization algorithm for $r$-set packing.
\newblock {\em Inf. Process. Lett.}, 110(16):621--624, 2010.

\bibitem{Abu_Khzam10akadhs}
Faisal~N. Abu{-}Khzam.
\newblock A kernelization algorithm for d-hitting set.
\newblock {\em J. Comput. Syst. Sci.}, 76(7):524--531, 2010.

\bibitem{BessyBT17trian}
St{\'{e}}phane Bessy, Marin Bougeret, and Jocelyn Thiebaut.
\newblock Triangle packing in (sparse) tournaments: Approximation and
  kernelization.
\newblock In Kirk Pruhs and Christian Sohler, editors, {\em 25th Annual
  European Symposium on Algorithms, {ESA} 2017, September 4-6, 2017, Vienna,
  Austria}, volume~87 of {\em LIPIcs}, pages 14:1--14:13. Schloss Dagstuhl -
  Leibniz-Zentrum f{\"{u}}r Informatik, 2017.

\bibitem{BodlaenderDFH09onpr}
Hans~L. Bodlaender, Rodney~G. Downey, Michael~R. Fellows, and Danny Hermelin.
\newblock On problems without polynomial kernels.
\newblock {\em J. Comput. Syst. Sci.}, 75:423--434, December 2009.

\bibitem{CaiDZ02aminm}
Mao{-}cheng Cai, Xiaotie Deng, and Wenan Zang.
\newblock A min-max theorem on feedback vertex sets.
\newblock {\em Math. Oper. Res.}, 27(2):361--371, 2002.

\bibitem{CharbitTY07themin}
Pierre Charbit, St{\'{e}}phan Thomass{\'{e}}, and Anders Yeo.
\newblock The minimum feedback arc set problem is np-hard for tournaments.
\newblock {\em Comb. Probab. Comput.}, 16(1):1--4, 2007.

\bibitem{Cygan13impr}
Marek Cygan.
\newblock Improved approximation for 3-dimensional matching via bounded
  pathwidth local search.
\newblock In {\em 54th Annual {IEEE} Symposium on Foundations of Computer
  Science, {FOCS} 2013, 26-29 October, 2013, Berkeley, CA, {USA}}, pages
  509--518. {IEEE} Computer Society, 2013.

\bibitem{CyganFKLMPPS15para}
Marek Cygan, Fedor~V. Fomin, Lukasz Kowalik, Daniel Lokshtanov, D{\'{a}}niel
  Marx, Marcin Pilipczuk, Michal Pilipczuk, and Saket Saurabh.
\newblock {\em Parameterized Algorithms}.
\newblock Springer, 2015.

\bibitem{DehneFRS04greed}
Frank K. H.~A. Dehne, Michael~R. Fellows, Frances~A. Rosamond, and Peter Shaw.
\newblock Greedy localization, iterative compression, modeled crown reductions:
  New {FPT} techniques, an improved algorithm for set splitting, and a novel 2k
  kernelization for vertex cover.
\newblock In Rodney~G. Downey, Michael~R. Fellows, and Frank K. H.~A. Dehne,
  editors, {\em Parameterized and Exact Computation, First International
  Workshop, {IWPEC} 2004, Bergen, Norway, September 14-17, 2004, Proceedings},
  volume 3162 of {\em Lecture Notes in Computer Science}, pages 271--280.
  Springer, 2004.

\bibitem{DellM12kerne}
Holger Dell and D{\'{a}}niel Marx.
\newblock Kernelization of packing problems.
\newblock In Yuval Rabani, editor, {\em Proceedings of the Twenty-Third Annual
  {ACM-SIAM} Symposium on Discrete Algorithms, {SODA} 2012, Kyoto, Japan,
  January 17-19, 2012}, pages 68--81. {SIAM}, 2012.

\bibitem{dell2014kernelization}
Holger Dell and D{\'a}niel Marx.
\newblock Kernelization of packing problems.
\newblock In {\em Proceedings of the twenty-third annual ACM-SIAM symposium on
  Discrete algorithms, SODA'12}, 2012.

\bibitem{DellM14satis}
Holger Dell and Dieter van Melkebeek.
\newblock Satisfiability allows no nontrivial sparsification unless the
  polynomial-time hierarchy collapses.
\newblock {\em J. {ACM}}, 61(4):23:1--23:27, 2014.

\bibitem{DowneyF93}
Rod Downey and Michael Fellows.
\newblock Fixed-parameter tractability and completeness. {III}. {S}ome
  structural aspects of the {$W$} hierarchy.
\newblock In {\em Complexity theory}, pages 191--225. Cambridge Univ. Press,
  Cambridge, 1993.

\bibitem{DowneyF93fixe}
Rod Downey and Michael Fellows.
\newblock Fixed-parameter tractability and completeness. {III}. {S}ome
  structural aspects of the {$W$} hierarchy.
\newblock In {\em Complexity theory}, pages 191--225. Cambridge Univ. Press,
  Cambridge, 1993.

\bibitem{DowneyF95-I}
Rod~G. Downey and Michael~R. Fellows.
\newblock Fixed-parameter tractability and completeness. {I}. {B}asic results.
\newblock {\em SIAM J. Comput.}, 24(4):873--921, 1995.

\bibitem{DowneyF95-II}
Rod~G. Downey and Michael~R. Fellows.
\newblock Fixed-parameter tractability and completeness {II}: {O}n completeness
  for ${W}[1]$.
\newblock {\em Theoretical Computer Science}, 141(1-2):109--131, 1995.

\bibitem{DowneyF13fund}
Rodney~G. Downey and Michael~R. Fellows.
\newblock {\em Fundamentals of Parameterized Complexity}.
\newblock Texts in Computer Science. Springer, 2013.

\bibitem{Drucker15newli}
Andrew Drucker.
\newblock New limits to classical and quantum instance compression.
\newblock {\em {SIAM} J. Comput.}, 44(5):1443--1479, 2015.

\bibitem{FlumGrohebook}
J{\"o}rg Flum and Martin Grohe.
\newblock {\em Parameterized Complexity Theory}.
\newblock Texts in Theoretical Computer Science. An EATCS Series.
  Springer-Verlag, Berlin, 2006.

\bibitem{FominLLSTZ19subqu}
Fedor~V. Fomin, Tien{-}Nam Le, Daniel Lokshtanov, Saket Saurabh, St{\'{e}}phan
  Thomass{\'{e}}, and Meirav Zehavi.
\newblock Subquadratic kernels for implicit 3-hitting set and 3-set packing
  problems.
\newblock {\em {ACM} Trans. Algorithms}, 15(1):13:1--13:44, 2019.

\bibitem{FominLSZ19kern}
Fedor~V. Fomin, Daniel Lokshtanov, Saket Saurabh, and Meirav Zehavi.
\newblock {\em Kernelization: Theory of Parameterized Preprocessing}.
\newblock Cambridge University Press, 2019.

\bibitem{FominS14kerne}
Fedor~V. Fomin and Saket Saurabh.
\newblock Kernelization methods for fixed-parameter tractability.
\newblock In Lucas Bordeaux, Youssef Hamadi, and Pushmeet Kohli, editors, {\em
  Tractability: Practical Approaches to Hard Problems}, pages 260--282.
  Cambridge University Press, 2014.

\bibitem{FortnowS11infea}
Lance Fortnow and Rahul Santhanam.
\newblock Infeasibility of instance compression and succinct pcps for {NP}.
\newblock {\em J. Comput. Syst. Sci.}, 77(1):91--106, 2011.

\bibitem{GrafHH21algor}
Alessandra Graf, David~G. Harris, and Penny Haxell.
\newblock Algorithms for weighted independent transversals and strong
  colouring.
\newblock In D{\'{a}}niel Marx, editor, {\em Proceedings of the 2021 {ACM-SIAM}
  Symposium on Discrete Algorithms, {SODA} 2021, Virtual Conference, January 10
  - 13, 2021}, pages 662--674. {SIAM}, 2021.

\bibitem{Graf2020}
Alessandra Graf and Penny Haxell.
\newblock Finding independent transversals efficiently.
\newblock {\em Combinatorics, Probability and Computing}, 29(5):780--806, may
  2020.

\bibitem{GuoN07invit}
Jiong Guo and Rolf Niedermeier.
\newblock Invitation to data reduction and problem kernelization.
\newblock {\em {SIGACT} News}, 38(1):31--45, 2007.

\bibitem{GuruswamiRCCW98theve}
Venkatesan Guruswami, C.~Pandu Rangan, Maw{-}Shang Chang, Gerard~J. Chang, and
  C.~K. Wong.
\newblock The vertex-disjoint triangles problem.
\newblock In Juraj Hromkovic and Ondrej S{\'{y}}kora, editors, {\em
  Graph-Theoretic Concepts in Computer Science, 24th International Workshop,
  {WG} '98, Smolenice Castle, Slovak Republic, June 18-20, 1998, Proceedings},
  volume 1517 of {\em Lecture Notes in Computer Science}, pages 26--37.
  Springer, 1998.

\bibitem{Haxell95a}
Penny~E. Haxell.
\newblock A condition for matchability in hypergraphs.
\newblock {\em Graphs Comb.}, 11(3):245--248, 1995.

\bibitem{HermelinKSWW15acomp}
Danny Hermelin, Stefan Kratsch, Karolina Soltys, Magnus Wahlstr{\"{o}}m, and
  Xi~Wu.
\newblock A completeness theory for polynomial (turing) kernelization.
\newblock {\em Algorithmica}, 71(3):702--730, 2015.

\bibitem{HermelinW12weakc}
Danny Hermelin and Xi~Wu.
\newblock Weak compositions and their applications to polynomial lower bounds
  for kernelization.
\newblock In Yuval Rabani, editor, {\em Proceedings of the Twenty-Third Annual
  {ACM-SIAM} Symposium on Discrete Algorithms, {SODA} 2012, Kyoto, Japan,
  January 17-19, 2012}, pages 104--113. {SIAM}, 2012.

\bibitem{kenyon2007rank}
Claire Kenyon-Mathieu and Warren Schudy.
\newblock How to rank with few errors.
\newblock In {\em Proceedings of the thirty-ninth annual ACM symposium on
  Theory of computing}, pages 95--103. ACM, 2007.

\bibitem{Kratsch14recen}
Stefan Kratsch.
\newblock Recent developments in kernelization: {A} survey.
\newblock {\em Bull. {EATCS}}, 113, 2014.

\bibitem{LokshtanovMS12kerne}
Daniel Lokshtanov, Neeldhara Misra, and Saket Saurabh.
\newblock Kernelization - preprocessing with a guarantee.
\newblock In Hans~L. Bodlaender, Rod Downey, Fedor~V. Fomin, and D{\'{a}}niel
  Marx, editors, {\em The Multivariate Algorithmic Revolution and Beyond -
  Essays Dedicated to Michael R. Fellows on the Occasion of His 60th Birthday},
  volume 7370 of {\em Lecture Notes in Computer Science}, pages 129--161.
  Springer, 2012.

\bibitem{MnichWV16aappr}
Matthias Mnich, Virginia~Vassilevska Williams, and L{\'{a}}szl{\'{o}}~A.
  V{\'{e}}gh.
\newblock A 7/3-approximation for feedback vertex sets in tournaments.
\newblock In Piotr Sankowski and Christos~D. Zaroliagis, editors, {\em 24th
  Annual European Symposium on Algorithms, {ESA} 2016, August 22-24, 2016,
  Aarhus, Denmark}, volume~57 of {\em LIPIcs}, pages 67:1--67:14. Schloss
  Dagstuhl - Leibniz-Zentrum f{\"{u}}r Informatik, 2016.

\bibitem{linearVc74}
George~L. Nemhauser and Leslie E.~Trotter Jr.
\newblock Properties of vertex packing and independence system polyhedra.
\newblock {\em Mathematical Programming}, 6(1):48--61, 1974.

\bibitem{Niedermeier-book}
Rolf Niedermeier.
\newblock {\em Invitation to fixed-parameter algorithms}, volume~31 of {\em
  Oxford Lecture Series in Mathematics and its Applications}.
\newblock Oxford University Press, Oxford, 2006.

\bibitem{SloperT08anove}
Christian Sloper and Jan~Arne Telle.
\newblock An overview of techniques for designing parameterized algorithms.
\newblock {\em Comput. J.}, 51(1):122--136, 2008.

\end{thebibliography}

\end{document}